\theoremstyle{definition}
\newtheorem{theorem}{Theorem}[section]
\newtheorem{corollary}[theorem]{Corollary}
\newtheorem{definition}[theorem]{Definition}
\newtheorem{lemma}[theorem]{Lemma}
\newtheorem{proposition}[theorem]{Proposition}
\renewcommand{\bar}{\overline}
\newcommand{\Total}{\mathsf{Total}}
\newcommand{\Par}{\mathsf{Par}}
\newcommand{\ParIso}{\mathsf{ParIso}}
\newcommand{\Set}{\mathsf{Set}}
\newcommand{\TOF}{\mathsf{TOF}}
\newcommand{\CNOT}{\mathsf{CNOT}}
\newcommand{\cnot}{\mathsf{cnot}}
\newcommand{\tof}{\mathsf{tof}}
\newcommand{\notgate}{\mathsf{not}}
\newcommand{\op}{\mathsf{op}}
\newcommand{\cnv}{\circ}
\newcommand{\ox}{\otimes}
\newcommand{\X}{\mathbb{X}}
\newcommand{\Y}{\mathbb{Y}}
\newcommand{\N}{\mathbb{N}}
\newcommand{\C}{\mathbb{C}}
\newcommand{\Z}{\mathbb{Z}}
\newcommand{\Pinj}{\mathsf{Pinj}}
\newcommand{\FPinj}{\mathsf{FPinj}}
\newcommand{\FHilb}{\mathsf{FHilb}}
\newcommand{\Mat}{\mathsf{Mat}}
\newcommand{\Not}{\mathsf{not}}
\newcommand{\oa}{\oplus}
\newcommand{\<}{\langle}
\newcommand{\la}{\langle}
\newcommand{\ra}{\rangle}
\renewcommand{\>}{\rangle}
\newcommand{\onein}{|1\>}
\newcommand{\oneout}{\<1|}
\newcommand{\zeroin}{|0\>}
\newcommand{\zeroout}{\<0|}
\newcommand\scalemath[2]{\scalebox{#1}{\mbox{\ensuremath{\displaystyle #2}}}}
\newcommand{\myeq}[1]{\stackrel{{\normalfont\scalebox{.75}{#1}}}{=}}
\tikzstyle{strings}=[baseline={([yshift=-.5ex]current bounding box.center)}]
\tikzset{every picture/.append style={scale=.5}, transform shape, strings}
\tikzset{%
symbol/.style={%
draw=none,
every to/.append style={%
edge node={node [sloped, allow upside down, auto=false]{$#1$}}}
}
}
\tikzset{simple/.style={}}
\tikzset{nothing/.style={outer sep=-3.4pt}}
\tikzset{map/.style={draw,fill=white, rectangle}}
\tikzset{dot/.style={thick, fill=black, circle, scale=1, inner sep = .05cm}}
\tikzset{oplus/.style={draw, scale=0.9,minimum height=.1cm,circle,append after command={
[shorten >=\pgflinewidth, shorten <=\pgflinewidth,]
(\tikzlastnode.north) edge (\tikzlastnode.south)
(\tikzlastnode.east) edge (\tikzlastnode.west)
}
}
}
\tikzset{fanin/.style={
draw,
shape border rotate=30,
regular polygon,
regular polygon sides=3,
fill=white,
inner sep = .1cm
}
}
\tikzset{fanout/.style={
draw,
shape border rotate=-30,
regular polygon,
regular polygon sides=3,
fill=white,
inner sep = .1cm
}
}
\tikzset{onein/.style={
draw,
shape border rotate=30,
regular polygon,
regular polygon sides=3,
fill=black,
inner sep = .04cm,
scale=1.2
}
}
\tikzset{oneout/.style={
draw,
shape border rotate=-30,
regular polygon,
regular polygon sides=3,
fill=black,
inner sep = .04cm,
scale=1.2
}
}
\tikzset{zeroin/.style={
draw,
shape border rotate=30,
regular polygon,
regular polygon sides=3,
fill=white,
inner sep = .04cm,
scale=1.2
}
}
\tikzset{zeroout/.style={
draw,
shape border rotate=-30,
regular polygon,
regular polygon sides=3,
fill=white,
inner sep = .04cm,
scale=1.2
}
}
\tikzset{wires/.style={}}
\tikzset{box/.style={inner sep=0pt, thick, draw=black, text height=1.5ex, text depth=.25ex, text centered, minimum height=3em, anchor=center}}
\newcommand{\oneraggedpage}{\let\mytextbottom\@textbottom
	\let\mytexttop\@texttop
	\raggedbottom
	\afterpage{%
		\global\let\@textbottom\mytextbottom
		\global\let\@texttop\mytexttop}}
\title{The Category $\TOF$}
\author{
	J.R.B. Cockett  \qquad\qquad Cole Comfort
	\email{\{robin,crcomfor\}@ucalgary.ca}
	\institute{Department of Computer Science\\University of Calgary\\Alberta, Canada}
}
\begin{document}
\maketitle

\begin{abstract}
We provide a complete set of identities for the symmetric monoidal category, $\TOF$, generated by the Toffoli gate and computational ancillary bits.  We do so by demonstrating that the functor which evaluates circuits on total points, is an equivalence into the full subcategory of sets and partial isomorphisms with objects finite powers of the two element set.  The structure  of the proof builds -- and follows the proof of Cockett et al. -- which provided a full set of identities for the $\cnot$ gate with computational ancillary bits. Thus, first it is shown that $\TOF$ is  a discrete inverse category in which all of the identities for the $\cnot$ gate hold; and then a normal form for the restriction idempotents is constructed which corresponds precisely to subobjects of the total points of $\TOF$. This is then used to show that $\TOF$ is equivalent to $\FPinj_2$, the full subcategory of sets and partial isomorphisms in which objects have cardinality $2^n$ for some $n \in \mathbb{N}$.
\end{abstract}

\section{Introduction}

The Toffoli gate is a cornerstone for reversible computing: alongside the Fredkin gate, it was the first gate which was proven to be universal for classical reversible computing \cite{fredkin2002conservative}.  That is, if the values of certain wires are fixed and unchanged by computation (called ``auxiliary bits'') and the value of others ignored after computation (called ``garbage bits''), then every reversible boolean function can be simulated using either Fredkin or Toffoli gates.  The universality of the Fredkin gate relies fundamentally on the use of garbage bits.  The Toffoli gate, however, is universal for classical reversible circuits even when \emph{only} auxiliary bits are allowed.  This is notable, as even the Fredkin and $\cnot$ gates combined are not universal in this sense \cite[Thm. 3]{aaronson}.  Moreover, the Toffoli gate is also used frequently in quantum computation (especially in  quantum error correction \cite{Shor, Gottesman}); and has even been physically realized \cite{realize,realize2}.

Notably, \cite{Iwama} provided an infinite, complete set of identities for functions of the form $|x_1,\cdots, x_n, y\>\mapsto |x_1,\cdots, x_n, y+f(x_1,\cdots, x_n)\>$ generated by Toffoli gates with finitely  many control wires, along with finitely many qubits in the state $|0\>$.

Auxiliary bits are a rather peculiar notion and are unnatural in the context of symmetric monoidal categories of circuits
\footnote{An auxiliary bit can be simulated by fixing the input and output of a wire using an input and output ancillary bit.  However, usually, there is the added assumption that fixing bits in this manner will not cause the resulting function to degenerate or become partially defined.  Thus, auxiliary bits, as commonly formulated, do not provide a compositional notion.}
; instead, we shall take a more general and richer approach by simply allowing the inputs and outputs of wires to be fixed by components which we call ``ancillary bits''.  Ancillary bits, in this sense, are modelled by qubit initialization and termination in \cite{quipper}, where the state of a circuit can degenerate to an undefined state -- in finite dimensional Hilbert spaces, $\FHilb$, this corresponds to when maps compose to a zero-matrix.


In \cite{CNOT}, Cockett et al. provided a complete set of identities for the fragment of quantum computing generated by the controlled-not gate with ancillary bits.  The paper was inspired by Lafont's work  \cite{Lafont}, but used the notion of a discrete inverse category developed in \cite{Giles}.  Inverse categories have now been used  extensively to model the semantics of reversible computation \cite{inverse1,inverse2}.  The current paper builds on \cite{CNOT} and provides a finite, complete set of identities for the fragment of quantum computing generated by the Toffoli gate with ancillary bits \footnote{Although the larger fragment of quantum computing generated by the Toffoli gate, Hadamard gate and computational ancillary bits has recently been classified \cite{ZX}.}.  We call the symmetric monoidal category generated by the Toffoli gate and ancillary bits with these identities, $\TOF$.

In addition to providing a complete set of identities for these circuits, we also prove a concrete equivalence into the subcategory of sets and partial isomorphisms where the objects are finite powers of the 2 element set \footnote{This result implies, in particular, that $\TOF$ embeds  faithfully into the categories of matrices \cite{Barr}, $\Mat(\C)$, as $\FPinj_2$ embeds into the category of matrices by taking partial isomorphisms to their adjacency matrices.}.  The key step of this proof is to prove that the functor $\tilde H_0$, which takes an object to its (total) points, is faithful: this, in turn, relies on providing a normal form for the restriction idempotents of $\TOF$. 

\medskip
\noindent
{\bf Overview of the proof}

We first present the symmetric monoidal category called $\TOF$ which is generated by the Toffoli gate and ancillary bits along with 17 identities.  The paper culminates with a proof that $\TOF$ is isomorphic to the category, $\FPinj_2$, of partial isomorphisms between finite powers of the two element set.   The proof follows the form of the proof in \cite{CNOT} for the category $\CNOT$, build from the computational ancillary bits and the $\cnot$ gate.   We start by observing, Lemma \ref{lemma:functorial}, that all of the identities of the category $\CNOT$ hold in $\TOF$.  The first crucial step is to prove that some needed identities of Iwama \cite{Iwama} hold in this setting.  The next crucial next step is to prove that $\TOF$ is a discrete inverse category.   Discrete inverse  categories have inverse products and, for $\TOF$, these are essentially inherited from $\CNOT$. Furthermore, in $\TOF$, just as for $\CNOT$, partial inverses are given by horizontally flipping  circuits:  this gives a dagger functor $(\_)^\cnv:\TOF^\op\to \TOF$.   In Section \ref{section:thefunctor}, we construct a discrete inverse functor $\tilde H_0:\TOF\to\FPinj_2$ which ``evaluates''  maps on the points of $\TOF$.   In Section \ref{section:NormalForm}, we construct a normal form for the restriction idempotents of $\TOF$ using ``polyforms'': this is the crux of the paper.  In Section \ref{section:Equivalence}  we prove that the functor $\tilde H_0$ is full faithful, borrowing results from \cite{CNOT} --  wherein it is shown that $\tilde H_0$ being full and faithful  on restriction idempotents implies $\tilde H_0$ is full and faithful in general.


\section{Restriction and inverse categories}
\label{section:TOFINVERSE}


In this section, we recall the basic theory and terminology of restriction and inverse categories which will be used later.

\begin{definition}\cite[Def. 2.1.1]{Cockett}
A \textbf{restriction structure} on a category $\mathbb{X}$ is an assignment $\bar{f}: A\to A$ for each map $f:A\to B$ in $\mathbb{X}$ satisfying the following four axioms:

\begin{multicols}{4}
\begin{enumerate}[label={\bf [R.\arabic*]}, ref={\bf [R.\arabic*]}, wide = 0pt, leftmargin = 2em]
\item	\label{R.1}
		$\bar{f}f =f$

\item	\label{R.2}
		$\bar{f}\bar{g}=\bar{g}\bar{f}$

\item	\label{R.3}
		$\bar{\bar{g}f}=\bar{f}\bar{g}$

\item	\label{R.4}
		$f \bar{g}=\bar{fg}f$
\end{enumerate}
\end{multicols}

\end{definition}

A \textbf{restriction category} is a category equipped with a restriction structure. A \textbf{restriction functor} is a functor which preserves the given restriction structure.  An endomorphism $e: A \rightarrow A$ is called a \textbf{restriction idempotent} if $e=\bar{e}$.  In particular, each $\bar{f}$ has $\bar{\bar{f}} = \bar{f}$ and is an idempotent as $\bar{f}\bar{f} = \bar{\bar{f}f} = \bar{f}$.

In a restriction category, a \textbf{total map} is a map $f$ such that $\bar{f}=1$.  The total maps of a restriction category $\mathbb{X}$ form a subcategory $\Total(\mathbb{X})$ of $\mathbb{X}$. 


A basic example of a restriction category is the category of sets and  partial functions, $\Par$: the restriction of a partial function is the partial identity on the domain of definition.

\begin{definition}\cite[Sec. 2.3]{Cockett}.
A map $f$ is a \textbf{partial isomorphism} when there exists another map $g$, called the \textbf{partial inverse} of $f$, such that $\bar f = fg$ and $\bar g = gf$. 
A restriction category $\mathbb{X}$ is an \textbf{inverse category} when all its maps are partial isomorphisms. 
\end{definition}

Partial isomorphisms generalize the notion of isomorphisms to restriction categories; thus, the composition of partial isomorphisms is a partial isomorphism and partial inverses  are unique.  Furthermore, every restriction category $\mathbb{X}$ has a subcategory of partial isomorphisms $\ParIso(\mathbb{X})$ which is an inverse category.  Denote the category $\ParIso(\Set)$ by $\Pinj$.

There is an important alternate characterization of an inverse category:

\begin{theorem}\cite[Thm. 2.20]{Cockett}  \label{defn:inverseCategory}
	A category $\mathbb{X}$ is an inverse category if and only if there exists an functor ${(\_)^\cnv:\mathbb{X}^\op \to\mathbb{X}}$ which is the identity on objects, satisfying the following three axioms:
	
\begin{multicols}{3}
\begin{enumerate}[label={\bf [INV.\arabic*]}, ref={\bf [INV.\arabic*]}, wide = 0pt, leftmargin = 2em]
\item	\label{INV.1}
		$(f^\cnv)^\cnv = f$

\item	\label{INV.2}
		$ff^\cnv f = f$

\item	\label{INV.3}
		$ff^\cnv gg^\cnv = gg^\cnv ff^\cnv$

\end{enumerate}
\end{multicols}
\end{theorem}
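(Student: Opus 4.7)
The plan is to prove the biconditional in two directions, the forward direction being comparatively straightforward and the converse requiring the construction of a full restriction structure out of the involution-like functor.

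For the forward direction, assuming $\X$ is an inverse category, I would define $f^\cnv$ to be the unique partial inverse of $f$; uniqueness of partial inverses, a standard consequence of the restriction axioms, ensures well-definedness. Functoriality then splits into two checks: $1_A$ is self-inverse, and $(gf)^\cnv = f^\cnv g^\cnv$, which follows by showing that $f^\cnv g^\cnv$ satisfies the defining equations for being a partial inverse of $gf$ and invoking uniqueness. The three axioms are then immediate consequences of the definitions: INV.1 is the symmetry of the partial-inverse relation, INV.2 is $ff^\cnv f = \bar f f = f$ by R.1, and INV.3 reduces to $\bar f \bar g = \bar g \bar f$, which is R.2.

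For the converse, assuming a functor $(\_)^\cnv$ satisfying INV.1--3, I would define $\bar f := ff^\cnv$ and verify the four restriction axioms. Axiom R.1 is a restatement of INV.2 and R.2 is a restatement of INV.3. Before attacking R.3 and R.4, I would record two auxiliary facts: $\bar f$ is idempotent (by INV.2) and $\bar f$ is self-conjugate under $(\_)^\cnv$, since $(ff^\cnv)^\cnv = (f^\cnv)^\cnv f^\cnv = ff^\cnv$ by functoriality and INV.1. With these in hand, R.3 unfolds as $\overline{\bar g f} = \bar g f f^\cnv \bar g = \bar g \bar f \bar g$, which collapses to $\bar f \bar g$ using INV.3 and idempotency.

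The main obstacle will be R.4, which, after unfolding $\overline{fg}\, f = fgg^\cnv f^\cnv f$, reduces to showing $fgg^\cnv = fgg^\cnv f^\cnv f$. My approach is to observe that $f^\cnv f = (f^\cnv)(f^\cnv)^\cnv$ is itself a restriction-type idempotent attached to $f^\cnv$, so it commutes with $gg^\cnv$ by applying INV.3 to the pair $(g, f^\cnv)$; sliding $f^\cnv f$ past $gg^\cnv$ and then applying INV.2 to the resulting $ff^\cnv f$ yields the desired equality. To close the equivalence, I would check that every map is a partial isomorphism with $f^\cnv$ as its partial inverse: $ff^\cnv = \bar f$ holds by construction, while $f^\cnv f = \overline{f^\cnv}$ since $(f^\cnv)^\cnv = f$ by INV.1.
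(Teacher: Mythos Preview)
Your proof is correct and follows the standard route for this characterization. Note, however, that the paper does not actually prove this theorem: it is stated with a citation to \cite[Thm.~2.20]{Cockett} and used as an imported result, so there is no in-paper argument to compare against. Your proposal stands on its own as a self-contained verification of the cited fact.
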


Inverse categories have restriction structure given by $\bar c := cc^\cnv$.  It is not hard to show that every idempotent in an inverse category is necessarily a restriction idempotent.

Inverse categories can have a product-like structure:

\begin{definition}\cite[Def. 4.3.1]{Giles} 

Take a symmetric monoidal inverse category $\mathbb{X}$ with tensor ${\_\otimes\_:\mathbb{X}\times\mathbb{X}\to \mathbb{X}}$,  symmetry $c$ and associator $a$.  Suppose moreover that the tensor preserves $(\_)^\cnv$.  We say $\mathbb{X}$ has {\bf inverse products} if there exists a total natural diagonal transformation $\Delta$ which satisfies the following properties:

\begin{figure}[h]
\begin{multicols}{2}
\begin{enumerate}[label= {\bf [DINV.\arabic*]}, ref={\bf [DINV.\arabic*]},wide = 0pt, leftmargin = 2em]
\item	\label{DINV.1}
		$\Delta$ is cocommutative:
		\[  \xymatrix{ A \ar[dr]_{\Delta_A} \ar[r]^{\Delta_A} &A\otimes A\ar[d]^{c_{A,A}}\\ & A\otimes A} \]
\item	\label{DINV.2}
		$\Delta$ is coassociative:
		\[
		\xymatrix{
			A \ar[rr]^{\Delta_A} \ar[d]_{\Delta_A} && A\otimes A \ar[d]^{1_A\otimes \Delta_A}\\
			A\otimes A \ar[dr]_{\Delta_A\otimes 1_A} & & A\otimes(A\otimes A)\\
			& (A\otimes A)\otimes A \ar[ur]_{a_{A,A,A}} &
		}
		\]
\item	\label{DINV.3}
		$(\Delta,\Delta^\cnv)$ is a semi-Frobenius object:
		\[
		\xymatrix{
			A\otimes A \ar[rr]^{(\Delta_A\otimes 1_A)a_{A,A,A}} \ar[dr]^{\Delta_A^\cnv} \ar[dd]_{(1_A\otimes \Delta_A)a_{A,A,A}^\cnv} && A\otimes(A\otimes A) \ar[dd]^{1_A\otimes \Delta_A^\cnv}\\
			& A  \ar[dr]^{\Delta_A} &\\
			(A\otimes A)\otimes A \ar[rr]^{\Delta_A^\cnv\otimes1_A} && A\otimes A
		}
		\]
\item	\label{DINV.4}
		$\Delta$ is uniform-copying:
		\[
		\xymatrixrowsep{1.2cm}
		\xymatrix{
			A\otimes B \ar[r]^-{\Delta_A\otimes \Delta_B} \ar[dr]_{\Delta_{A\otimes B}}& (A\otimes A)\otimes (B\otimes B) \ar[d]^{
			\scalemath{.75}{%
			\begin{array}{l}
			a_{A,A,B\otimes B} (1_A\otimes a_{A,B,B}^\cnv)\\
			\ (1_A\otimes (c_{A,B} \otimes 1_B))\\
			\ ((1_A\otimes a_{B,A,B})a_{A,B,A\otimes B}^\cnv)
			\end{array}
			}}\\
			&(A\otimes B)\otimes (A\otimes B)
		}
		\]
\end{enumerate}
\end{multicols}
\end{figure}

\end{definition}


A {\bf discrete inverse category} is a category with inverse products.  Note that $\Delta$ is required to be total, so that $\Delta \Delta^\cnv = 1$: making
the semi-Frobenius structure separable (or special). 

\section{The category \texorpdfstring{$\CNOT$}{CNOT}}

In \cite{CNOT} the category $\CNOT$ was presented as a symmetric monoidal category with objects natural numbers generated by  the 1 ancillary bits $\onein\equiv
\begin{tikzpicture}
			\begin{pgfonlayer}{nodelayer}
			\node [style=onein] (0) at (0, -0) {};
			\node [style=nothing] (1) at (1, -0) {};
			\end{pgfonlayer}
			\begin{pgfonlayer}{edgelayer}
			\draw (0) to (1);
			\end{pgfonlayer}
			\end{tikzpicture}$,
and $\oneout\equiv
\begin{tikzpicture}
			\begin{pgfonlayer}{nodelayer}
			\node [style=nothing] (0) at (0, -0) {};
			\node [style=oneout] (1) at (1, -0) {};
			\end{pgfonlayer}
			\begin{pgfonlayer}{edgelayer}
			\draw (0) to (1);
			\end{pgfonlayer}
			\end{tikzpicture}$ \footnote{Note that the bra-ket notation clashes with diagrammatic composition because the inner product look like the outer product, and vice versa.  We therefore make the use of classical composition explicit when we use bra-ket notation with the symbol $\_\circ\_$, instead of concatenation, which we reserve for diagrammatic composition.},
and the controlled not gate $\cnot\equiv
\begin{tikzpicture}
	\begin{pgfonlayer}{nodelayer}
		\node [style=nothing] (0) at (0, -0) {};
		\node [style=nothing] (1) at (0, 0.5000001) {};
		\node [style=nothing] (2) at (1, 0.5000001) {};
		\node [style=nothing] (3) at (1, -0) {};
		\node [style=oplus] (4) at (0.5000001, -0) {};
		\node [style=dot] (5) at (0.5000001, 0.5000001) {};
	\end{pgfonlayer}
	\begin{pgfonlayer}{edgelayer}
		\draw (1) to (5);
		\draw (5) to (2);
		\draw (3) to (4);
		\draw (4) to (0);
		\draw (4) to (5);
	\end{pgfonlayer}
	\end{tikzpicture}$, where these gates satisfy the identities in Figure \ref{fig:CNOT}:

\begin{figure}[h]
	\noindent
	\scalebox{2.0}{%
		\vbox{%
			\begin{mdframed}[backgroundcolor=black!2,roundcorner=10.0pt, innerleftmargin=+.75cm]
				\begin{multicols}{2}
					\begin{enumerate}[label={\bf [CNOT.\arabic*]}, ref={\bf [CNOT.\arabic*]}, wide = 0pt, leftmargin = 2em]
						\item
						\label{CNOT.1}
						{\hfil
							$
							\begin{tikzpicture}
							\begin{pgfonlayer}{nodelayer}
							\node [style=nothing] (0) at (0, 0) {};
							\node [style=nothing] (1) at (0, .5) {};
							\node [style=oplus] (2) at (.5, 0) {};
							\node [style=dot] (3) at (.5, .5) {};
							\node [style=dot] (4) at (1, 0) {};
							\node [style=oplus] (5) at (1, .5) {};
							\node [style=oplus] (6) at (1.5, 0) {};
							\node [style=dot] (7) at (1.5, .5) {};
							\node [style=nothing] (8) at (2, 0) {};
							\node [style=nothing] (9) at (2, .5) {};
							\end{pgfonlayer}
							\begin{pgfonlayer}{edgelayer}
							\draw [style=simple] (0) to (8);
							\draw [style=simple] (1) to (9);
							\draw [style=simple] (2) to (3);
							\draw [style=simple] (4) to (5);
							\draw [style=simple] (6) to (7);
							\end{pgfonlayer}
							\end{tikzpicture}
							=
							\begin{tikzpicture}
	\begin{pgfonlayer}{nodelayer}
		\node [style=nothing] (0) at (0, -0) {};
		\node [style=nothing] (1) at (0, 0.5000002) {};
		\node [style=nothing] (2) at (1, 0.5000002) {};
		\node [style=nothing] (3) at (1, -0) {};
	\end{pgfonlayer}
	\begin{pgfonlayer}{edgelayer}
		\draw [in=180, out=0, looseness=1.25] (1) to (3);
		\draw [in=180, out=0, looseness=1.25] (0) to (2);
	\end{pgfonlayer}
\end{tikzpicture}$}

						\item
						\label{CNOT.2}
						\hfil{
							$
							\begin{tikzpicture}
							\begin{pgfonlayer}{nodelayer}
							\node [style=nothing] (0) at (0, 0) {};
							\node [style=nothing] (1) at (0, .5) {};
							\node [style=oplus] (2) at (.5, 0) {};
							\node [style=dot] (3) at (.5, .5) {};
							\node [style=oplus] (6) at (1, 0) {};
							\node [style=dot] (7) at (1, .5) {};
							\node [style=nothing] (8) at (1.5, 0) {};
							\node [style=nothing] (9) at (1.5, .5) {};
							\end{pgfonlayer}
							\begin{pgfonlayer}{edgelayer}
							\draw [style=simple] (0) to (8);
							\draw [style=simple] (1) to (9);
							\draw [style=simple] (2) to (3);
							\draw [style=simple] (6) to (7);
							\end{pgfonlayer}
							\end{tikzpicture}
							=
							\begin{tikzpicture}
							\begin{pgfonlayer}{nodelayer}
							\node [style=nothing] (0) at (0, 0) {};
							\node [style=nothing] (1) at (0, .5) {};
							\node [style=nothing] (3) at (1.5, 0) {};
							\node [style=nothing] (4) at (1.5, .5) {};
							\end{pgfonlayer}
							\begin{pgfonlayer}{edgelayer}
							\draw [style=simple] (0) to (3);
							\draw [style=simple] (1) to (4);
							\end{pgfonlayer}
							\end{tikzpicture}
							$}
						
						\item
						\label{CNOT.3}
						\hfil{
							$
							\begin{tikzpicture}
							\begin{pgfonlayer}{nodelayer}
							\node [style=nothing] (0) at (0, 1) {};
							\node [style=nothing] (1) at (0, .5) {};
							\node [style=nothing] (2) at (0, 0) {};
							\node [style=oplus] (3) at (.75, 1) {};
							\node [style=dot] (4) at (.75, .5) {};
							\node [style=dot] (5) at (1.25, .5) {};
							\node [style=oplus] (6) at (1.25, 0) {};
							\node [style=nothing] (7) at (2, 1) {};
							\node [style=nothing] (8) at (2, .5) {};
							\node [style=nothing] (9) at (2, 0) {};
							\end{pgfonlayer}
							\begin{pgfonlayer}{edgelayer}
							\draw [style=simple] (0) to (7);
							\draw [style=simple] (1) to (8);
							\draw [style=simple] (2) to (9);
							\draw [style=simple] (3) to (4);
							\draw [style=simple] (5) to (6);
							\end{pgfonlayer}
							\end{tikzpicture}
							=
							\begin{tikzpicture}
							\begin{pgfonlayer}{nodelayer}
							\node [style=nothing] (0) at (0, 1) {};
							\node [style=nothing] (1) at (0, .5) {};
							\node [style=nothing] (2) at (0, 0) {};
							\node [style=oplus] (3) at (1.25, 1) {};
							\node [style=dot] (4) at (1.25, .5) {};
							\node [style=dot] (5) at (.75, .5) {};
							\node [style=oplus] (6) at (.75, 0) {};
							\node [style=nothing] (7) at (2, 1) {};
							\node [style=nothing] (8) at (2, .5) {};
							\node [style=nothing] (9) at (2, 0) {};
							\end{pgfonlayer}
							\begin{pgfonlayer}{edgelayer}
							\draw [style=simple] (0) to (7);
							\draw [style=simple] (1) to (8);
							\draw [style=simple] (2) to (9);
							\draw [style=simple] (3) to (4);
							\draw [style=simple] (5) to (6);
							\end{pgfonlayer}
							\end{tikzpicture}
							$}
						
						\item 
						\label{CNOT.4}
						\hfil{
							$
							\begin{tikzpicture}
							\begin{pgfonlayer}{nodelayer}
							\node [style=onein] (0) at (0, .5) {};
							\node [style=nothing] (1) at (0, 0) {};
							\node [style=dot] (2) at (.5, .5) {};
							\node [style=oplus] (3) at (.5, 0) {};
							\node [style=nothing] (4) at (1, .5) {};
							\node [style=nothing] (5) at (1, 0) {};
							\end{pgfonlayer}
							\begin{pgfonlayer}{edgelayer}
							\draw [style=simple] (0) to (4);
							\draw [style=simple] (1) to (5);
							\draw [style=simple] (2) to (3);
							\end{pgfonlayer}
							\end{tikzpicture}
							=
							\begin{tikzpicture}
							\begin{pgfonlayer}{nodelayer}
							\node [style=onein] (0) at (0, .5) {};
							\node [style=nothing] (1) at (0, 0) {};
							\node [style=dot] (2) at (.5, .5) {};
							\node [style=oplus] (3) at (.5, 0) {};
							\node [style=oneout] (4) at (1, .5) {};
							\node [style=nothing] (5) at (2, 0) {};
							\node [style=onein] (6) at (1.5, .5) {};
							\node [style=nothing] (7) at (2, 0.5) {};
							\end{pgfonlayer}
							\begin{pgfonlayer}{edgelayer}
							\draw [style=simple] (0) to (4);
							\draw [style=simple] (1) to (5);
							\draw [style=simple] (2) to (3);
							\draw [style=simple] (6) to (7);
							\end{pgfonlayer}
							\end{tikzpicture}
							\text{, }
							\begin{tikzpicture}
							\begin{pgfonlayer}{nodelayer}
							\node [style=nothing] (0) at (0, .5) {};
							\node [style=nothing] (1) at (0, 0) {};
							\node [style=dot] (2) at (.5, .5) {};
							\node [style=oplus] (3) at (.5, 0) {};
							\node [style=oneout] (4) at (1, .5) {};
							\node [style=nothing] (5) at (1, 0) {};
							\end{pgfonlayer}
							\begin{pgfonlayer}{edgelayer}
							\draw [style=simple] (0) to (4);
							\draw [style=simple] (1) to (5);
							\draw [style=simple] (2) to (3);
							\end{pgfonlayer}
							\end{tikzpicture}
							=
							\begin{tikzpicture}
							\begin{pgfonlayer}{nodelayer}
							\node [style=oneout] (0) at (2, .5) {};
							\node [style=nothing] (1) at (2, 0) {};
							\node [style=dot] (2) at (1.5, .5) {};
							\node [style=oplus] (3) at (1.5, 0) {};
							\node [style=onein] (4) at (1, .5) {};
							\node [style=nothing] (5) at (0, 0) {};
							\node [style=oneout] (6) at (.5, .5) {};
							\node [style=nothing] (7) at (0, 0.5) {};
							\end{pgfonlayer}
							\begin{pgfonlayer}{edgelayer}
							\draw [style=simple] (0) to (4);
							\draw [style=simple] (1) to (5);
							\draw [style=simple] (2) to (3);
							\draw [style=simple] (6) to (7);
							\end{pgfonlayer}
							\end{tikzpicture}
							$}
						
						\item 
						\label{CNOT.5}
						\hfil{
							$
							\begin{tikzpicture}
							\begin{pgfonlayer}{nodelayer}
							\node [style=nothing] (0) at (0, 1) {};
							\node [style=nothing] (1) at (0, .5) {};
							\node [style=nothing] (2) at (0, 0) {};
							\node [style=dot] (3) at (.75, 1) {};
							\node [style=oplus] (4) at (.75, .5) {};
							\node [style=oplus] (5) at (1.25, .5) {};
							\node [style=dot] (6) at (1.25, 0) {};
							\node [style=nothing] (7) at (2, 1) {};
							\node [style=nothing] (8) at (2, .5) {};
							\node [style=nothing] (9) at (2, 0) {};
							\end{pgfonlayer}
							\begin{pgfonlayer}{edgelayer}
							\draw [style=simple] (0) to (7);
							\draw [style=simple] (1) to (8);
							\draw [style=simple] (2) to (9);
							\draw [style=simple] (3) to (4);
							\draw [style=simple] (5) to (6);
							\end{pgfonlayer}
							\end{tikzpicture}
							=
							\begin{tikzpicture}
							\begin{pgfonlayer}{nodelayer}
							\node [style=nothing] (0) at (0, 1) {};
							\node [style=nothing] (1) at (0, .5) {};
							\node [style=nothing] (2) at (0, 0) {};
							\node [style=dot] (3) at (1.25, 1) {};
							\node [style=oplus] (4) at (1.25, .5) {};
							\node [style=oplus] (5) at (.75, .5) {};
							\node [style=dot] (6) at (.75, 0) {};
							\node [style=nothing] (7) at (2, 1) {};
							\node [style=nothing] (8) at (2, .5) {};
							\node [style=nothing] (9) at (2, 0) {};
							\end{pgfonlayer}
							\begin{pgfonlayer}{edgelayer}
							\draw [style=simple] (0) to (7);
							\draw [style=simple] (1) to (8);
							\draw [style=simple] (2) to (9);
							\draw [style=simple] (3) to (4);
							\draw [style=simple] (5) to (6);
							\end{pgfonlayer}
							\end{tikzpicture}
							$}
						
						\item 
						\label{CNOT.6}
						\hfil{
							$
							\begin{tikzpicture}
							\begin{pgfonlayer}{nodelayer}
							\node [style=onein] (0) at (0, 0) {};
							\node [style=oneout] (1) at (1, 0) {};
							\end{pgfonlayer}
							\begin{pgfonlayer}{edgelayer}
							\draw [style=simple] (0) to (1);
							\end{pgfonlayer}
							\end{tikzpicture}
							=
							1_0
							$}
						
						\item 
						\label{CNOT.7}
						\hfil{
							$
							\begin{tikzpicture}
							\begin{pgfonlayer}{nodelayer}
							\node [style=onein] (0) at (0, 1) {};
							\node [style=onein] (1) at (0, .5) {};
							\node [style=nothing] (2) at (0, 0) {};
							\node [style=dot] (3) at (.5, 1) {};
							\node [style=oplus] (4) at (.5, .5) {};
							\node [style=dot] (5) at (1, .5) {};
							\node [style=oplus] (6) at (1, 0) {};
							\node [style=oneout] (7) at (1, 1) {};
							\node [style=nothing] (8) at (1.5, .5) {};
							\node [style=nothing] (9) at (1.5, 0) {};
							\end{pgfonlayer}
							\begin{pgfonlayer}{edgelayer}
							\draw [style=simple] (0) to (7);
							\draw [style=simple] (1) to (8);
							\draw [style=simple] (2) to (9);
							\draw [style=simple] (3) to (4);
							\draw [style=simple] (5) to (6);
							\end{pgfonlayer}
							\end{tikzpicture}
							=
							\begin{tikzpicture}
							\begin{pgfonlayer}{nodelayer}
							\node [style=onein] (0) at (0, 1) {};
							\node [style=onein] (1) at (0, .5) {};
							\node [style=nothing] (2) at (0, 0) {};
							\node [style=dot] (3) at (.5, 1) {};
							\node [style=oplus] (4) at (.5, .5) {};
							\node [style=oneout] (7) at (1, 1) {};
							\node [style=nothing] (8) at (1.5, .5) {};
							\node [style=nothing] (9) at (1.5, 0) {};
							\end{pgfonlayer}
							\begin{pgfonlayer}{edgelayer}
							\draw [style=simple] (0) to (7);
							\draw [style=simple] (1) to (8);
							\draw [style=simple] (2) to (9);
							\draw [style=simple] (3) to (4);
							\end{pgfonlayer}
							\end{tikzpicture}
							\text{, }
							\begin{tikzpicture}
							\begin{pgfonlayer}{nodelayer}
							\node [style=oneout] (0) at (1.5, 1) {};
							\node [style=oneout] (1) at (1.5, .5) {};
							\node [style=nothing] (2) at (1.5, 0) {};
							\node [style=dot] (3) at (1, 1) {};
							\node [style=oplus] (4) at (1, .5) {};
							\node [style=dot] (5) at (.5, .5) {};
							\node [style=oplus] (6) at (.5, 0) {};
							\node [style=onein] (7) at (.5, 1) {};
							\node [style=nothing] (8) at (0, .5) {};
							\node [style=nothing] (9) at (0, 0) {};
							\end{pgfonlayer}
							\begin{pgfonlayer}{edgelayer}
							\draw [style=simple] (0) to (7);
							\draw [style=simple] (1) to (8);
							\draw [style=simple] (2) to (9);
							\draw [style=simple] (3) to (4);
							\draw [style=simple] (5) to (6);
							\end{pgfonlayer}
							\end{tikzpicture}
							=
							\begin{tikzpicture}
							\begin{pgfonlayer}{nodelayer}
							\node [style=oneout] (0) at (1.5, 1) {};
							\node [style=oneout] (1) at (1.5, .5) {};
							\node [style=nothing] (2) at (1.5, 0) {};
							\node [style=dot] (3) at (1, 1) {};
							\node [style=oplus] (4) at (1, .5) {};
							\node [style=onein] (7) at (.5, 1) {};
							\node [style=nothing] (8) at (0, .5) {};
							\node [style=nothing] (9) at (0, 0) {};
							\end{pgfonlayer}
							\begin{pgfonlayer}{edgelayer}
							\draw [style=simple] (0) to (7);
							\draw [style=simple] (1) to (8);
							\draw [style=simple] (2) to (9);
							\draw [style=simple] (3) to (4);
							\end{pgfonlayer}
							\end{tikzpicture}
							$}
						
						\item 
						\label{CNOT.8}
						\hfil{
							$
							\begin{tikzpicture}
							\begin{pgfonlayer}{nodelayer}
							\node [style=nothing] (0) at (0, 1) {};
							\node [style=nothing] (1) at (0, .5) {};
							\node [style=nothing] (2) at (0, 0) {};
							\node [style=dot] (3) at (.5, 1) {};
							\node [style=oplus] (4) at (.5, .5) {};
							\node [style=dot] (5) at (1, .5) {};
							\node [style=oplus] (6) at (1, 0) {};
							\node [style=dot] (7) at (1.5, 1) {};
							\node [style=oplus] (8) at (1.5, .5) {};
							\node [style=nothing] (9) at (2, 1) {};
							\node [style=nothing] (10) at (2, .5) {};
							\node [style=nothing] (11) at (2, 0) {};
							\end{pgfonlayer}
							\begin{pgfonlayer}{edgelayer}
							\draw [style=simple] (0) to (9);
							\draw [style=simple] (1) to (10);
							\draw [style=simple] (2) to (11);
							\draw [style=simple] (3) to (4);
							\draw [style=simple] (5) to (6);
							\draw [style=simple] (7) to (8);
							\end{pgfonlayer}
							\end{tikzpicture}
							=
							\begin{tikzpicture}
							\begin{pgfonlayer}{nodelayer}
							\node [style=nothing] (0) at (0, 1) {};
							\node [style=nothing] (1) at (0, .5) {};
							\node [style=nothing] (2) at (0, 0) {};
							\node [style=dot] (5) at (.5, .5) {};
							\node [style=oplus] (6) at (.5, 0) {};
							\node [style=dot] (7) at (1, 1) {};
							\node [style=oplus] (8) at (1, 0) {};
							\node [style=nothing] (9) at (1.5, 1) {};
							\node [style=nothing] (10) at (1.5, .5) {};
							\node [style=nothing] (11) at (1.5, 0) {};
							\end{pgfonlayer}
							\begin{pgfonlayer}{edgelayer}
							\draw [style=simple] (0) to (9);
							\draw [style=simple] (1) to (10);
							\draw [style=simple] (2) to (11);
							\draw [style=simple] (5) to (6);
							\draw [style=simple] (7) to (8);
							\end{pgfonlayer}
							\end{tikzpicture}
							$}
						
						\item 
						\label{CNOT.9}
						\hfil{
							$
							\begin{tikzpicture}
							\begin{pgfonlayer}{nodelayer}
							\node [style=onein] (0) at (0, 1) {};
							\node [style=onein] (1) at (0, .5) {};
							\node [style=nothing] (2) at (0, 0) {};
							\node [style=dot] (3) at (.5, 1) {};
							\node [style=oplus] (4) at (.5, .5) {};
							\node [style=oneout] (7) at (1, 1) {};
							\node [style=oneout] (8) at (1, .5) {};
							\node [style=nothing] (9) at (1, 0) {};
							\end{pgfonlayer}
							\begin{pgfonlayer}{edgelayer}
							\draw [style=simple] (0) to (7);
							\draw [style=simple] (1) to (8);
							\draw [style=simple] (2) to (9);
							\draw [style=simple] (3) to (4);
							\end{pgfonlayer}
							\end{tikzpicture}
							=
							\begin{tikzpicture}
							\begin{pgfonlayer}{nodelayer}
							\node [style=onein] (0) at (0, 1) {};
							\node [style=onein] (1) at (0, .5) {};
							\node [style=nothing] (2) at (0, 0) {};
							\node [style=dot] (3) at (1, 1) {};
							\node [style=oplus] (4) at (1, .5) {};
							\node [style=oneout] (7) at (2, 1) {};
							\node [style=oneout] (8) at (2, .5) {};
							\node [style=nothing] (9) at (2, 0) {};
							\node [style=oneout] (10) at (0.75, 0) {};
							\node [style=onein] (11) at (1.25, 0) {};
							\end{pgfonlayer}
							\begin{pgfonlayer}{edgelayer}
							\draw [style=simple] (0) to (7);
							\draw [style=simple] (1) to (8);
							\draw [style=simple] (2) to (10);
							\draw [style=simple] (11) to (9);
							\draw [style=simple] (3) to (4);
							\end{pgfonlayer}
							\end{tikzpicture}
							$}
					\end{enumerate}
				\end{multicols}
				\
			\end{mdframed}
	}}
	\caption{The identities of $\CNOT$}
	\label{fig:CNOT}
\end{figure}
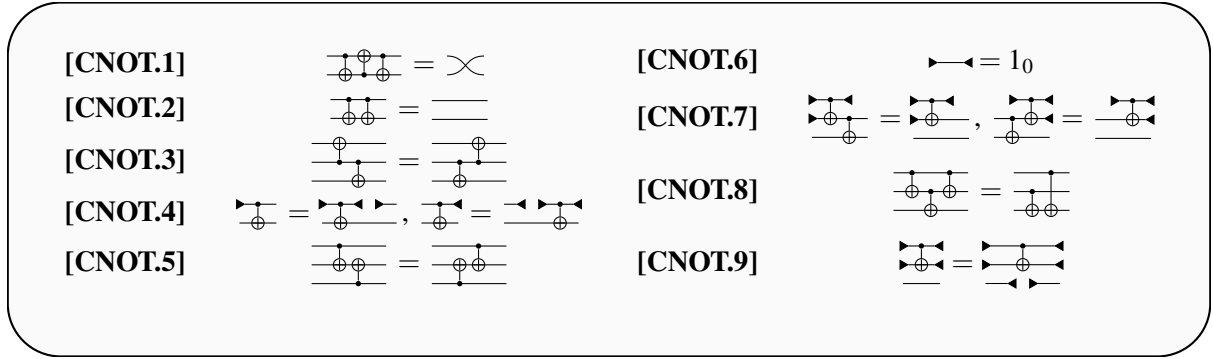

Notice that there are ``gaps'' in some of the $\cnot$ gates, and others are flipped.  This is just shorthand to suppress burdensome symmetry maps. There is an obvious interpretation of this notation, for example:
$$
\begin{tikzpicture}
	\begin{pgfonlayer}{nodelayer}
		\node [style=nothing] (0) at (0, 1.5) {};
		\node [style=nothing] (1) at (0, 2.5) {};
		\node [style=dot] (2) at (0.5000001, 2.5) {};
		\node [style=oplus] (3) at (0.5000001, 1.5) {};
		\node [style=nothing] (4) at (1, 2.5) {};
		\node [style=nothing] (5) at (1, 1.5) {};
		\node [style=nothing] (6) at (0, 2) {};
		\node [style=nothing] (7) at (1, 2) {};
	\end{pgfonlayer}
	\begin{pgfonlayer}{edgelayer}
		\draw (1) to (2);
		\draw (3) to (0);
		\draw (3) to (2);
		\draw (6) to (7);
		\draw (2) to (4);
		\draw (5) to (3);
	\end{pgfonlayer}
\end{tikzpicture}
:=
\begin{tikzpicture}
	\begin{pgfonlayer}{nodelayer}
		\node [style=nothing] (0) at (-0.25, 1.5) {};
		\node [style=nothing] (1) at (-0.25, 2.5) {};
		\node [style=dot] (2) at (0.5000001, 2.5) {};
		\node [style=oplus] (3) at (0.5000001, 2) {};
		\node [style=nothing] (4) at (1.25, 2.5) {};
		\node [style=nothing] (5) at (1.25, 1.5) {};
		\node [style=nothing] (6) at (-0.25, 2) {};
		\node [style=nothing] (7) at (0.5000001, 1.5) {};
		\node [style=nothing] (8) at (1.25, 2) {};
	\end{pgfonlayer}
	\begin{pgfonlayer}{edgelayer}
		\draw (1) to (2);
		\draw [in=0, out=180, looseness=1.00] (3) to (0);
		\draw (3) to (2);
		\draw [in=180, out=0, looseness=1.00] (6) to (7);
		\draw (2) to (4);
		\draw [in=0, out=180, looseness=1.00] (5) to (3);
		\draw [in=180, out=0, looseness=1.00] (7) to (8);
	\end{pgfonlayer}
\end{tikzpicture}
$$

It was shown that these identities were complete;  and moreover, it was shown that $\CNOT$ is equivalent to the category of affine partial isomorphisms between finite $\Z_2$-vector spaces.   We shall use these observations to obtain a similar equivalence for the category, $\TOF$, generated by the Toffoli gate and 1 ancillary bits.

\section{The category \texorpdfstring{$\TOF$}{TOF}}

Define the category $\TOF$ to be the symmetric monoidal category, with objects the natural numbers, generated by the 1 ancillary bits $\onein$ and $\oneout$ (depicted graphically as in $\CNOT$) as well as the Toffoli gate:   
  \[  \tof := 

$}
\end{enumerate}
\end{multicols}
\
\end{mdframed}
}}
\caption{The identities of $\TOF$}
\label{fig:TOF}
\end{figure}

Axioms \ref{TOF.1}-\ref{TOF.5.5}, \ref{TOF.8}-\ref{TOF.9} are relatively intuitive.  \ref{TOF.7} corresponds to tensoring a matrix with the $1\times 1$ zero matrix: this is useful for establishing the restriction structure of $\TOF$. \ref{TOF.6} is used for establishing a normal form for the restriction idempotents.  Any of Axioms \ref{TOF.10}-\ref{TOF.13} combined with \ref{TOF.9} are used to push $\cnot$/$\tof$ gates past each other thereby generating another trailing $\cnot$/$\tof$ gate: this will be discussed in more detail in Lemma \ref{remark:Iwama} {\em (v)} .  \ref{TOF.14} is inherited from $\CNOT$ and is used to establish the inverse products of $\TOF$.  \ref{TOF.15} expresses the commutativity of multiplication for the Toffoli gate.  \ref{TOF.16} is similar to \ref{TOF.15}, except for the 3-bit-controlled-not gate.

As an exercise, we first show that $\zeroout\circ\zeroin = 1_0$ (just as $\oneout\circ \onein=1_0$ in \ref{TOF.8}):

$$
\begin{tikzpicture}
\begin{pgfonlayer}{nodelayer}
\node [style=zeroin] (0) at (-3, -4) {};
\node [style=zeroout] (1) at (-2, -4) {};
\end{pgfonlayer}
\begin{pgfonlayer}{edgelayer}
\draw [style=simple] (0) to (1);
\end{pgfonlayer}
\end{tikzpicture}
=
\begin{tikzpicture}
\begin{pgfonlayer}{nodelayer}
\node [style=oplus] (0) at (-3, -5) {};
\node [style=dot] (1) at (-3, -4.5) {};
\node [style=oplus] (2) at (-1.5, -5) {};
\node [style=oneout] (3) at (-1, -5) {};
\node [style=onein] (4) at (-3.5, -4) {};
\node [style=oneout] (5) at (-1, -4.5) {};
\node [style=onein] (6) at (-2, -4.5) {};
\node [style=dot] (7) at (-1.5, -4.5) {};
\node [style=oneout] (8) at (-2.5, -4.5) {};
\node [style=dot] (9) at (-1.5, -4) {};
\node [style=onein] (10) at (-3.5, -5) {};
\node [style=dot] (11) at (-3, -4) {};
\node [style=onein] (12) at (-3.5, -4.5) {};
\node [style=oneout] (13) at (-1, -4) {};
\end{pgfonlayer}
\begin{pgfonlayer}{edgelayer}
\draw [style=simple] (1) to (0);
\draw [style=simple] (10) to (0);
\draw [style=simple] (0) to (2);
\draw [style=simple] (2) to (3);
\draw [style=simple] (5) to (7);
\draw [style=simple] (1) to (12);
\draw [style=simple] (2) to (7);
\draw [style=simple] (8) to (1);
\draw [style=simple] (7) to (6);
\draw [style=simple] (4) to (11);
\draw [style=simple] (11) to (1);
\draw [style=simple] (7) to (9);
\draw [style=simple] (9) to (13);
\draw [style=simple] (11) to (9);
\end{pgfonlayer}
\end{tikzpicture}
=
\begin{tikzpicture}
\begin{pgfonlayer}{nodelayer}
\node [style=oplus] (0) at (-0.25, -5) {};
\node [style=oneout] (1) at (0.25, -5) {};
\node [style=oneout] (2) at (0.5, -4) {};
\node [style=dot] (3) at (-0.25, -4.5) {};
\node [style=dot] (4) at (-0.25, -4) {};
\node [style=oneout] (5) at (0.5, -4.5) {};
\node [style=onein] (6) at (-2.75, -5) {};
\node [style=onein] (7) at (-3, -4.5) {};
\node [style=onein] (8) at (-3, -4) {};
\node [style=dot] (9) at (-2.25, -4.5) {};
\node [style=oplus] (10) at (-2.25, -5) {};
\node [style=dot] (11) at (-2.25, -4) {};
\node [style=onein] (12) at (-1, -4.5) {};
\node [style=oneout] (13) at (-1.5, -4.5) {};
\node [style=nothing] (14) at (-1.25, -4) {};
\end{pgfonlayer}
\begin{pgfonlayer}{edgelayer}
\draw [style=simple] (0) to (1);
\draw [style=simple, in=0, out=180, looseness=1.00] (2) to (3);
\draw [style=simple] (0) to (3);
\draw [style=simple] (3) to (4);
\draw [style=simple, in=180, out=0, looseness=1.00] (4) to (5);
\draw [style=simple] (9) to (10);
\draw [style=simple] (6) to (10);
\draw [style=simple, in=0, out=180, looseness=1.00] (9) to (8);
\draw [style=simple, in=180, out=0, looseness=1.00] (7) to (11);
\draw [style=simple] (11) to (9);
\draw [style=simple] (10) to (0);
\draw [style=simple, in=180, out=0, looseness=1.00] (12) to (4);
\draw [style=simple, in=0, out=180, looseness=1.00] (3) to (14);
\draw [style=simple, in=0, out=180, looseness=1.00] (14) to (9);
\draw [style=simple, in=0, out=180, looseness=1.00] (13) to (11);
\end{pgfonlayer}
\end{tikzpicture}
=
\begin{tikzpicture}
\begin{pgfonlayer}{nodelayer}
\node [style=oplus] (0) at (-0.75, -5) {};
\node [style=oneout] (1) at (-0.25, -5) {};
\node [style=oneout] (2) at (-0.25, -4.5) {};
\node [style=dot] (3) at (-0.75, -4.5) {};
\node [style=dot] (4) at (-0.75, -4) {};
\node [style=oneout] (5) at (-0.25, -4) {};
\node [style=onein] (6) at (-2.75, -5) {};
\node [style=onein] (7) at (-2.75, -4) {};
\node [style=onein] (8) at (-2.75, -4.5) {};
\node [style=dot] (9) at (-2.25, -4.5) {};
\node [style=oplus] (10) at (-2.25, -5) {};
\node [style=dot] (11) at (-2.25, -4) {};
\node [style=onein] (12) at (-1.25, -4) {};
\node [style=oneout] (13) at (-1.75, -4) {};
\end{pgfonlayer}
\begin{pgfonlayer}{edgelayer}
\draw [style=simple] (0) to (1);
\draw [style=simple, in=0, out=180, looseness=1.00] (2) to (3);
\draw [style=simple] (0) to (3);
\draw [style=simple] (3) to (4);
\draw [style=simple, in=180, out=0, looseness=1.00] (4) to (5);
\draw [style=simple] (9) to (10);
\draw [style=simple] (6) to (10);
\draw [style=simple, in=0, out=180, looseness=1.00] (9) to (8);
\draw [style=simple, in=180, out=0, looseness=1.00] (7) to (11);
\draw [style=simple] (11) to (9);
\draw [style=simple] (10) to (0);
\draw [style=simple] (9) to (3);
\draw [style=simple] (4) to (12);
\draw [style=simple] (13) to (11);
\end{pgfonlayer}
\end{tikzpicture}
=
\begin{tikzpicture}
\begin{pgfonlayer}{nodelayer}
\node [style=onein] (0) at (-3.5, -5) {};
\node [style=oneout] (1) at (-1, -4.5) {};
\node [style=onein] (2) at (-3.5, -4) {};
\node [style=dot] (3) at (-1.5, -4) {};
\node [style=dot] (4) at (-1.5, -4.5) {};
\node [style=onein] (5) at (-3.5, -4.5) {};
\node [style=oneout] (6) at (-2.5, -4) {};
\node [style=dot] (7) at (-3, -4) {};
\node [style=oneout] (8) at (-1, -4) {};
\node [style=oplus] (9) at (-3, -5) {};
\node [style=onein] (10) at (-2, -4.5) {};
\node [style=dot] (11) at (-3, -4.5) {};
\node [style=oplus] (12) at (-1.5, -5) {};
\node [style=oneout] (13) at (-1, -5) {};
\node [style=onein] (14) at (-2, -4) {};
\node [style=oneout] (15) at (-2.5, -4.5) {};
\end{pgfonlayer}
\begin{pgfonlayer}{edgelayer}
\draw [style=simple] (11) to (9);
\draw [style=simple] (0) to (9);
\draw [style=simple] (9) to (12);
\draw [style=simple] (12) to (13);
\draw [style=simple] (1) to (4);
\draw [style=simple] (11) to (5);
\draw [style=simple] (12) to (4);
\draw [style=simple] (15) to (11);
\draw [style=simple] (4) to (10);
\draw [style=simple] (2) to (7);
\draw [style=simple] (7) to (6);
\draw [style=simple] (7) to (11);
\draw [style=simple] (4) to (3);
\draw [style=simple] (3) to (8);
\draw [style=simple] (3) to (14);
\end{pgfonlayer}
\end{tikzpicture}
=
\begin{tikzpicture}
\begin{pgfonlayer}{nodelayer}
\node [style=onein] (0) at (-3.5, -5) {};
\node [style=oneout] (1) at (-2, -5) {};
\node [style=oneout] (2) at (-2, -4.5) {};
\node [style=onein] (3) at (-3.5, -4.5) {};
\node [style=dot] (4) at (-3, -4.5) {};
\node [style=dot] (5) at (-2.5, -4.5) {};
\node [style=oplus] (6) at (-3, -5) {};
\node [style=oplus] (7) at (-2.5, -5) {};
\node [style=dot] (8) at (-3, -4) {};
\node [style=dot] (9) at (-2.5, -4) {};
\node [style=onein] (10) at (-3.5, -4) {};
\node [style=oneout] (11) at (-2, -4) {};
\end{pgfonlayer}
\begin{pgfonlayer}{edgelayer}
\draw [style=simple] (4) to (6);
\draw [style=simple] (0) to (6);
\draw [style=simple] (6) to (7);
\draw [style=simple] (7) to (1);
\draw [style=simple] (2) to (5);
\draw [style=simple] (4) to (3);
\draw [style=simple] (7) to (5);
\draw [style=simple] (10) to (8);
\draw [style=simple] (8) to (4);
\draw [style=simple] (5) to (9);
\draw [style=simple] (9) to (11);
\draw [style=simple] (4) to (5);
\draw [style=simple] (9) to (8);
\end{pgfonlayer}
\end{tikzpicture}
=
\begin{tikzpicture}
\begin{pgfonlayer}{nodelayer}
\node [style=onein] (0) at (-3.5, -5) {};
\node [style=oneout] (1) at (-3, -5) {};
\node [style=oneout] (2) at (-3, -4.5) {};
\node [style=onein] (3) at (-3.5, -4.5) {};
\node [style=onein] (4) at (-3.5, -4) {};
\node [style=oneout] (5) at (-3, -4) {};
\end{pgfonlayer}
\begin{pgfonlayer}{edgelayer}
\draw [style=simple] (0) to (1);
\draw [style=simple] (2) to (3);
\draw [style=simple] (4) to (5);
\end{pgfonlayer}
\end{tikzpicture}
=
1_0
$$

Since we build on the work done in \cite{CNOT}, we start by establishing that there is a canonical functor from the category $\CNOT$ into $\TOF$.  The proof is contained in Appendix  \ref{appendix:functorial}.
\begin{lemma}
\label{lemma:functorial}
The canonical interpretation of $\CNOT$ in $\TOF$ is functorial.
\end{lemma}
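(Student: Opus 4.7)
The plan is as follows. Because $\CNOT$ is freely generated as a symmetric monoidal category by the three generators $\onein$, $\oneout$, and $\cnot$ subject to the nine identities [CNOT.1]--[CNOT.9] of Figure \ref{fig:CNOT}, and since $\TOF$ contains obvious candidate interpretations of each generator (the ancillaries are primitive in $\TOF$, and $\cnot$ has already been defined in terms of the Toffoli gate together with an $\onein$/$\oneout$ pair on the top wire), to prove functoriality it suffices to verify that each of the nine $\CNOT$ identities holds in $\TOF$ under this interpretation. Thus the proof reduces to nine diagram calculations, one per axiom.

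A handful of these are essentially immediate from the axioms of $\TOF$: [CNOT.6] is literally [TOF.8], and [CNOT.3], [CNOT.5] (commutation of $\cnot$-gates on disjoint wires) follow by expanding each $\cnot$ into a Toffoli with a pair of ancillaries on top and then applying the disjoint-wire commutations implicit in the Toffoli identities [TOF.3]--[TOF.5] together with naturality of the symmetry. Similarly, [CNOT.1] (three $\cnot$'s form a swap) follows from [TOF.14] after expanding the three translated $\cnot$ gates and absorbing the six stray ancillary bits on the top control wire using [TOF.8]; and [CNOT.2] ($\cnot$ is self-inverse) follows from [TOF.9] in the same way.

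The more delicate verifications are the ancillary-interaction axioms [CNOT.4], [CNOT.7], [CNOT.8], [CNOT.9]. For each of these I would expand every $\cnot$ into its definition $\tof$-with-ancilla, then use [TOF.1] (a Toffoli with an $\onein$ on a control becomes a $\cnot$-like absorption) and [TOF.2] (a Toffoli with a $\zeroin$ on a control collapses) to simplify, and finally use [TOF.7], [TOF.8] to dispose of leftover $\onein$/$\oneout$ and $\zeroin$/$\zeroout$ pairs. For [CNOT.8] and [CNOT.9] in particular, one expects to need [TOF.10]--[TOF.13] to push $\tof$-gates past one another in order to align the ancillary wires for absorption.

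The main obstacle is bookkeeping rather than mathematical depth: the translated diagrams contain substantially more ancillary wires than the $\CNOT$-diagrams, so systematically tracking which $\onein$/$\oneout$ pair cancels via which axiom is the only real difficulty. Since the paper defers these calculations to Appendix \ref{appendix:functorial}, the proposal is precisely to carry out the nine case-by-case string-diagram reductions described above, using [TOF.1], [TOF.2], [TOF.7]--[TOF.9] as the primary absorption tools and [TOF.3]--[TOF.5], [TOF.10]--[TOF.13] to rearrange gates so that these absorptions become applicable.
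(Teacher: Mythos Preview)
Your proposal is correct and follows essentially the same approach as the paper: a case-by-case verification of the nine $\CNOT$ axioms in $\TOF$ by expanding each $\cnot$ as a Toffoli with an ancillary control and then applying the $\TOF$ identities. The paper's Appendix~\ref{appendix:functorial} does exactly this, with \ref{CNOT.1}, \ref{CNOT.6}, and \ref{CNOT.9} dispatched in one line each (from \ref{TOF.14}, \ref{TOF.8}, and a single $\TOF$ axiom respectively) and the remaining six handled by short string-diagram computations; you slightly overestimate the work needed for \ref{CNOT.8} and \ref{CNOT.9}, since neither requires the heavier pushing axioms \ref{TOF.10}--\ref{TOF.13}, but this does not affect the correctness of the plan.
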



\section{Controlled not gates and the Iwama identities}


In reversible and quantum computing it is usual to regard the Toffoli gate and $\cnot$ as the ${\sf not}$ gate ``controlled'', respectively, by one and two control wires.  In $\TOF$ 
one can define $\cnot_n$ the ${\sf not}$ gate controlled by $n$ wires for any $n \in \mathbb{N}$:

\begin{definition}
\label{definition:Generalizedcnot}
For every $n\in\N$, inductively define the controlled not gate, $\cnot_n:n+1\to n+1$ inductively by:
\begin{itemize}
	\item
	For the base cases, let $\cnot_0:=\notgate$, $\cnot_1:=\cnot$ and $\cnot_2:=\tof$.

	\item
	For all $n \in \N$ such that $n\geq 2$:
	
	\[
	\cnot_{n+1} \equiv

	$$
\end{corollary}

\begin{proof}
The claim follows vacuously for $n<2$.  The base case when $n=2$ follows immediately from \ref{TOF.15}.  Suppose that the claim holds for some $\cnot_n$.  Consider a $\cnot_{n+1}$ gate and unzip this gate one level down.  We can transpose the top two control wires by the base case and we can transpose the bottom $n-1$ control wires by the inductive hypothesis.  Finally, to transpose the second and third wire from the top, either unzip the $\cnot_{n+1}$ gate down by one more step.  If this cannot be done -- because there is nothing more to unzip -- it means one can directly apply {\bf [TOF.17]}.
\end{proof}

Since transpositions generate the symmetric group, the control wires of a $\cnot_n$ gate can, therefore, be freely permuted.

We shall the notation $\oa_x^X$ to denote a $\cnot_{|X|}$ gate operating on the control wires in the set $X$ and targeting the wire $x$.  Similarly let $\rhd_x$ and $\lhd_x$ be, respectively, the  
$\zeroin$ and $\zeroout$ gates on the wire $x$.    

The following identities are due to Iwama et al. \cite{Iwama}.   They show how to push generalized control not gates past each other in certain key situations.  The proof is contained in Appendix \ref{appendix:Iwama}.

\begin{lemma}[Iwama's Identities] ~
 \label{remark:Iwama}
\begin{enumerate}[label={\em (\roman*)}]
\item $\oa_x^X\oa_x^X=1$
\item When $x \in X$ then $\rhd_x \oa_y^{X} = \rhd_x$
\item  When the target wire are the same $\oa_x^X \oa_x^Y = \oa_x^Y \oa_x^X$
\item  When $x \not\in Y$ and $y \not\in X$ then $\oa_x^X \oa_y^Y = \oa_y^Y \oa_x^X$
\item  $\oa_x^X \oa_y^{\{ x\} \sqcup Y} = \oa_y^{X \cup Y} \oa_y^{\{ x\} \sqcup Y} \oa_x^X$
\item $\rhd_z \oa_z^{\{x\}}\oa_y^{\{x\}\sqcup X} = \rhd_z \oa_z^{\{x\}}\oa_y^{\{z\}\sqcup X}$
\end{enumerate}
\end{lemma}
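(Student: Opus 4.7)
The plan is to prove each identity by reducing it to a statement about individual Toffoli or $\cnot$ gates via the zipper of Proposition \ref{lemma:zipper} and the control-permutation provided by Corollary \ref{cor:transpose}. Part (i) is immediate from Lemma \ref{lemma:PolySelfInverse}, since $\oa_x^X$ is literally the $\cnot_{|X|}$ gate. For (ii), I would transpose $x$ into the top control position of $\oa_y^X$ using Corollary \ref{cor:transpose} and then unzip one layer: the outermost Toffoli so produced has $\rhd_x$ feeding its top control, and axiom \ref{TOF.2} erases it, leaving a smaller $\cnot_n$ gate to which the argument recurses. The base cases $|X|=1$ and $|X|=2$ follow directly from \ref{CNOT.4} (via Lemma \ref{lemma:functorial}) and \ref{TOF.2} respectively.

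Parts (iii) and (iv) are routine commutation arguments. For (iii), unzip both $\oa_x^X$ and $\oa_x^Y$ into chains of Toffoli gates all sharing target wire $x$: such Toffolis commute by \ref{TOF.3}, and the ancillary zero wires introduced by the unzippings are managed using Corollary \ref{cor:transpose} together with disjointness. For (iv), the hypotheses $x \notin Y$ and $y \notin X$ guarantee that the unzipped Toffoli chains act on disjoint sets of wires (modulo ancillae), so that \ref{TOF.5} and \ref{TOF.5.5} allow them to be freely interchanged.

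The main obstacle is (v), the substantive push-past identity. I would prove it by induction on $|X|$. The base case $|X|=1$ reduces to pushing a single Toffoli past $\oa_y^{\{x\}\sqcup Y}$ while generating a trailing correction gate -- this is precisely the content of axioms \ref{TOF.10}--\ref{TOF.13} combined with \ref{TOF.9} and \ref{TOF.16}, as highlighted in the remarks following Figure \ref{fig:TOF}. For the inductive step, pick $z \in X$ and unzip $\oa_x^X$ at $z$ into $\oa_a^{X \setminus \{z\}}$ followed by a Toffoli with controls $z,a$ and target $x$ (with fresh ancilla $a$ sandwiched between $\rhd_a$ and $\lhd_a$); then use the inductive hypothesis to push $\oa_a^{X \setminus \{z\}}$ past $\oa_y^{\{x\}\sqcup Y}$, and the base case to push the outer Toffoli past the resulting gate. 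Re-zipping and consolidating the trailing $\oa_y$ factors using (iii) and (iv) produces the desired $\oa_y^{X \cup Y}$.

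Finally, (vi) reduces to the previous parts by a short rewrite of the right-hand side: apply (v) to rewrite $\oa_z^{\{x\}}\oa_y^{\{z\}\sqcup X}$ as $\oa_y^{\{x\}\cup X}\oa_y^{\{z\}\sqcup X}\oa_z^{\{x\}}$; commute $\rhd_z$ past the leading $\oa_y^{\{x\}\cup X}$ (the wires are disjoint); apply (ii) to annihilate $\rhd_z\oa_y^{\{z\}\sqcup X}$ into $\rhd_z$; and use (iv) to interchange $\oa_y^{\{x\}\cup X}$ and $\oa_z^{\{x\}}$, recovering the left-hand side $\rhd_z\oa_z^{\{x\}}\oa_y^{\{x\}\sqcup X}$.
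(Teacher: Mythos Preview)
Your treatment of (i), (iii), (iv), and (vi) is fine and broadly in the spirit of the paper; in particular your algebraic derivation of (vi) from (ii), (iv), (v) is correct and arguably cleaner than the paper's diagrammatic argument. For (ii) you recurse from the top, whereas the paper puts $x$ at the \emph{bottom} control, unzips once via Proposition~\ref{lemma:zipper}\,{\em(i)}, kills the resulting middle Toffoli with \ref{TOF.2}, and cancels the two outer $\cnot_n$'s by Lemma~\ref{lemma:PolySelfInverse} --- no recursion needed. Your version also works, but it is worth knowing the one-shot argument.

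The real problem is (v). You induct on $|X|$, the number of controls of the \emph{first} gate, and take as base case $|X|=1$. But then $\oa_x^X$ is a $\cnot$ (not a Toffoli), and you must push it through $\oa_y^{\{x\}\sqcup Y}$ for \emph{arbitrary} $Y$ --- an arbitrarily large gate. Axioms \ref{TOF.10}--\ref{TOF.13} are four-wire identities; they do not give this. Your inductive step has the same defect: after unzipping $\oa_x^X$, the inner Toffoli still has to be pushed past the full $\oa_y^{\{x\}\sqcup Y}$, so nothing has shrunk on the side that matters. The induction simply does not close.

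The paper instead inducts on the size of the \emph{second} gate. The base cases --- second gate a $\tof$ --- are exactly Proposition~\ref{lemma:zipper}\,{\em(ii)},\,{\em(iii)}, which push an arbitrary $\cnot_n$ past a single Toffoli. For the inductive step one unzips $\oa_y^{\{x\}\sqcup Y}$ (not $\oa_x^X$) into $\rhd_z\,\oa_z^Y\,\oa_y^{\{z,x\}}\,\oa_z^Y\,\lhd_z$, commutes $\oa_x^X$ inward, applies Proposition~\ref{lemma:zipper}\,{\em(ii)},\,{\em(iii)} to pass it through the small middle gate $\oa_y^{\{z,x\}}$, and invokes the inductive hypothesis on the interaction with $\oa_z^Y$. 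The trailing gates containing $z$ as a control are then absorbed using (ii) and the ancilla closed off. The moral: the available push-past lemmas are ``arbitrary first gate, small second gate'', so the induction must shrink the second gate.
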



\section{\texorpdfstring{$\TOF$}{TOF} is a discrete inverse category}

Next we prove that $\TOF$ is a discrete inverse category.

Define $(\_)^\cnv:\TOF^\op\to\TOF$ to be the contravariant functor which flips circuits horizontally, mapping $\onein\mapsto\oneout$, $\oneout \mapsto \onein$  and $\tof\mapsto \tof$.  Clearly $(\_)^\cnv$ is an involution (and thus, a dagger functor) as all of the axioms are horizontally symmetric and $(f^\cnv)^\cnv = f$.

The diagonal map in $\TOF$ is the image of the diagonal map, $\Delta: n \to n \otimes n$, in $\CNOT$ under the canonical functor $\CNOT \to \TOF$:

\begin{definition}
	\label{definition:Delta}
	Define a family of maps $\Delta:=\{\Delta_n\}_{n \in \N}$  with the $\cnot$ gate and $1$-ancillary wires inductively, as in \cite{CNOT}, such that 
	$\Delta_0 = 1_0$,
	
	\begin{center}
		$
		\Delta_1
		:=
		\begin{tikzpicture}
		\begin{pgfonlayer}{nodelayer}
		\node [style=fanout] (0) at (0, 0) {};
		\node [style=nothing] (1) at (-1, 0) {};
		\node [style=nothing] (2) at (1, 0.5) {};
		\node [style=nothing] (3) at (1, -0.5) {};
		\end{pgfonlayer}
		\begin{pgfonlayer}{edgelayer}
		\draw [style=simple, in=180, out=27, looseness=1.00] (0) to (2);
		\draw [style=simple, in=-27, out=180, looseness=1.00] (3) to (0);
		\draw [style=simple] (0) to (1);
		\end{pgfonlayer}
		\end{tikzpicture}
		:=
		\begin{tikzpicture}
		\begin{pgfonlayer}{nodelayer}
		\node [style=nothing] (0) at (-1, 0) {};
		\node [style=nothing] (1) at (1, 0) {};
		\node [style=nothing] (2) at (1, -0.5) {};
		\node [style=dot] (3) at (0, 0) {};
		\node [style=oplus] (4) at (0, -0.5) {};
		\node [style=zeroin] (5) at (-0.5, -0.5) {};
		\end{pgfonlayer}
		\begin{pgfonlayer}{edgelayer}
		\draw [style=simple] (0) to (3);
		\draw [style=simple] (3) to (1);
		\draw [style=simple] (2) to (4);
		\draw [style=simple] (4) to (5);
		\draw [style=simple] (4) to (3);
		\end{pgfonlayer}
		\end{tikzpicture}
		$
		\hspace{1cm}
		and for all $n> 1$:
		\hspace{.5cm}
		$\Delta_n
		:=
		\begin{tikzpicture}
		\begin{pgfonlayer}{nodelayer}
		\node [style=fanout] (0) at (0, 0) {};
		\node [style=nothing] (1) at (-1, 0) {};
		\node [style=nothing] (2) at (1, 0.5) {};
		\node [style=nothing] (3) at (1, -0.5) {};
		\node [style=nothing] (4) at (-0.5, 0.25) {$n$};
		\end{pgfonlayer}
		\begin{pgfonlayer}{edgelayer}
		\draw [style=simple, in=180, out=27, looseness=1.00] (0) to (2);
		\draw [style=simple, in=-27, out=180, looseness=1.00] (3) to (0);
		\draw [style=simple] (0) to (1);
		\end{pgfonlayer}
		\end{tikzpicture}
		:=
		\begin{tikzpicture}
		\begin{pgfonlayer}{nodelayer}
		\node [style=fanout] (0) at (0, 0) {};
		\node [style=nothing] (1) at (-1, 0) {};
		\node [style=nothing] (2) at (1, 0.5) {};
		\node [style=nothing] (3) at (1, -0.5) {};
		\node [style=nothing] (4) at (-0.55, 0.25) {$n-1$};
		\node [style=nothing] (5) at (1, -0.25) {};
		\node [style=fanout] (6) at (0, -0.75) {};
		\node [style=nothing] (7) at (-1, -0.75) {};
		\node [style=nothing] (8) at (1, -1.25) {};
		\end{pgfonlayer}
		\begin{pgfonlayer}{edgelayer}
		\draw [style=simple, in=180, out=27, looseness=1.00] (0) to (2);
		\draw [style=simple, in=-27, out=180, looseness=1.00] (3) to (0);
		\draw [style=simple] (0) to (1);
		\draw [style=simple, in=180, out=27, looseness=1.00] (6) to (5);
		\draw [style=simple, in=-27, out=180, looseness=1.00] (8) to (6);
		\draw [style=simple] (6) to (7);
		\end{pgfonlayer}
		\end{tikzpicture}
		$
	\end{center}
\end{definition}

This yields the following result;  the proof is contained in Appendix \ref{appendix:discreteInverse}:

\begin{proposition}
	\label{propositoin:TOFDiscreteInverse}
	$\TOF$ is a discrete inverse category with this structure.
\end{proposition}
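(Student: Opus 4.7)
The plan is to split the proof into two independent parts matching the two ingredients of ``discrete inverse category'': (a) showing that $(\_)^\cnv$ makes $\TOF$ an inverse category, and (b) showing that $\Delta$ is a system of inverse products. Most of the work will transport across the canonical functor $F:\CNOT\to\TOF$ of Lemma~\ref{lemma:functorial}, since the diagonal $\Delta$ and the dagger $(\_)^\cnv$ on $\CNOT$-circuits are identical to the restriction of the corresponding $\TOF$ structure.

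For part (a), I first verify that the horizontal-flip assignment $(\_)^\cnv$ is a well-defined contravariant involution by inspecting axioms \ref{TOF.1}--\ref{TOF.16}; several are already stated together with their mirror image and the remainder are manifestly reflection-symmetric (note that $\tof$ maps to itself under the flip because its top two wires are indistinguishable control wires). \ref{INV.1} is then immediate. For \ref{INV.2} and \ref{INV.3}, standard closure properties under composition, tensor and symmetry reduce the checks to the generators. The Toffoli gate and symmetry maps are total and self-inverse, so both identities are trivial for them. For the ancillary bits, \ref{INV.2} for $\onein$ reads $\onein\cdot\oneout\cdot\onein=\onein$; this is immediate from \ref{TOF.8}, and the symmetric version handles $\oneout$. \ref{INV.3} then reduces to commuting restriction idempotents of the form $\onein\cdot\oneout$ on different wires, which is a consequence of the symmetric monoidal structure alone.

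For part (b), the diagonal $\Delta$ is defined using only $\cnot$ gates and a $\zeroin$-ancilla (which itself is a defined gate expressible in $\CNOT$), so $\Delta$ lies entirely in the image of $F$. Since $\CNOT$ has already been shown to be a discrete inverse category with this $\Delta$ in \cite{CNOT}, the cocommutativity \ref{DINV.1}, coassociativity \ref{DINV.2}, semi-Frobenius law \ref{DINV.3}, and uniform-copying law \ref{DINV.4} are $\CNOT$-identities that transport along $F$ verbatim. Totalness $\Delta\cdot\Delta^\cnv=1$ transports in the same way because $F$ preserves the dagger. Similarly, naturality of $\Delta$ with respect to any morphism sitting in the image of $F$ is automatic.

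The main obstacle is thus naturality of $\Delta$ with respect to morphisms that are \emph{not} in the image of $F$, and by the generator structure this reduces to a single equation: $\tof\cdot \Delta_3 = \Delta_3\cdot(\tof\otimes\tof)$. To prove it, I would expand both sides using Definition~\ref{definition:Delta}, producing on either side a stack of $\zeroin$-ancillae feeding into controlled-not gates interacting with one or two Toffoli gates. I would then use Proposition~\ref{lemma:zipper} to zip the $\cnot_n$ copies so that each of the three target-wire copies is produced by a single $\cnot_n$-like structure, and then apply Iwama's identities (Lemma~\ref{remark:Iwama}(iii)--(v)) to slide the Toffoli gates through the copying structure. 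The redundant pairs of control-not gates that arise can be removed using Lemma~\ref{lemma:PolySelfInverse} and \ref{TOF.14}, while the ancillary wires collapse using \ref{TOF.2} and \ref{TOF.6}. The argument is essentially a bookkeeping exercise on circuit diagrams; its only subtlety is the bookkeeping of which wire is controlled by which, and this is exactly what the reformulation in terms of $\oa_x^X$ notation (preceding Lemma~\ref{remark:Iwama}) was designed to make tractable.
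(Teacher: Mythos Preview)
Your overall decomposition matches the paper's: verify that $(\_)^\cnv$ gives an inverse category, then verify that $\Delta$ gives inverse products, transporting from $\CNOT$ whatever lies in the image of the functor of Lemma~\ref{lemma:functorial}. Identifying naturality of $\Delta$ against $\tof$ as the only genuinely new obligation in part~(b) is exactly right; the paper dispatches it by a direct string-diagram calculation rather than through Iwama's identities, but your route is plausible if longer.

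The gap is in part~(a). You assert that ``standard closure properties under composition, tensor and symmetry reduce the checks to the generators'' for \ref{INV.2} and \ref{INV.3}, and then verify those identities only for the generators themselves. But \ref{INV.2} is not closed under composition without already having \ref{INV.3}: from $ff^\cnv f=f$ and $gg^\cnv g=g$ one gets $(fg)(fg)^\cnv(fg)=f(gg^\cnv)(f^\cnv f)g$, and collapsing this to $fg$ requires commuting $gg^\cnv$ past $f^\cnv f$. And \ref{INV.3} is a two-variable statement about \emph{all} pairs $f,g$; checking that $\onein\oneout$ idempotents on distinct wires commute says nothing about $ff^\cnv$ for a composite $f$, which need not be a tensor of single-wire projections (indeed, Section~\ref{section:NormalForm} is devoted to classifying such idempotents).

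The paper handles this with two genuine structural inductions rather than a generator check. For \ref{INV.2} it assumes $ff^\cnv f=f$ and shows the property is preserved when $f$ is extended by one generator (the $\tof$ case is immediate from self-inverseness; the ancilla cases are quoted from \cite{CNOT}). For \ref{INV.3} it does not attack commutativity head-on; instead it shows by structural induction on $f$ that $ff^\cnv$ is \emph{latchable} (a one-variable property expressed via $\Delta$), and then invokes the lemma from \cite{CNOT} that latchable maps automatically commute with one another. This conversion of the two-variable commutativity into a one-variable inductive property is the idea your sketch is missing.
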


\section{The points of \texorpdfstring{$\TOF$}{TOF}}

The (total) points of $\TOF$ are iterated tensors of the input ancillary bits ($\zeroin$ and $\onein$) and can be denoted using the ``ket'' notation.  We start by observing  that the Toffoli gate  in $\TOF$ behaves as expected on these points; the proof is contained in Appendix \ref{appendix:BehavesAsExpected}:

\begin{lemma}
	\label{lemma:BehavesAsExpected}
	
	$\tof \circ |x_1,x_2,x_3\> = | x_1,x_2,x_1\cdot x_2 + x_3\>$
\end{lemma}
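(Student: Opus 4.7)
The plan is to proceed by a case analysis on the value of $x_1 \in \{0,1\}$, invoking the two axioms \ref{TOF.1} and \ref{TOF.2} that were crafted precisely to collapse $\tof$ when its top control wire is fixed to a point.

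In the case $x_1 = 0$, we have $|x_1\> = \zeroin$, so the first wire feeds a $\zeroin$ into the top control of $\tof$. Axiom \ref{TOF.2} rewrites $\tof$ precomposed with $\zeroin$ on top as $\zeroin$ tensored with the identity on the remaining two wires. The composite therefore equals $\zeroin \otimes |x_2\> \otimes |x_3\> = |0, x_2, x_3\>$; since $x_1 \cdot x_2 + x_3 = x_3$ when $x_1 = 0$, this agrees with $|x_1, x_2, x_1 \cdot x_2 + x_3\>$ as required.

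In the case $x_1 = 1$, we have $|x_1\> = \onein$. Axiom \ref{TOF.1} rewrites $\tof$ precomposed with $\onein$ on top as $\onein$ on the top wire tensored with a $\cnot$ gate on the bottom two wires. The claim thus reduces to showing $\cnot \circ (|x_2\> \otimes |x_3\>) = |x_2, x_2 + x_3\>$, which is the analogous statement for the $\cnot$ gate. This identity is established in \cite{CNOT} for the category $\CNOT$ and transfers into $\TOF$ via the functorial embedding of Lemma \ref{lemma:functorial}. Since $x_1 = 1$ gives $x_1 \cdot x_2 = x_2$, this supplies precisely $|1, x_2, x_2 + x_3\> = |x_1, x_2, x_1 \cdot x_2 + x_3\>$.

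There is no substantive obstacle here: axioms \ref{TOF.1} and \ref{TOF.2} were introduced exactly to encode these two reductions, and the auxiliary $\cnot$ computation comes essentially for free from \cite{CNOT}. The only routine care needed is to track the symmetry maps so that the ancillary bits sit on the correct wires when one applies the graphical rewrites; once this bookkeeping is handled, both cases are immediate.
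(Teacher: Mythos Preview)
Your approach is correct and genuinely different from the paper's. The paper proceeds by an exhaustive eight-case analysis, verifying each of the inputs $|000\>,\dots,|111\>$ separately via direct graphical manipulations (using \ref{TOF.2} for the four $x_1=0$ cases, \ref{TOF.15} followed by \ref{TOF.2} for $|100\>$ and $|101\>$, and more involved rewrites for $|110\>$ and $|111\>$). Your argument instead splits only on $x_1$, invokes \ref{TOF.1} and \ref{TOF.2} once each, and then delegates the remaining four cases to the analogous $\cnot$ computation imported from \cite{CNOT} via Lemma~\ref{lemma:functorial}. This is cleaner and more modular: it exploits the layered structure of the definitions (in particular that $\cnot$ in $\TOF$ is literally $\tof$ with a $\onein$ control) and reuses prior work rather than redoing it. The paper's approach, by contrast, is entirely self-contained and does not rely on locating the corresponding lemma in \cite{CNOT}; it also exhibits concretely which axioms are needed for each individual case, which has some expository value. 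Both are valid; yours is shorter.
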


The points above which are expressed in bra notation are clearly total.  However, not all maps with domain $0$ are total: in particular, as in $\CNOT$, the maps $\Omega_{0,m}$ for $m \in \N$ are not total:

\begin{definition}
	\label{definition:Omega}
	Define $\Omega:=  \zeroout\circ\onein$, and $\Omega_{n,m}:=\onein^{\otimes^m}\circ\Omega\circ  \oneout^{\otimes^n}$.
\end{definition}

When any map is tensored with such a map it becomes one of them:

\begin{lemma}
\label{lemma:DegenerateSame}
For all circuits $f\in\TOF(n,m)$, $f \otimes \Omega = \Omega_{n,m}$.
\end{lemma}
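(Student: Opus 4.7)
The plan is a structural induction on the circuit $f$, built from the generators of $\TOF$ under tensor and composition. The core ingredient is axiom \ref{TOF.7}, which after applying the interchange law of the symmetric monoidal structure reads $1_1 \otimes \Omega = \onein \circ \Omega \circ \oneout = \Omega_{1,1}$; iterating once per wire yields $1_n \otimes \Omega = \Omega_{n,n}$ for every $n$. The generators $\onein$ and $\oneout$ then handle themselves via interchange (giving $\onein \otimes \Omega = \onein \circ \Omega = \Omega_{0,1}$ and $\oneout \otimes \Omega = \Omega \circ \oneout = \Omega_{1,0}$), while the symmetry $c$ and the Toffoli gate $\tof$ are total and thus satisfy $\oneout^{\otimes n} \circ f = \oneout^{\otimes n}$, giving $f \otimes \Omega = (1_n \otimes \Omega) \circ f = \Omega_{n,n} \circ f = \Omega_{n,n}$. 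For $\tof$, totality uses Lemma \ref{lemma:BehavesAsExpected}.

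The inductive step for tensor follows from associativity and a routine bifunctoriality calculation using the identity-wire case. For composition $f = g \circ h$ (with $h : n \to k$ and $g : k \to m$), bifunctoriality and the inductive hypothesis on $g$ give $(g \circ h) \otimes \Omega = (g \otimes \Omega) \circ h = \Omega_{k,m} \circ h$; reducing this to $\Omega_{n,m}$ amounts to the identity $\Omega \circ \oneout^{\otimes k} \circ h = \Omega \circ \oneout^{\otimes n}$.

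The main obstacle is this last identity, which says that any copoint $q := \oneout^{\otimes k} \circ h$ is absorbed by $\Omega$ into $\Omega_{n,0}$. Because $h$ need not be total in $\TOF$, we cannot simply cancel the discards. My strategy is to strengthen the induction: prove jointly the main lemma and its dual statement ``$\Omega \circ q = \Omega_{n,0}$ for every $q : n \to 0$''. Since $\Omega^\cnv = \Omega$ (from $\zeroout = \oneout \circ \notgate$ and $\zeroin = \notgate \circ \onein$) and $(\Omega_{n,m})^\cnv = \Omega_{m,n}$, the dual statement transforms into the main lemma under the dagger functor $(\_)^\cnv$. The non-trivial generator checks for the dual are the identities $\Omega \circ \zeroout = \Omega \circ \oneout$ and dually $\zeroin \circ \Omega = \onein \circ \Omega$; these follow from \ref{TOF.7} applied to $\notgate$-conjugated wires (available since $\notgate$ is self-inverse and $\zeroin, \zeroout$ are defined in terms of $\notgate$). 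With the strengthened induction, the composition case closes by applying the dual part to $q := \oneout^{\otimes k} \circ h$.
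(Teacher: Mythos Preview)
There are two gaps. First, totality of $\tof$ does \emph{not} give $\oneout^{\otimes 3}\circ\tof=\oneout^{\otimes 3}$: by Lemma~\ref{lemma:BehavesAsExpected} one has $\oneout^{\otimes 3}\circ\tof=\langle 110|$, since $\oneout$ is a partial isomorphism, not a Cartesian discard, and totality of $f$ says nothing about how $\oneout$ absorbs it. What you actually need for this base case is $\Omega\circ\langle 110|=\Omega\circ\langle 111|$, which reduces to the identity $\Omega\circ\zeroout=\Omega\circ\oneout$ that you correctly isolate later. Second, your composition step does not close. The ``dual statement'' $\Omega\circ q=\Omega_{n,0}$ is, via interchange with the scalar $\Omega$, literally the $m=0$ instance of the main lemma, so it is not a strengthening. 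Invoking it for $q=\oneout^{\otimes k}\circ h$ requires the lemma for $q$, but $q$ is not a structural subterm of $f=g\circ h$, and the induction hypothesis is unavailable.

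Both issues dissolve once you use that $\Omega$ is idempotent (which the paper imports from $\CNOT$). For composition, write $(g\circ h)\otimes\Omega=(g\circ h)\otimes(\Omega\circ\Omega)=(g\otimes\Omega)\circ(h\otimes\Omega)=\Omega_{k,m}\circ\Omega_{n,k}=\Omega_{n,m}$, applying the hypothesis to $g$ and $h$ separately; the $\tof$ base case then only needs $\Omega\circ\zeroout=\Omega\circ\oneout$, which your $\notgate$-conjugation idea together with idempotency does deliver. For comparison, the paper's proof avoids induction entirely: it uses \ref{TOF.7} to cut every wire of $f\otimes\Omega$, turning each Toffoli gate into the scalar $\langle 111|\circ\tof\circ|111\rangle=\Omega$ and collapsing the whole circuit directly to $\Omega_{n,m}$.
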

\begin{proof}
By the functionality of the interpretation of $\CNOT$ into $\TOF$ and \cite[Lemma A.2]{CNOT} note that $\Omega$ is idempotent.  Therefore, use \ref{TOF.7} to cut all of the wires around all Toffoli gates.  Notice that when the wires round a Toffoli gate is cut this results in the map:
$$  \<111|\circ\tof\circ|111\>  =  \<11|\circ\cnot\circ|11\>   =  \<1|\circ \Not\circ |1\>  = \<1|\circ |0\> =: \Omega$$
When all wires of a circuit are cut it will therefore take the form:
$$\onein^{\otimes^m} \circ (\oneout\circ \onein)^{\otimes^\ell} \circ \Omega^{\otimes^k}\circ  \oneout^{\otimes^n}
= \onein^{\otimes^m} \circ 1_0^{\otimes^\ell} \circ  \Omega^{\otimes^k} \circ \oneout^{\otimes^n}
 =\onein^{\otimes^m} \circ \Omega \circ \oneout^{\otimes^n}
 =:\Omega_{n,m}
$$
\end{proof}

In fact, there are only two sorts of points, those that are expressible as  $| x_1,\cdots, x_n\>$---and maps of the form $\Omega_{0,n}$ for some $n \in \N$:

\begin{lemma}
\label{lemma:totalOrDegenerate}
	For all $n \in \N$ and $f\in\TOF(0,n)$ implies $f=\Omega_{0,n}$ or $f=|b_1,\cdots, b_n\>$ for some $b_1,\cdots,b_n \in \Z_2$.
\end{lemma}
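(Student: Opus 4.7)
The plan is to proceed by structural induction on the expression generating $f\in\TOF(0,n)$, with Lemma~\ref{lemma:BehavesAsExpected} and Lemma~\ref{lemma:DegenerateSame} as the two main tools.

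For the base cases, the identity $1_0$ is the (empty) basis state and $\onein\colon 0\to 1$ is the basis state $|1\>$; no other generator has domain $0$.

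For the tensor case $f = g_1 \ox g_2$ with $g_i\colon 0\to n_i$, the inductive hypothesis yields each $g_i$ as either a basis state or $\Omega_{0,n_i}$.  If both are basis states, their tensor is the concatenated basis state; if either $g_i=\Omega_{0,n_i}$, then $f$ contains a scalar $\Omega$ factor (since $\Omega_{0,n_i} = \onein^{\otimes n_i}\circ\Omega$), and Lemma~\ref{lemma:DegenerateSame} collapses the result to $\Omega_{0,n}$.

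For the composition case $f = g\circ h$ with $h\colon 0\to k$ and $g\colon k\to n$, apply the inductive hypothesis to $h$.  If $h=\Omega_{0,k}$, then $f = (g\circ\onein^{\otimes k})\circ\Omega$; since $\Omega\colon 0\to 0$ is a scalar, this is the same as $(g\circ\onein^{\otimes k})\ox\Omega$, which equals $\Omega_{0,n}$ by Lemma~\ref{lemma:DegenerateSame}.  If instead $h=|c_1,\ldots,c_k\>$, we carry out a sub-induction on the structure of $g$ applied to this basis state: Toffoli gates are handled by Lemma~\ref{lemma:BehavesAsExpected}; an $\oneout$ applied to a $|1\>$ vanishes by~\ref{TOF.8}; an $\oneout$ applied to a $|0\>=\zeroin$ yields $\oneout\circ\zeroin = \zeroout\circ\onein = \Omega$ (using the definition $\zeroin = \notgate\circ\onein$ together with $(\notgate\circ\onein)^\cnv = \oneout\circ\notgate = \zeroout$), after which Lemma~\ref{lemma:DegenerateSame} once again collapses the circuit to $\Omega_{0,n}$; an interior $\onein$ extends the basis state with a $|1\>$; and tensor and composition sub-cases split and recurse.

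The main obstacle is the bookkeeping in the sub-induction on $g$: once a scalar $\Omega$ factor appears anywhere in the evaluation (either inherited from a degenerate $h$ or newly generated from an $\oneout$ meeting a $|0\>$), one has to argue that it propagates correctly via Lemma~\ref{lemma:DegenerateSame} to collapse the full circuit into $\Omega_{0,n}$; this in turn relies on the standard unit-coherence of the symmetric monoidal structure, which lets one treat scalar composition with $\Omega$ interchangeably with the tensor $\_\ox\Omega$.
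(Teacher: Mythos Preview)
Your proposal is correct and takes essentially the same approach as the paper: both evaluate the circuit on the input ancillae via Lemma~\ref{lemma:BehavesAsExpected} and collapse any appearance of $\Omega$ with Lemma~\ref{lemma:DegenerateSame}. The paper phrases this as an algorithm (pull all $\onein$ gates to the left and all $\oneout$ gates to the right, then evaluate the intermediate $\tof$ gates from left to right until either every $\oneout$ meets a $\onein$ or some $\oneout$ meets a $\zeroin$), whereas you have packaged the same computation as a structural induction with an inner sub-induction on $g$; the content is identical.
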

\begin{proof}
Given any circuit $f \in \TOF(0,n)$, pull all of the $\onein$ gates to the left of the circuit and all of the  $\oneout$ gate to the right.  In the middle, there will only be $\tof$ gates; thus, apply the $\onein$ gates to the $\tof$ gates using Lemma \ref{lemma:BehavesAsExpected}.  This will result in a series of $\onein$ and $\zeroin$ gates on the left.  Repeatedly apply these $\onein$ and $\zeroin$ gates to the $\tof$ gates using Lemma \ref{lemma:BehavesAsExpected} until there are only a series of $\onein$ and $\zeroin$ gates on the left and a series of $\oneout$ gates on the right, and nothing in the middle.  If a $\onein$ and $\oneout$ gate meet, they compose to form the identity and disappear.  If this process can eliminate all of the $\oneout$ gates, then we are done.  Otherwise, if  $\zeroin$ and $\oneout$ gate meet, they compose to form $\Omega$; therefore, by Lemma \ref{lemma:DegenerateSame}, $f = \Omega_{0,n}$.
\end{proof}


\section{Partial injective functions and \texorpdfstring{$\TOF$}{TOF}}
\label{section:thefunctor}


For any restriction category $\X$ and any object $X \in \X$, we may define a restriction-preserving hom-functor $h_X:\X\to\Par$ by:
\begin{description}
	\item[On Objects:] $h_X(Y):=\{f\in\X(X,Y)| \bar f = 1_X\}$
	\item[On Maps:] For each map $f: Y \to Z$, for all $g \in h_X(Y)$:
		\[
		(h_X(f))(g)
		:=
		\begin{cases}
		gf & \text{ if } \bar{gf} = 1_X\\
		\uparrow & \text{ otherwise}
		\end{cases}
		\]	
\end{description}
As $\TOF$ is a restriction category, we may consider the functor  $h_0: \TOF \to \Par$ which evaluates circuits on the total points of $\TOF$.  As $\TOF$ is an inverse category and this functor preserves restriction, we can lift $h_0:\TOF\to \Par$ to a restriction-preserving functor $H_0:\TOF\to\Pinj$.  Clearly the nature of the total points of $\TOF$ ensures that the discrete inverse structure is preserved by $H_0$:

\begin{lemma}
$H_0:\TOF\to\Pinj$ preserves discrete products.
\end{lemma}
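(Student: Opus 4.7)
The plan is to verify that $H_0$ is a strong symmetric monoidal restriction functor sending $\Delta_n$ to the set-theoretic diagonal on $\{0,1\}^n$. First, by Lemma \ref{lemma:totalOrDegenerate} the total morphisms $0 \to n$ in $\TOF$ are precisely the $|x_1, \ldots, x_n\>$ for $(x_1, \ldots, x_n) \in \{0,1\}^n$, the only alternative $\Omega_{0,n}$ being non-total. This yields a canonical bijection $H_0(n) \cong \{0,1\}^n$, placing the image of $H_0$ inside $\FPinj_2 \subseteq \Pinj$. The tensor of total points is $|x_1, \ldots, x_n\> \ox |y_1, \ldots, y_m\> = |x_1, \ldots, x_n, y_1, \ldots, y_m\>$, and this concatenation gives a natural bijection $H_0(n) \times H_0(m) \cong H_0(n \ox m)$; hence $H_0$ is strong symmetric monoidal, with the symmetries $c$ and associators $a$ of $\TOF$ mapped to the corresponding set-theoretic permutations. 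Since $H_0$ preserves restriction and both categories are inverse, it automatically preserves $(\_)^\cnv$, and the tensor is compatible with $(\_)^\cnv$ in the sense required.

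Second, I would show by induction on $n$ that $H_0(\Delta_n)(x_1, \ldots, x_n) = (x_1, \ldots, x_n, x_1, \ldots, x_n)$. For the base case $n = 1$, the map $\Delta_1$ is defined as a $\cnot$ gate targeting a fresh $\zeroin$ ancilla with the input as control; by Lemma \ref{lemma:functorial} this $\cnot$ is the same $\cnot$ inherited from $\CNOT$, namely a Toffoli with a $|1\>$-control fixed on top, so Lemma \ref{lemma:BehavesAsExpected} gives $\cnot \circ |x, 0\> = |x, x\>$, and therefore $\Delta_1 \circ |x\> = |x, x\>$. For the inductive step, the recursive definition combines $\Delta_{n-1}$ on the top $n-1$ wires with $\Delta_1$ on the bottom wire, and the wire layout interleaves the two outputs of the top block around the two outputs of the bottom block, so that on $|x_1, \ldots, x_n\>$ one obtains $|x_1, \ldots, x_{n-1}\> \ox |x_n\> \ox |x_1, \ldots, x_{n-1}\> \ox |x_n\> = |x_1, \ldots, x_n, x_1, \ldots, x_n\>$. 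Hence $H_0(\Delta_n)$ is the standard diagonal on $\{0,1\}^n$.

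The main potential obstacle is reading the wire interleaving in the recursive definition of $\Delta_n$ correctly and confirming that it encodes precisely the shuffle required by axiom \ref{DINV.4}; once this is verified, preservation of the four axioms \ref{DINV.1}--\ref{DINV.4} by $H_0(\Delta)$ follows immediately by functoriality from Proposition \ref{propositoin:TOFDiscreteInverse}, completing the proof.
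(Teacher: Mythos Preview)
Your proposal is correct and follows the same approach as the paper: identify $H_0(n)$ with $\{0,1\}^n$ via the classification of total points, obtain $H_0(n+m)\cong H_0(n)\times H_0(m)$ from concatenation of kets, and verify $H_0(\Delta_n)$ is the set-theoretic diagonal. The paper's proof is a single line asserting exactly these two facts, so your version simply unpacks the justifications (Lemmas \ref{lemma:totalOrDegenerate} and \ref{lemma:BehavesAsExpected}, and the induction on $n$) that the paper leaves implicit.
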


\begin{proof}
$H_0(n) = \{ |b_1...b_n\> | b_i \in \{ 0,1\} \}$ so that $H_0(n+m) \simeq H_0(n) \times H_0(m)$ and $H_0(\Delta) = \Delta$.
\end{proof}

Define $\FPinj_2$ to be the full subcategory of $\Pinj$ (sets and partial isomorphisms) with objects finite powers of the two element set.   By Lemma \ref{lemma:totalOrDegenerate}, the object $n \in \TOF$ under $H_0$ corresponds to the set $\{ |0\>,|1\>\}^n$ in  $\FPinj_2$; therefore, by restricting the co-domain of $H_0$ we obtain a bijective on objects functor $\tilde H_0:\TOF\to\FPinj_2$.  We will prove that this functor is an equivalence of categories.  


\section{A normal form for the idempotents of \texorpdfstring{$\TOF$}{TOF}}
\label{section:NormalForm}


As in \cite{CNOT}, our objective is to reduce the fullness and faithfulness of $\tilde H_0:\TOF\to\FPinj_2$ to its fullness and faithfulness on restriction  idempotents. $\cnot_n$ gates will be used to define a class of circuits which will allow a normal form for the idempotents of $\TOF$ to be established.  It is also worth noting that the following class of circuits corresponds to the canonical form of \cite[Def. 4]{Iwama}:

\begin{definition}
	\label{definition:polynomialForm}
	A circuit $f:n\to n$ is said to be in \textbf{polynomial form} when it is the composition of circuits $f=c_1\cdots c_k$ where each $c_i$ is a $\cnot_j$ gate with control any of the first $n-1$ wires and target the last wire.  
\end{definition}	

For example, the following circuit corresponding to the polynomial $x_1x_2+x_2x_4+x_1x_2+x_1x_2+x_2x_3x_4+x_4$ is in polynomial form:
	$$
	\begin{tikzpicture}
	\begin{pgfonlayer}{nodelayer}
		\node [style=dot] (0) at (0.5000001, 2) {};
		\node [style=dot] (1) at (0.5000001, 1.5) {};
		\node [style=dot] (2) at (1, 1.5) {};
		\node [style=dot] (3) at (1, 0.5000001) {};
		\node [style=dot] (4) at (1.5, 2) {};
		\node [style=dot] (5) at (1.5, 1.5) {};
		\node [style=dot] (6) at (2, 1.5) {};
		\node [style=dot] (7) at (2, 1) {};
		\node [style=dot] (8) at (2, 0.5000001) {};
		\node [style=dot] (9) at (2.5, 0.5000001) {};
		\node [style=oplus] (10) at (0.5000001, -0) {};
		\node [style=oplus] (11) at (1, -0) {};
		\node [style=oplus] (12) at (1.5, -0) {};
		\node [style=oplus] (13) at (2, -0) {};
		\node [style=oplus] (14) at (2.5, -0) {};
		\node [style=nothing] (15) at (3, 0) {};
		\node [style=nothing] (16) at (3, 1) {};
		\node [style=nothing] (17) at (3, 2) {};
		\node [style=nothing] (18) at (3, 1.5) {};
		\node [style=nothing] (19) at (3, 0.5000001) {};
		\node [style=nothing] (20) at (0, -0) {};
		\node [style=nothing] (21) at (0, 1) {};
		\node [style=nothing] (22) at (0, 2) {};
		\node [style=nothing] (23) at (0, 1.5) {};
		\node [style=nothing] (24) at (0, 0.5000001) {};
		\node [style=nothing] (25) at (-0.5, 2) {\Large $x_1$};
		\node [style=nothing] (26) at (-0.5, 1.5) {\Large $x_2$};
		\node [style=nothing] (27) at (-0.5, 1) {\Large $x_3$};
		\node [style=nothing] (28) at (-0.5, 0.5) {\Large $x_4$};
	\end{pgfonlayer}
	\begin{pgfonlayer}{edgelayer}
		\draw (22) to (0);
		\draw (0) to (4);
		\draw (4) to (17);
		\draw (18) to (6);
		\draw (6) to (5);
		\draw (5) to (2);
		\draw (2) to (1);
		\draw (1) to (23);
		\draw (21) to (7);
		\draw (7) to (16);
		\draw (19) to (9);
		\draw (9) to (8);
		\draw (8) to (3);
		\draw (3) to (24);
		\draw (20) to (10);
		\draw (10) to (11);
		\draw (11) to (12);
		\draw (12) to (13);
		\draw (13) to (14);
		\draw (14) to (15);
		\draw (14) to (9);
		\draw (13) to (8);
		\draw (8) to (7);
		\draw (7) to (6);
		\draw (5) to (4);
		\draw (5) to (12);
		\draw (11) to (3);
		\draw (3) to (2);
		\draw (0) to (1);
		\draw (1) to (10);
	\end{pgfonlayer}
\end{tikzpicture}
	$$

Clearly all polynomials (in normal form) can be represented by a circuit in polynomial form, where each monomial corresponds to a $\cnot_n$ gate.  Moreover, this correspondence is bijective, as $\cnot_n$ gates targeting the same wire obviously commute, and $\cnot_n$ gates are self-inverse by Lemma \ref{lemma:PolySelfInverse}. 		
	
	For example, the following identity holds in $\Z_2[x_1,x_2,x_3,x_4]$
	$$ x_1x_2+x_2x_4+x_1x_2+x_2x_3x_4+x_4 =  x_2 x_4+x_2x_3x_4+x_4$$
	and this corresponds to the following identity in $\TOF$:
	$$
	\begin{tikzpicture}
	\begin{pgfonlayer}{nodelayer}
	\node [style=dot] (0) at (0.5000001, 2) {};
	\node [style=dot] (1) at (0.5000001, 1.5) {};
	\node [style=dot] (2) at (1, 1.5) {};
	\node [style=dot] (3) at (1, 0.5000001) {};
	\node [style=dot] (4) at (1.5, 2) {};
	\node [style=dot] (5) at (1.5, 1.5) {};
	\node [style=dot] (6) at (2, 1.5) {};
	\node [style=dot] (7) at (2, 1) {};
	\node [style=dot] (8) at (2, 0.5000001) {};
	\node [style=dot] (9) at (2.5, 0.5000001) {};
	\node [style=oplus] (10) at (0.5000001, -0) {};
	\node [style=oplus] (11) at (1, -0) {};
	\node [style=oplus] (12) at (1.5, -0) {};
	\node [style=oplus] (13) at (2, -0) {};
	\node [style=oplus] (14) at (2.5, -0) {};
	\node [style=nothing] (15) at (3, 0) {};
	\node [style=nothing] (16) at (3, 1) {};
	\node [style=nothing] (17) at (3, 2) {};
	\node [style=nothing] (18) at (3, 1.5) {};
	\node [style=nothing] (19) at (3, 0.5000001) {};
	\node [style=nothing] (20) at (0, -0) {};
	\node [style=nothing] (21) at (0, 1) {};
	\node [style=nothing] (22) at (0, 2) {};
	\node [style=nothing] (23) at (0, 1.5) {};
	\node [style=nothing] (24) at (0, 0.5000001) {};
	\node [style=nothing] (25) at (-0.5, 2) {\Large $x_1$};
	\node [style=nothing] (26) at (-0.5, 1.5) {\Large $x_2$};
	\node [style=nothing] (27) at (-0.5, 1) {\Large $x_3$};
	\node [style=nothing] (28) at (-0.5, 0.5) {\Large $x_4$};
	\end{pgfonlayer}
	\begin{pgfonlayer}{edgelayer}
	\draw (22) to (0);
	\draw (0) to (4);
	\draw (4) to (17);
	\draw (18) to (6);
	\draw (6) to (5);
	\draw (5) to (2);
	\draw (2) to (1);
	\draw (1) to (23);
	\draw (21) to (7);
	\draw (7) to (16);
	\draw (19) to (9);
	\draw (9) to (8);
	\draw (8) to (3);
	\draw (3) to (24);
	\draw (20) to (10);
	\draw (10) to (11);
	\draw (11) to (12);
	\draw (12) to (13);
	\draw (13) to (14);
	\draw (14) to (15);
	\draw (14) to (9);
	\draw (13) to (8);
	\draw (8) to (7);
	\draw (7) to (6);
	\draw (5) to (4);
	\draw (5) to (12);
	\draw (11) to (3);
	\draw (3) to (2);
	\draw (0) to (1);
	\draw (1) to (10);
	\end{pgfonlayer}
	\end{tikzpicture}
	=
	\begin{tikzpicture}
	\begin{pgfonlayer}{nodelayer}
	\node [style=dot] (0) at (0.5, 1.5) {};
	\node [style=dot] (1) at (0.5, 0.5) {};
	\node [style=dot] (2) at (1, 1.5) {};
	\node [style=dot] (3) at (1, 1) {};
	\node [style=dot] (4) at (1, 0.5) {};
	\node [style=dot] (5) at (1.5, 0.5) {};
	\node [style=oplus] (6) at (0.5, 0) {};
	\node [style=oplus] (7) at (1, 0) {};
	\node [style=oplus] (8) at (1.5, 0) {};
	\node [style=nothing] (9) at (2, 0) {};
	\node [style=nothing] (10) at (2, 1) {};
	\node [style=nothing] (11) at (2, 2) {};
	\node [style=nothing] (12) at (2, 1.5) {};
	\node [style=nothing] (13) at (2, 0.5) {};
	\node [style=nothing] (14) at (0, 0) {};
	\node [style=nothing] (15) at (0, 1) {};
	\node [style=nothing] (16) at (0, 2) {};
	\node [style=nothing] (17) at (0, 1.5) {};
	\node [style=nothing] (18) at (0, 0.5) {};
	\node [style=nothing] (19) at (-0.5, 2) {\Large $x_1$};
	\node [style=nothing] (20) at (-0.5, 1.5) {\Large $x_2$};
	\node [style=nothing] (21) at (-0.5, 1) {\Large $x_3$};
	\node [style=nothing] (22) at (-0.5, 0.5) {\Large $x_4$};
	\end{pgfonlayer}
	\begin{pgfonlayer}{edgelayer}
	\draw (12) to (2);
	\draw (15) to (3);
	\draw (3) to (10);
	\draw (13) to (5);
	\draw (5) to (4);
	\draw (4) to (1);
	\draw (1) to (18);
	\draw (7) to (8);
	\draw (8) to (9);
	\draw (8) to (5);
	\draw (7) to (4);
	\draw (4) to (3);
	\draw (3) to (2);
	\draw (6) to (1);
	\draw (1) to (0);
	\draw [style=simple] (16) to (11);
	\draw [style=simple] (2) to (0);
	\draw [style=simple] (0) to (17);
	\draw [style=simple] (7) to (6);
	\draw [style=simple] (6) to (14);
	\end{pgfonlayer}
	\end{tikzpicture}
	$$

Since the restriction idempotents in $\FPinj$ are determined by boolean equations or, equivalently, by the zeros of multivariate polynomials over $\Z_2$, we  use polynomial  form circuits to provide a normal form for the restriction idempotents of $\TOF$.   By restricting the value of the bottom wire to be 0,  the evaluation of this circuit on total points is defined if and only if the corresponding tuple of bits is in the kernel of the function corresponding to polynomial evaluation:

\begin{definition}
\label{definition:idempotentPolynomialForm}
A circuit $e:n\to n$ in $\TOF$ is a \textbf{polyform} if $e=( 1_n\ox\zeroout) \circ q \circ  ( 1_n\ox\zeroin)$ for some $q:n+1\to n+1$ in polynomial form.
\end{definition}

For example the following circuit corresponding to the equation $x_2x_4+x_2x_3x_4+x_4=0$ is a polyform:
$$
\begin{tikzpicture}
\begin{pgfonlayer}{nodelayer}
\node [style=dot] (0) at (0.5, 1.5) {};
\node [style=dot] (1) at (0.5, 0.5) {};
\node [style=dot] (2) at (1, 1.5) {};
\node [style=dot] (3) at (1, 1) {};
\node [style=dot] (4) at (1, 0.5) {};
\node [style=dot] (5) at (1.5, 0.5) {};
\node [style=oplus] (6) at (0.5, 0) {};
\node [style=oplus] (7) at (1, 0) {};
\node [style=oplus] (8) at (1.5, 0) {};
\node [style=nothing] (9) at (2, 1) {};
\node [style=nothing] (10) at (2, 2) {};
\node [style=nothing] (11) at (2, 1.5) {};
\node [style=nothing] (12) at (2, 0.5) {};
\node [style=nothing] (13) at (0, 1) {};
\node [style=nothing] (14) at (0, 2) {};
\node [style=nothing] (15) at (0, 1.5) {};
\node [style=nothing] (16) at (0, 0.5) {};
\node [style=nothing] (17) at (-0.5, 2) {\Large $x_1$};
\node [style=nothing] (18) at (-0.5, 1.5) {\Large $x_2$};
\node [style=nothing] (19) at (-0.5, 1) {\Large $x_3$};
\node [style=nothing] (20) at (-0.5, 0.5) {\Large $x_4$};
\node [style=zeroin] (21) at (0, 0) {};
\node [style=zeroout] (22) at (2, 0) {};
\end{pgfonlayer}
\begin{pgfonlayer}{edgelayer}
\draw (11) to (2);
\draw (13) to (3);
\draw (3) to (9);
\draw (12) to (5);
\draw (5) to (4);
\draw (4) to (1);
\draw (1) to (16);
\draw (7) to (8);
\draw (8) to (5);
\draw (7) to (4);
\draw (4) to (3);
\draw (3) to (2);
\draw (6) to (1);
\draw (1) to (0);
\draw [style=simple] (14) to (10);
\draw [style=simple] (2) to (0);
\draw [style=simple] (0) to (15);
\draw [style=simple] (7) to (6);
\draw [style=simple] (21) to (6);
\draw [style=simple] (22) to (8);
\end{pgfonlayer}
\end{tikzpicture}
$$

Given a polynomial form circuit $q$, let the circuit 
$\begin{tikzpicture}
\begin{pgfonlayer}{nodelayer}
\node [style=map] (0) at (0, -0) {$q$};
\node [style=nothing] (1) at (-0.5, -0) {};
\node [style=nothing] (2) at (0.5, -0) {};
\node [style=nothing] (3) at (-0.5, 0.5) {};
\node [style=nothing] (4) at (0.5, 0.5) {};
\node [style=dot] (5) at (0, 0.5) {};
\end{pgfonlayer}
\begin{pgfonlayer}{edgelayer}
\draw (3) to (5);
\draw (5) to (4);
\draw (2) to (0);
\draw (0) to (1);
\draw (0) to (5);
\end{pgfonlayer}
\end{tikzpicture}$
denote the circuit where every $\cnot_n$ gate of $q$ is additionally controlled by wire with the black dot.

In fact given a polynomial form circuit $q:n+1\to n+1$, by the repeated application of Lemma \ref{remark:Iwama} {\em (v)} for each constituent $\cnot_n$ gate, it is useful to observe:

$$
\begin{tikzpicture}
\begin{pgfonlayer}{nodelayer}
\node [style=map] (0) at (0, -0) {$q$};
\node [style=nothing] (1) at (-0.7500002, -0) {};
\node [style=nothing] (2) at (0.7500002, -0) {};
\node [style=nothing] (3) at (0.7500002, -0.5) {};
\node [style=nothing] (4) at (-0.7500002, -0.5) {};
\node [style=nothing] (5) at (-0.7500002, 0.5) {};
\node [style=nothing] (6) at (0.7500002, 0.5) {};
\node [style=nothing] (7) at (-0.7500002, -1) {};
\node [style=nothing] (8) at (0.7500002, -1) {};
\node [style=dot] (9) at (1.25, 0.5) {};
\node [style=dot] (10) at (1.25, -0.5) {};
\node [style=oplus] (11) at (1.25, -1) {};
\node [style=nothing] (12) at (1.75, 0.5) {};
\node [style=nothing] (13) at (1.75, 0) {};
\node [style=nothing] (14) at (1.75, -1) {};
\node [style=nothing] (15) at (1.75, -0.5) {};
\node [style=nothing] (16) at (-1.25, 0.5) {};
\node [style=nothing] (17) at (-1.25, -0) {};
\node [style=nothing] (18) at (-1.25, -1) {};
\node [style=nothing] (19) at (-1.25, -0.5) {};
\node [style=nothing] (20) at (-1.5, -0) {$n$};
\end{pgfonlayer}
\begin{pgfonlayer}{edgelayer}
\draw (5) to (6);
\draw (1) to (0);
\draw (0) to (2);
\draw [in=-15, out=180, looseness=1.00] (3) to (0);
\draw [in=0, out=-165, looseness=1.00] (0) to (4);
\draw (7) to (8);
\draw (16) to (5);
\draw (17) to (1);
\draw (4) to (19);
\draw (18) to (7);
\draw (6) to (9);
\draw (9) to (12);
\draw (13) to (2);
\draw (3) to (10);
\draw (10) to (15);
\draw (14) to (11);
\draw (11) to (8);
\draw (11) to (10);
\draw (10) to (9);
\end{pgfonlayer}
\end{tikzpicture}
=
\begin{tikzpicture}
\begin{pgfonlayer}{nodelayer}
\node [style=map] (0) at (0, -0) {$q$};
\node [style=nothing] (1) at (-0.7500002, -0) {};
\node [style=nothing] (2) at (0.7500002, -0) {};
\node [style=nothing] (3) at (0.7500002, -0.5) {};
\node [style=nothing] (4) at (-0.7500002, -0.5) {};
\node [style=nothing] (5) at (-0.7500002, 0.5) {};
\node [style=nothing] (6) at (0.7500002, 0.5) {};
\node [style=nothing] (7) at (-0.7500002, -1) {};
\node [style=nothing] (8) at (0.7500002, -1) {};
\node [style=nothing] (9) at (1.25, 0.5) {};
\node [style=nothing] (10) at (1.25, -0) {};
\node [style=nothing] (11) at (1.25, -1) {};
\node [style=nothing] (12) at (1.25, -0.5) {};
\node [style=dot] (13) at (-1.25, -0.5) {};
\node [style=oplus] (14) at (-1.25, -1) {};
\node [style=dot] (15) at (-1.25, 0.5) {};
\node [style=nothing] (16) at (-3.25, -1) {};
\node [style=nothing] (17) at (-3.25, -0) {};
\node [style=map] (18) at (-2.5, -0) {$q$};
\node [style=nothing] (19) at (-1.75, -0) {};
\node [style=nothing] (20) at (-1.75, -1) {};
\node [style=nothing] (21) at (-3.25, -0.5) {};
\node [style=nothing] (22) at (-3.25, 0.5) {};
\node [style=dot] (23) at (-2.5, 0.5) {};
\node [style=nothing] (24) at (-3.75, -1) {};
\node [style=nothing] (25) at (-3.75, -0) {};
\node [style=nothing] (26) at (-3.75, -0.5) {};
\node [style=nothing] (27) at (-3.75, 0.5) {};
\node [style=nothing] (28) at (1.5, -0) {$n$};
\end{pgfonlayer}
\begin{pgfonlayer}{edgelayer}
\draw (5) to (6);
\draw (1) to (0);
\draw (0) to (2);
\draw [in=-15, out=180, looseness=1.00] (3) to (0);
\draw [in=0, out=-165, looseness=1.00] (0) to (4);
\draw (7) to (8);
\draw (10) to (2);
\draw (14) to (13);
\draw (13) to (15);
\draw (6) to (9);
\draw (12) to (3);
\draw (8) to (11);
\draw (15) to (5);
\draw (4) to (13);
\draw (7) to (14);
\draw (17) to (18);
\draw (18) to (19);
\draw [in=-15, out=180, looseness=1.00] (20) to (18);
\draw [in=0, out=-165, looseness=1.00] (18) to (16);
\draw (20) to (14);
\draw (19) to (1);
\draw (18) to (23);
\draw (23) to (15);
\draw (23) to (22);
\draw (21) to (13);
\draw (27) to (22);
\draw (25) to (17);
\draw (21) to (26);
\draw (24) to (16);
\end{pgfonlayer}
\end{tikzpicture}
$$

Our aim is now to show that the idempotents of $\TOF$ are always equivalent to a polyform so that these circuits provide a normal form for the idempotents.   Because some of the steps are subtle, we annotate certain equalities with the corresponding identities used.

\begin{lemma}
\label{lemma:IdempotentPolynomialsAreIdempotent}
Polyforms are idempotents (and therefore restriction idempotents).
\end{lemma}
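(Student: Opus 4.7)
The plan is to show $e \circ e = e$ by a syntactic rewriting in $\TOF$, leveraging the self-inverse property of $q$ together with the controlled-$q$ identity displayed in the remark immediately preceding this lemma. Once $e \circ e = e$ has been established, the parenthetical ``and therefore restriction idempotents'' follows automatically, since in the inverse category $\TOF$ (Proposition~\ref{propositoin:TOFDiscreteInverse}) every idempotent is a restriction idempotent.

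First I would establish that $q \circ q = 1_{n+1}$. Since $q$ is in polynomial form, it decomposes as $q = c_1 \circ c_2 \circ \cdots \circ c_k$ where each $c_i$ is a $\cnot_{m_i}$ gate whose target is the $(n+1)$-th wire and whose controls lie among the first $n$ wires. By Lemma~\ref{remark:Iwama}(iii) these gates commute pairwise (they share a target), and by Lemma~\ref{remark:Iwama}(i) each is self-inverse. Re-pairing the factors of $q \circ q$ using commutativity yields $c_i \circ c_i = 1_{n+1}$ for every $i$, so $q \circ q = 1_{n+1}$.

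Write $a := 1_n \otimes \zeroin$, $b := 1_n \otimes \zeroout$, and let $E := a \circ b$ denote the idempotent on $n+1$ wires that arises from composing $b$ with $a$ in the middle of $e \circ e$. Then
\[
e \circ e \;=\; b \circ q \circ E \circ q \circ a,
\]
and it remains to show that the middle sandwich $q \circ E \circ q$, when flanked by $b$ on the left and $a$ on the right, collapses to $q$ alone. The main step, and the principal obstacle, is to apply the controlled-$q$ identity from the remark above in reverse on $q \circ E \circ q$. That identity, obtained by iterating Iwama's pushing identity~\ref{remark:Iwama}(v) over each constituent $\cnot_{m_i}$ of $q$, lets us trade a pair of $q$'s separated by an ancilla reset in favour of a single $q$ with auxiliary CNOT structure on a fresh ancilla wire. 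After this rewriting, the two occurrences of $q$ become adjacent and cancel via $q \circ q = 1_{n+1}$; the fresh ancilla introduced for the rewriting (justified by $\zeroout \circ \zeroin = 1_0$, from the worked exercise just after Figure~\ref{fig:TOF}) is then removed, leaving precisely $e = b \circ q \circ a$. The subtlety lies in aligning the ancilla introduced by the controlled-$q$ identity with the existing ancilla structure in $e \circ e$, so that the cancellation leaves no residual $\cnot_n$ gates behind; it is this alignment that makes the argument go through cleanly rather than by a tedious induction on the number of constituent $\cnot_{m_i}$ gates in $q$.
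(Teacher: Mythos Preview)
Your approach is essentially the same as the paper's: both argue that $q\circ q = 1_{n+1}$, expand $e\circ e$, use Iwama's pushing identity (Lemma~\ref{remark:Iwama}~\emph{(v)}) to slide one copy of $q$ past the middle structure so that the two $q$'s meet and cancel, and then clean up the residual ancilla. The parenthetical about restriction idempotents is handled identically.

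The one place where your write-up is vaguer than the paper is exactly the step you flag as ``the subtlety'': how to turn the middle ancilla reset $E = (1_n\otimes\zeroin)\circ(1_n\otimes\zeroout)$ into something to which Iwama~\emph{(v)} (or equivalently the controlled-$q$ remark) applies. The paper does not leave this to alignment intuition; it uses \ref{TOF.14} (three $\cnot$'s equal a swap) to replace the pair $\zeroout,\zeroin$ on two separate auxiliary wires by a swap, and then \ref{TOF.2} to kill two of the three resulting $\cnot$'s against the $\zeroin$ ancillae. This leaves a \emph{single} $\cnot$ linking the target wire of the first $q$ to the auxiliary wire, and Iwama~\emph{(v)} then applies verbatim to push $q$ through that $\cnot$, after which $q\circ q$ collapses and a final \ref{TOF.2} removes the auxiliary wire. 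So your plan is correct, but the concrete mechanism for the step you describe as ``alignment'' is \ref{TOF.14} followed by \ref{TOF.2}; without naming these, that step remains an assertion rather than a proof.
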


\begin{proof}
Consider some map $e:=( 1_n\ox\zeroout) \circ q \circ  ( 1_n\ox\zeroin)$ a polyform, as above, then:
\begin{align*}
\begin{tikzpicture}
	\begin{pgfonlayer}{nodelayer}
		\node [style=nothing] (0) at (0, -0) {};
		\node [style=map] (1) at (1, -0) {$e$};
		\node [style=map] (2) at (2, -0) {$e$};
		\node [style=nothing] (3) at (3, -0) {};
	\end{pgfonlayer}
	\begin{pgfonlayer}{edgelayer}
		\draw [style=simple] (0) to (1);
		\draw (1) to (2);
		\draw (2) to (3);
	\end{pgfonlayer}
\end{tikzpicture}
&=
\begin{tikzpicture}
	\begin{pgfonlayer}{nodelayer}
		\node [style=map] (0) at (2, -0) {$q$};
		\node [style=map] (1) at (5, -0) {$q$};
		\node [style=nothing] (2) at (0, -0) {};
		\node [style=nothing] (3) at (7, -0) {};
		\node [style=zeroin] (4) at (1, -0.5) {};
		\node [style=zeroout] (5) at (3, -0.5) {};
		\node [style=zeroin] (6) at (4, -0.5) {};
		\node [style=zeroout] (7) at (6, -0.5) {};
	\end{pgfonlayer}
	\begin{pgfonlayer}{edgelayer}
		\draw (2) to (0);
		\draw (0) to (1);
		\draw (1) to (3);
		\draw [in=-165, out=0, looseness=1.00] (4) to (0);
		\draw [in=180, out=-15, looseness=1.00] (0) to (5);
		\draw [in=-165, out=0, looseness=1.00] (6) to (1);
		\draw [in=-15, out=180, looseness=1.00] (7) to (1);
	\end{pgfonlayer}
\end{tikzpicture}
=
\begin{tikzpicture}
	\begin{pgfonlayer}{nodelayer}
		\node [style=map] (0) at (2, -0) {$q$};
		\node [style=map] (1) at (5, -0) {$q$};
		\node [style=nothing] (2) at (0, -0) {};
		\node [style=nothing] (3) at (7, -0) {};
		\node [style=zeroin] (4) at (0.75, -0.5) {};
		\node [style=zeroout] (5) at (6.25, -0.5) {};
		\node [style=zeroin] (6) at (3, -0.5000001) {};
		\node [style=zeroout] (7) at (4, -0.5000001) {};
		\node [style=nothing] (8) at (3, -0.9999999) {};
		\node [style=nothing] (9) at (4, -0.9999999) {};
	\end{pgfonlayer}
	\begin{pgfonlayer}{edgelayer}
		\draw (2) to (0);
		\draw (0) to (1);
		\draw (1) to (3);
		\draw [in=-165, out=0, looseness=1.00] (4) to (0);
		\draw [in=-15, out=180, looseness=1.00] (5) to (1);
		\draw [in=180, out=-15, looseness=1.00] (0) to (8);
		\draw [in=180, out=0, looseness=1.00] (8) to (7);
		\draw [in=0, out=-165, looseness=1.00] (1) to (9);
		\draw [in=0, out=180, looseness=1.00] (9) to (6);
	\end{pgfonlayer}
\end{tikzpicture}
\myeq{\ref{TOF.14}}
\begin{tikzpicture}
	\begin{pgfonlayer}{nodelayer}
		\node [style=map] (0) at (2, -0) {$q$};
		\node [style=map] (1) at (6, -0) {$q$};
		\node [style=nothing] (2) at (0.25, -0) {};
		\node [style=nothing] (3) at (7.5, -0) {};
		\node [style=zeroin] (4) at (1, -0.5) {};
		\node [style=zeroout] (5) at (7, -0.5) {};
		\node [style=nothing] (6) at (3, -0.9999999) {};
		\node [style=nothing] (7) at (5, -0.9999999) {};
		\node [style=zeroout] (8) at (5, -0.5000001) {};
		\node [style=zeroin] (9) at (3, -0.5000001) {};
		\node [style=dot] (10) at (3.5, -0.5000001) {};
		\node [style=dot] (11) at (4.5, -0.5000001) {};
		\node [style=dot] (12) at (4, -0.9999999) {};
		\node [style=oplus] (13) at (4, -0.5000001) {};
		\node [style=oplus] (14) at (3.5, -0.9999999) {};
		\node [style=oplus] (15) at (4.5, -0.9999999) {};
	\end{pgfonlayer}
	\begin{pgfonlayer}{edgelayer}
		\draw (2) to (0);
		\draw (0) to (1);
		\draw (1) to (3);
		\draw [in=-165, out=0, looseness=1.00] (4) to (0);
		\draw [in=-15, out=180, looseness=1.00] (5) to (1);
		\draw [in=180, out=-15, looseness=1.00] (0) to (6);
		\draw [in=0, out=-165, looseness=1.00] (1) to (7);
		\draw (6) to (14);
		\draw (14) to (12);
		\draw (12) to (15);
		\draw (15) to (7);
		\draw (8) to (11);
		\draw (11) to (13);
		\draw (13) to (10);
		\draw (10) to (9);
		\draw (14) to (10);
		\draw (13) to (12);
		\draw (15) to (11);
	\end{pgfonlayer}
\end{tikzpicture}\\
&\myeq{\ref{TOF.2}}
\begin{tikzpicture}
	\begin{pgfonlayer}{nodelayer}
		\node [style=map] (0) at (1.75, -0) {$q$};
		\node [style=map] (1) at (3.75, -0) {$q$};
		\node [style=nothing] (2) at (0, -0) {};
		\node [style=nothing] (3) at (5.5, -0) {};
		\node [style=zeroin] (4) at (0.75, -0.5) {};
		\node [style=zeroout] (5) at (4.75, -0.5) {};
		\node [style=dot] (6) at (2.75, -0.5) {};
		\node [style=oplus] (7) at (2.75, -1) {};
		\node [style=zeroin] (8) at (2, -1) {};
		\node [style=zeroout] (9) at (3.5, -1) {};
	\end{pgfonlayer}
	\begin{pgfonlayer}{edgelayer}
		\draw (2) to (0);
		\draw (0) to (1);
		\draw (1) to (3);
		\draw [in=-165, out=0, looseness=1.00] (4) to (0);
		\draw [in=-15, out=180, looseness=1.00] (5) to (1);
		\draw [style=simple, in=180, out=-15, looseness=1.25] (0) to (6);
		\draw [style=simple, in=0, out=-165, looseness=1.25] (1) to (6);
		\draw [style=simple] (6) to (7);
		\draw [style=simple] (8) to (7);
		\draw [style=simple] (7) to (9);
	\end{pgfonlayer}
\end{tikzpicture}
\myeq{Lem. \ref{remark:Iwama} {\em (v)}}
\begin{tikzpicture}
	\begin{pgfonlayer}{nodelayer}
		\node [style=map] (0) at (4.25, -0) {$q$};
		\node [style=nothing] (1) at (0, -0) {};
		\node [style=nothing] (2) at (6, -0) {};
		\node [style=zeroin] (3) at (0.2500001, -0.5000001) {};
		\node [style=zeroout] (4) at (5.25, -0.5000001) {};
		\node [style=zeroout] (5) at (4, -1) {};
		\node [style=map] (6) at (3.5, -0) {$q$};
		\node [style=map] (7) at (1.5, -0) {$q$};
		\node [style=zeroin] (8) at (0.2500001, -1) {};
		\node [style=oplus] (9) at (2.75, -1) {};
		\node [style=dot] (10) at (2.75, -0.5000001) {};
	\end{pgfonlayer}
	\begin{pgfonlayer}{edgelayer}
		\draw (0) to (2);
		\draw [in=-15, out=180, looseness=1.00] (4) to (0);
		\draw [style=simple] (10) to (9);
		\draw (6) to (0);
		\draw [in=0, out=-165, looseness=1.00] (6) to (10);
		\draw (10) to (3);
		\draw [in=180, out=-15, looseness=1.00] (7) to (9);
		\draw (9) to (5);
		\draw [in=0, out=-165, looseness=1.00] (7) to (8);
		\draw (1) to (7);
		\draw (7) to (6);
	\end{pgfonlayer}
\end{tikzpicture}
\myeq{Lem. \ref{lemma:PolySelfInverse}}
\begin{tikzpicture}
	\begin{pgfonlayer}{nodelayer}
		\node [style=nothing] (0) at (0, -0) {};
		\node [style=nothing] (1) at (3.5, -0) {};
		\node [style=zeroin] (2) at (0.2500001, -0.5000001) {};
		\node [style=zeroout] (3) at (3.25, -0.5000001) {};
		\node [style=zeroout] (4) at (3.25, -0.9999999) {};
		\node [style=map] (5) at (1.5, -0) {$q$};
		\node [style=zeroin] (6) at (0.2500001, -1) {};
		\node [style=oplus] (7) at (2.75, -1) {};
		\node [style=dot] (8) at (2.75, -0.5000001) {};
	\end{pgfonlayer}
	\begin{pgfonlayer}{edgelayer}
		\draw [style=simple] (8) to (7);
		\draw (8) to (2);
		\draw [in=180, out=-15, looseness=1.00] (5) to (7);
		\draw (7) to (4);
		\draw [in=0, out=-165, looseness=1.00] (5) to (6);
		\draw (0) to (5);
		\draw (5) to (1);
		\draw (3) to (8);
	\end{pgfonlayer}
\end{tikzpicture}\\
&\myeq{\ref{TOF.2}}
\begin{tikzpicture}
	\begin{pgfonlayer}{nodelayer}
		\node [style=nothing] (0) at (0.5000001, -0) {};
		\node [style=nothing] (1) at (2.5, -0) {};
		\node [style=zeroout] (2) at (2.25, -0.5000001) {};
		\node [style=map] (3) at (1.5, -0) {$q$};
		\node [style=zeroin] (4) at (0.7500001, -0.5000001) {};
	\end{pgfonlayer}
	\begin{pgfonlayer}{edgelayer}
		\draw [in=0, out=-165, looseness=1.00] (3) to (4);
		\draw (0) to (3);
		\draw (3) to (1);
		\draw [in=180, out=-15, looseness=1.00] (3) to (2);
	\end{pgfonlayer}
\end{tikzpicture}
=
\begin{tikzpicture}
	\begin{pgfonlayer}{nodelayer}
		\node [style=nothing] (0) at (0, -0) {};
		\node [style=map] (1) at (1, -0) {$e$};
		\node [style=nothing] (2) at (2, -0) {};
	\end{pgfonlayer}
	\begin{pgfonlayer}{edgelayer}
		\draw [style=simple] (0) to (1);
		\draw (1) to (2);
	\end{pgfonlayer}
\end{tikzpicture}
\end{align*}

\end{proof}

We prove that the converse also holds using 3 lemmas:
\begin{proposition}
\label{lemma:IdempotentsAreIdempotentPolynomials}
All idempotents in $\TOF$ are equivalent to a circuit which is a polyform.
\end{proposition}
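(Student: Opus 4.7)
The plan is to mirror the structure of the analogous proof for $\CNOT$ by establishing three reduction lemmas that progressively transform an arbitrary idempotent $e: n \to n$ in $\TOF$ into a polyform.

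First, I would put $e$ into a \emph{prenex form} in which all input ancillaries ($\onein$ and $\zeroin$) appear on the left, all output ancillaries ($\oneout$ and $\zeroout$) on the right, and the middle consists only of $\tof$ and $\cnot$ gates acting on the $n$ primary wires together with some number $k$ of scratch wires. This is achieved by sliding ancillaries past gates using the symmetric monoidal structure, absorbing redundant controls with axioms \ref{TOF.1} and \ref{TOF.2}, and using \ref{TOF.7} together with Lemma \ref{lemma:DegenerateSame} to handle any degenerate scratch wires that arise.

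Second, I would reduce the middle layer to polynomial form. The zipper of Proposition \ref{lemma:zipper} groups adjacent Toffoli gates into $\cnot_k$ gates, and then Iwama's identities (Lemma \ref{remark:Iwama}), in particular part \textit{(v)}, allow one to commute $\cnot_k$ gates past each other at the cost of introducing new trailing gates, until all gates sharing a given target wire are collected. After this step the circuit has the shape: ancillary inputs, several polynomial-form blocks (one per distinct target wire), ancillary outputs.

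Third, and this is the crux, I would exploit the equation $e \circ e = e$ together with axioms \ref{TOF.6} and \ref{TOF.7} and Iwama's identities \textit{(ii)} and \textit{(vi)} to eliminate every scratch wire but one. Iwama \textit{(ii)} allows $\cnot_k$ gates targeting a $\zeroin$-initialised scratch wire to be pruned whenever that scratch wire is used as a control; \ref{TOF.6} collapses a $\cnot_k$ at the scratch input with a compensating $\cnot_k$ at the scratch output, thereby exchanging the scratch control for a primary control; and \textit{(vi)} lets one re-target gates from one scratch wire onto another. Iterating these moves across all scratch wires leaves a single scratch wire carrying the final target, yielding a polyform; the idempotence of $e$ is what ensures that the remaining primary wires are untouched apart from the outer $(1_n \otimes \zeroout)$--$(1_n \otimes \zeroin)$ wrapping.

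The main obstacle will be this third step, where the combinatorics of scratch elimination becomes delicate: a scratch wire may simultaneously serve as the target of many $\cnot_k$ gates (encoding a polynomial in the primary wires) and as a control for gates targeting other wires. Substituting out such a scratch wire forces one to insert new $\cnot_k$ gates encoding the substituted polynomial, and one must verify via Lemma \ref{remark:Iwama} that this rewriting terminates in polynomial form. The degenerate idempotent $e = \Omega_{n,n}$ is handled separately using Lemma \ref{lemma:DegenerateSame} and corresponds to the polyform given by the constant polynomial $1$.
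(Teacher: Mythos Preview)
Your approach diverges fundamentally from the paper's and has a real gap in its third step. The paper does not attempt to normalise an arbitrary idempotent directly. Instead it uses the fact, already established, that $\TOF$ is an inverse category: every idempotent $e$ is a restriction idempotent and hence equals $cc^\cnv$ for some circuit $c$. Writing $c$ as a composite of generators $g_1\cdots g_k$ (each tensored with identities), one has $e = g_1\cdots g_k\, g_k^\cnv\cdots g_1^\cnv$, so it suffices to check that if $f$ is a polyform then $(1\otimes g\otimes 1)\,f\,(1\otimes g^\cnv\otimes 1)$ is again (equivalent to) a polyform for each generator $g\in\{\tof,\onein,\oneout\}$. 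That is the content of the three lemmas following the proposition, and the base case is the identity, which is trivially a polyform.

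Your plan never invokes $e=cc^\cnv$, and without it you are forced into the ``scratch elimination'' step, which is where the argument breaks down. Saying you will ``exploit $e\circ e=e$'' is not a mechanism: composing $e$ with itself in prenex form doubles the number of scratch wires rather than reducing them, and Iwama's identities \textit{(ii)} and \textit{(vi)} only let you prune or re-route individual controls, not collapse an arbitrary configuration of scratch wires to one. You have not said what invariant decreases, nor why the substitutions terminate in polynomial form, nor how idempotence actually constrains the middle layer. The paper sidesteps all of this combinatorics by never allowing an uncontrolled scratch configuration to arise: at each stage of the structural induction there is exactly one new wire to absorb, and the three generator lemmas show explicitly how to do it.
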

Given a polyform, $f$, by structural induction, it suffices to show that sandwiching $f$ with a generating gate, $g$, to obtain $(1 \otimes g \otimes 1) f( 1 \otimes g^\cnv \otimes 1)$, always results in a circuit which has a polyform.  There are 3 cases, one for each generating circuit:

\begin{lemma}
\label{lemma:lemma:IdempotentsAreIdempotentPolynomialsTof}
If $f:n\to n$ is a polyform then sandwiching $f$ by a $\tof$ gate results in a circuit with a polyform.
\end{lemma}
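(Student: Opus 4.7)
The plan is to write $f$ explicitly as a polyform and show directly that conjugating by a Toffoli preserves the polyform structure by pushing the Toffoli through the interior controlled-not gates. Write $f = (1_n \otimes \zeroout) \circ q \circ (1_n \otimes \zeroin)$ with $q = g_1 g_2 \cdots g_k$, where each $g_i = \oa_{n+1}^{X_i}$ is a $\cnot_{|X_i|}$ gate targeting the ancilla wire with controls $X_i \subseteq \{1, \ldots, n\}$. Let the Toffoli act on wires $a, b, c \in \{1, \ldots, n\}$ with controls $\{a, b\}$ and target $c$. By functoriality of the tensor product,
\[
(\tof \otimes 1) \circ f \circ (\tof \otimes 1)
=
(1_n \otimes \zeroout) \circ \bigl[(\tof \otimes 1) \circ q \circ (\tof \otimes 1)\bigr] \circ (1_n \otimes \zeroin),
\]
so it suffices to show that the bracketed expression is itself in polynomial form.

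First I would push the leftmost Toffoli past each $g_i$ in succession using Lemma \ref{remark:Iwama}. When $c \notin X_i$, part (iv) applies (taking $x = c$, $X = \{a,b\}$, $y = n+1$, $Y = X_i$), so $\tof$ commutes with $g_i$. When $c \in X_i$, writing $X_i = \{c\} \sqcup (X_i \setminus \{c\})$, part (v) yields
\[
\oa_c^{\{a,b\}} \circ \oa_{n+1}^{X_i}
=
\oa_{n+1}^{\{a,b\} \cup (X_i \setminus \{c\})} \circ \oa_{n+1}^{X_i} \circ \oa_c^{\{a,b\}},
\]
so the Toffoli passes through and leaves behind an extra controlled-not gate that again targets the ancilla wire, with controls drawn from $\{1, \ldots, n\}$. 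Iterating through all $g_i$ produces $(\tof \otimes 1) \circ q = q' \circ (\tof \otimes 1)$ with $q'$ in polynomial form. The two trailing copies of $\tof \otimes 1$ then cancel by Lemma \ref{lemma:PolySelfInverse}, giving
\[
(\tof \otimes 1) \circ f \circ (\tof \otimes 1) = (1_n \otimes \zeroout) \circ q' \circ (1_n \otimes \zeroin),
\]
which is a polyform. The permutation invariance of the controls of a $\cnot_n$ gate (Corollary \ref{cor:transpose}) together with the symmetric monoidal structure covers the case where the Toffoli acts on an arbitrary triple of wires rather than a canonical adjacent one.

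The main obstacle is really bookkeeping: the critical structural invariant is that every controlled-not gate produced by the commutation process still targets the ancilla wire $n+1$. This holds automatically in Iwama's identity (v) because the generated gate inherits its target from the original gate being pushed past, and all of our interior gates target the ancilla. No controlled-not gate whose target lies in $\{1, \ldots, n\}$ is ever introduced, so the resulting circuit $q'$ stays in polynomial form and the whole expression remains a polyform.
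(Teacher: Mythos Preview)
Your proof is correct and follows essentially the same approach as the paper: push the Toffoli through the constituent gates of $q$ using the Iwama identities, then cancel the two copies using self-inverseness. The paper's proof is a one-line sketch invoking only Lemma~\ref{remark:Iwama}~\emph{(v)}; you have filled in the case analysis (using part~\emph{(iv)} when the Toffoli target is not among the controls of $g_i$, and part~\emph{(v)} when it is) and made explicit the invariant that every residual gate targets the ancilla wire, which is exactly what the terse version leaves to the reader.
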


\begin{proof}
By Lemma \ref{remark:Iwama} {\em (v)}, push $g$ through $f$ until it meets the other $g^\cnv= g$ and then annihilate both generalized controlled not gates by Lemma \ref{lemma:PolySelfInverse}.  This circuit is still a polyform.
\end{proof}

\begin{lemma}
\label{lemma:lemma:IdempotentsAreIdempotentPolynomialsOnein} 
If $f:n\to n$ is a polyform then sandwiching $f$ by a $\onein$ gate results in a circuit equivalent to a polyform.
\end{lemma}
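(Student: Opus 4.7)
The plan is to reduce the sandwich to the tensor $f \otimes (\onein \circ \oneout)$ and then exhibit this tensor as a polyform of size $n+1$. By Corollary~\ref{cor:transpose}, we may apply a symmetry so that the $\onein$ sandwich lands on the last wire; the sandwich then reads $(1_n \otimes \onein) \circ f \circ (1_n \otimes \oneout)$. By bifunctoriality of the tensor in the symmetric monoidal category $\TOF$, this rewrites as $f \otimes (\onein \circ \oneout)$, which is $f$ tensored with the partial identity $|1\>\<1|$ on a single wire.

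Writing $f = (1_n \otimes \zeroout) \circ q \circ (1_n \otimes \zeroin)$ with $q$ in polynomial form realising a polynomial $p(x_1, \ldots, x_n)$, I will construct a polynomial form circuit $q'': n+2 \to n+2$ realising
\[ p'(x_1, \ldots, x_{n+1}) \;=\; 1 + x_{n+1}(1 + p(x_1, \ldots, x_n)). \]
This $p'$ vanishes precisely when $x_{n+1} = 1$ and $p(x_1, \ldots, x_n) = 0$, which is exactly the domain of $f \otimes (\onein \circ \oneout)$. Concretely, $q''$ will consist of (a) a $\Not$ targeting the ancilla wire $n+2$, (b) a $\cnot$ from wire $n+1$ to the ancilla, and (c) for each gate $\cnot_k$ appearing in $q$ with control set $C$ and the ancilla as target, a new gate $\cnot_{k+1}$ with controls $C \cup \{n+1\}$ and the same target. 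The candidate polyform is then $(1_{n+1} \otimes \zeroout) \circ q'' \circ (1_{n+1} \otimes \zeroin)$.

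The main step, and main obstacle, is verifying this equality at the axiomatic level. I plan to do so by iteratively applying \ref{TOF.6} --- which the paper flags as the key axiom for normal forms of restriction idempotents --- together with Iwama's commutation identities (Lemma~\ref{remark:Iwama}) to merge the $\onein/\oneout$ restriction on wire $n+1$ with the internal $\zeroin/\zeroout$ ancilla of $f$ into the single ancilla of $q''$. Each gate of $q$ is promoted one at a time to include wire $n+1$ in its control set; together with the extra $\Not$ and $\cnot_{n+1 \to \text{anc}}$, these promoted gates realise precisely the polynomial $p'$. Any duplicate gates arising during this manipulation cancel by Lemma~\ref{lemma:PolySelfInverse}, since $\cnot_n$ gates are self-inverse. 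The delicate part is the bookkeeping required to guarantee that the cumulative contributions to the ancilla wire really do assemble into $p'$; this is underwritten by the bijection between polynomial form circuits and $\Z_2$-polynomials described earlier in the section.
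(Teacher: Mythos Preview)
You have misidentified the sandwich. In the paper's convention (diagrammatic composition for concatenation), the $\onein$-sandwich is $(1_j \otimes \onein \otimes 1_k)\, f\, (1_j \otimes \oneout \otimes 1_k)$ with $j+1+k=n$: first $\onein:0\to 1$ creates a wire, then $f:n\to n$ acts \emph{on that wire together with the others}, then $\oneout:1\to 0$ destroys it. The result is an $(n-1)\to(n-1)$ map. What you wrote, $(1_n\otimes\onein)\circ f\circ(1_n\otimes\oneout)$, is the $\oneout$-sandwich: it places the extra wire \emph{outside} $f$, yields an $(n+1)\to(n+1)$ map, and is the subject of Lemma~\ref{lemma:lemma:IdempotentsAreIdempotentPolynomialsOneout}, not the present lemma. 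Your bifunctoriality step only goes through precisely because the wire never touches $f$; for the correct sandwich that wire is one of $f$'s inputs, so no such factorisation $f\otimes(\text{something})$ is available.

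Semantically the correct sandwich is ``plug $x_j=1$ into the polyform's polynomial $p$'', producing the restriction idempotent for $p(x_1,\dots,1,\dots,x_n)=0$ on $n-1$ wires, not the conjunction $x_{n+1}=1\wedge p=0$ on $n+1$ wires that your $p'$ encodes. The paper's proof accordingly works by pushing the $\onein$ through each $\cnot_k$ gate of $q$: when the created wire is among the controls, a generalisation of \ref{TOF.1} (proved as the displayed inductive step, using the zipper realisation of $\cnot_k$) strips that control, turning $\cnot_k$ into $\cnot_{k-1}$; when it is not a control the gate commutes past trivially. After all gates the $\onein$ meets the terminal $\oneout$ and they cancel by \ref{TOF.8}, leaving a polyform on $n-1$ wires. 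Your construction with \ref{TOF.6} and an auxiliary $\Not$/$\cnot$ is aimed at the wrong target and would not establish the lemma as stated.
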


\begin{proof}
It suffices to observe for the inductive step that:
		\begin{align*}

		\end{align*}
\end{proof}

We have now established with Lemma \ref{lemma:IdempotentPolynomialsAreIdempotent} and Proposition \ref{lemma:IdempotentsAreIdempotentPolynomials}:

\begin{proposition}
\label{proposition:IdempotentNormalForm}
Polyforms are a normal form for the idempotents in $\TOF$. 
\end{proposition}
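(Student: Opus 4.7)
The plan is to observe that this proposition follows immediately by assembling the two preceding results. Lemma \ref{lemma:IdempotentPolynomialsAreIdempotent} establishes the first direction: every polyform is indeed a restriction idempotent, with the key step being the use of \ref{TOF.14} to ``unfold'' the doubled polynomial-form body and then Iwama's identity, Lemma \ref{remark:Iwama}~(v), together with Lemma \ref{lemma:PolySelfInverse} to cancel the resulting paired gates. Conversely, Proposition \ref{lemma:IdempotentsAreIdempotentPolynomials} establishes that every restriction idempotent in $\TOF$ is provably equal to some polyform.

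The verification of Proposition \ref{lemma:IdempotentsAreIdempotentPolynomials} itself, which is the main substantive work, proceeds by structural induction on the circuit presenting the idempotent. I would take the base case to be the identity circuit (the trivial polyform with empty polynomial). The inductive step considers sandwiching an already-polyform circuit between a generating gate $g$ and its partial inverse $g^\cnv$; since $\TOF$ is generated by $\tof$, $\onein$ and $\oneout$, there are exactly three cases to handle, which are dealt with respectively by Lemmas \ref{lemma:lemma:IdempotentsAreIdempotentPolynomialsTof}, \ref{lemma:lemma:IdempotentsAreIdempotentPolynomialsOnein} and \ref{lemma:lemma:IdempotentsAreIdempotentPolynomialsOneout}.

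To justify calling this a \emph{normal form}, I would also appeal to the bijection between circuits in polynomial form and polynomials in $\Z_2[x_1,\ldots,x_n]$ remarked on earlier: $\cnot_n$ gates targeting the same wire commute among themselves and are self-inverse by Lemma \ref{lemma:PolySelfInverse}, so circuits in polynomial form encode formal polynomials faithfully. Consequently a polyform $(1_n\ox\zeroout)\circ q\circ(1_n\ox\zeroin)$ is defined on a total point $|x_1,\ldots,x_n\>$ precisely when the polynomial encoded by $q$ vanishes at $(x_1,\ldots,x_n)$, so its restriction image is exactly the zero set of the underlying polynomial function; this pins down a canonical choice of representative.

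The main obstacle in the overall arc is the $\onein$/$\oneout$ sandwich (Lemma \ref{lemma:lemma:IdempotentsAreIdempotentPolynomialsOneout}), where one introduces a helper wire via \ref{TOF.14}, repeatedly pushes controlled-not gates through the duplicated body using Iwama's identity \ref{remark:Iwama}~(v), and then collapses the two copies of the polynomial-form body using Lemma \ref{lemma:PolySelfInverse}. For the present proposition, however, that machinery is already in place, so the proof reduces to citing Lemma \ref{lemma:IdempotentPolynomialsAreIdempotent} together with Proposition \ref{lemma:IdempotentsAreIdempotentPolynomials}.
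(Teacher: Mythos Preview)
Your proposal is correct and follows exactly the paper's approach: the proposition is stated immediately after the sentence ``We have now established with Lemma~\ref{lemma:IdempotentPolynomialsAreIdempotent} and Proposition~\ref{lemma:IdempotentsAreIdempotentPolynomials}:'', so the proof in the paper is nothing more than citing those two results. Your additional remarks unpacking the inductive structure of Proposition~\ref{lemma:IdempotentsAreIdempotentPolynomials} and justifying the word ``normal'' via the bijection with $\Z_2$-polynomials are accurate elaborations of material already established earlier in the paper, but go beyond what the paper itself includes at this point.
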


This implies the desired result:

\begin{corollary}
\label{proposition:H0FFRestIdems}
\label{proposition:IdempotentEquivalence}
$\tilde H_0:\TOF\to\FPinj_2$ is full and faithful on restriction idempotents.
\end{corollary}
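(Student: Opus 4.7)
The plan is to leverage Proposition \ref{proposition:IdempotentNormalForm} and analyze $\tilde H_0$ directly on polyforms. First I would compute $\tilde H_0$ of a polyform $e = (1_n\otimes\zeroout) \circ q \circ (1_n\otimes\zeroin)$ with $q:n+1 \to n+1$ in polynomial form. The circuit $q$ reads as a multilinear polynomial $p \in \Z_2[x_1, \ldots, x_n]$ whose monomials correspond to the control-sets of the constituent $\cnot_j$-gates; using Lemma \ref{lemma:BehavesAsExpected} inductively along $q$ one gets $q \circ |x_1, \ldots, x_n, 0\rangle = |x_1, \ldots, x_n, p(x)\rangle$, so that $\tilde H_0(e)$ is the partial identity on the zero set $Z_p := \{x \in \{0,1\}^n : p(x) = 0\}$.

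Second I would verify the bijection between polyforms on $n$ wires (modulo equality in $\TOF$) and multilinear polynomials in $\Z_2[x_1, \ldots, x_n]$. The forward map is $e \mapsto p$ as above; the inverse builds a polynomial-form circuit from one $\cnot_j$-gate per monomial. The two maps are well-defined and mutually inverse using Lemma \ref{remark:Iwama}\emph{(iii)--(iv)} (commutativity of $\cnot_j$-gates with common target wire, and permutability of disjoint-target gates) together with the self-inverseness from Lemma \ref{lemma:PolySelfInverse}. Since a multilinear polynomial over $\Z_2$ is uniquely determined by its values on $\{0,1\}^n$, distinct polynomials have distinct zero sets, so $\tilde H_0$ is injective on polyforms.

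With these pieces in place, fullness on restriction idempotents goes as follows: given the partial identity on $S \subseteq \{0,1\}^n$ in $\FPinj_2$, I would exhibit the multilinear polynomial
\[
p_S(x) = \sum_{y \in \{0,1\}^n \setminus S} \prod_{i=1}^n (1 + x_i + y_i),
\]
whose zero set is exactly $S$, since $\prod_i (1 + x_i + y_i) = 1$ iff $x = y$ over $\Z_2$. The corresponding polyform is a restriction idempotent of $\TOF$ by Lemma \ref{lemma:IdempotentPolynomialsAreIdempotent}, and it is sent to the prescribed partial identity by $\tilde H_0$. For faithfulness, given restriction idempotents $e_1, e_2$ in $\TOF(n,n)$ with $\tilde H_0(e_1) = \tilde H_0(e_2)$, Proposition \ref{proposition:IdempotentNormalForm} provides polyform representatives $e_1', e_2'$ with $e_i = e_i'$ in $\TOF$; injectivity of $\tilde H_0$ on polyforms forces the associated polynomials, and hence the polyforms, to coincide, giving $e_1 = e_2$. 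I expect the main subtlety to be the thorough verification of the polyform-to-polynomial bijection: the ``obvious'' commutativity noted following Definition \ref{definition:polynomialForm} has to be extracted carefully from Lemma \ref{remark:Iwama} and the $\TOF$ axioms, but once this is in hand the remaining ingredients are standard polynomial algebra over $\Z_2$.
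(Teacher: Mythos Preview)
Your proposal is correct and matches the paper's (largely implicit) argument: the paper simply states that the corollary follows from Proposition~\ref{proposition:IdempotentNormalForm}, relying on the polyform--polynomial correspondence already noted after Definition~\ref{definition:polynomialForm}, and you have spelled out exactly those details. One small remark: your appeal to Lemma~\ref{remark:Iwama}\emph{(iv)} is unnecessary, since all $\cnot_j$ gates in a polynomial-form circuit share the same target wire, so \emph{(iii)} alone suffices for the commutativity you need.
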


\section{Full and faithfullness of \texorpdfstring{$\tilde H_0:\TOF\to\FPinj_2$}{the functor from TOF to FPInj2}}
\label{section:Equivalence}

We follow the approach of \cite{CNOT} to prove the fullness of $\tilde H_0:\TOF\to\FPinj_2$.  First we show that we can ``simulate'' all the total maps in $\FPinj_2$ with extra outputs:

\begin{lemma}
\label{lemma:ImageMap}
For every total map $f\in\FPinj_2(n,m)$, there is some $g\in\TOF$ such that $\tilde H_0(g)= \< 1_{\Z_2^n},  f \>$.
\end{lemma}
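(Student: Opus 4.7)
The plan is to realize $\langle 1_{\Z_2^n}, f\rangle$ directly as a polynomial form circuit augmented with $m$ output ancillary wires.

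First I would decompose $f$ into coordinates $f = (f_1,\ldots,f_m)\colon \Z_2^n \to \Z_2^m$ and express each $f_j\colon \Z_2^n \to \Z_2$ via its algebraic normal form as a polynomial in $\Z_2[x_1,\ldots,x_n]$:
$$ f_j(x_1,\ldots,x_n) = \sum_{S \in \mathcal{S}_j} \prod_{i \in S} x_i,$$
for some collection $\mathcal{S}_j \subseteq \mathcal{P}(\{1,\ldots,n\})$. Such a representation always exists and is unique, since the $2^n$ monomials $\prod_{i \in S} x_i$ are linearly independent over $\Z_2$ and span $\Z_2^{\Z_2^n}$.

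Next I would construct $g\colon n \to n+m$ by placing $m$ ancillary $\zeroin$ gates on the right of the $n$ input wires, and then, for every $j \in \{1,\ldots,m\}$ and every $S \in \mathcal{S}_j$, composing in a $\cnot_{|S|}$ gate whose control wires are the wires of $S$ (from among the first $n$) and whose target is the $(n+j)$-th wire. Corollary~\ref{cor:transpose} allows the control wires to be placed in any positions; Lemma~\ref{remark:Iwama}~{\em (iv)} ensures that the ordering of these $\cnot$-gates is immaterial, since they all act on disjoint output targets (different $j$) or share the same target (same $j$). This produces a well-defined circuit $g$ built entirely from $\zeroin$ gates and $\cnot_k$ gates targeting ancillary wires.

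Finally I would verify that $\tilde H_0(g) = \langle 1_{\Z_2^n}, f\rangle$ by direct evaluation on total points. Given $|b_1,\ldots,b_n\rangle$, first the ancillary bits contribute $|0,\ldots,0\rangle$ on the last $m$ wires by definition of $\zeroin$. By induction on the definition of $\cnot_k$ (base case: Lemma~\ref{lemma:BehavesAsExpected}; inductive step: using the ancillary-wire unfolding of $\cnot_{n+1}$ in Definition~\ref{definition:Generalizedcnot}), each $\cnot_k$ gate with controls $S$ and target $t$ sends $|b_1,\ldots,b_{n+m}\rangle$ to $|b_1,\ldots,b_t + \prod_{i\in S} b_i,\ldots\rangle$, leaving all other wires unchanged. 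Iterating this over all monomials of $f_j$ accumulates $\sum_{S \in \mathcal{S}_j} \prod_{i \in S} b_i = f_j(b_1,\ldots,b_n)$ on the $(n+j)$-th wire, while the first $n$ wires remain the input. Thus $g \circ |b_1,\ldots,b_n\rangle = |b_1,\ldots,b_n, f_1(b),\ldots,f_m(b)\rangle$, which is exactly $\langle 1_{\Z_2^n}, f\rangle$ applied to $|b_1,\ldots,b_n\rangle$. There is no real obstacle here: the construction is algorithmic and the verification reduces to the already-established behaviour of $\cnot_k$ on total points.
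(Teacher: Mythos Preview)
Your proposal is correct and follows essentially the same approach as the paper: decompose $f$ into coordinate functions $f_j$, express each as a $\Z_2$-polynomial, and realize each via a polynomial-form circuit acting on an ancillary $\zeroin$ wire. The paper organizes the construction inductively (building $P_{i+1}$ from $P_i$ by adjoining one new output wire at a time), whereas you place all $m$ ancillary wires first and then all the $\cnot$ gates; this is only a cosmetic difference, and your more explicit verification on points is a welcome elaboration.
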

\begin{proof}
Consider a total $f \in \FPinj_2(n,m)$.  For any $i$ such that $1\leq i \leq m$, observe that $f\pi_i$ corresponds to a polynomial in $\Z[x_1,\cdots, x_n]$ and thus, there is a circuit $g_i:n+1\to n+1$ in polynomial form such that $\tilde H_0(h_i) = \< 1_{\Z_2^n},  f\pi_i\>$ where $h_i:=g_i \circ (1_n\otimes |0\>)$.

Now, inductively define the circuit $P_i$ for all $i$ such that $0 \leq i \leq m$, such that: 	$P_0=1_n$ and for every $i$ such that $0\leq i < m$, $P_{i+1}:=(P_{i}\ox 1_1)\circ  h_{i+1}$.

Then $\tilde H_0(P_m) = \< 1_{\Z_2^n} , f\pi_1 , \cdots ,  f\pi_m \> = \< 1_n, f\> $.
\end{proof}

Then we recall a technical observation from \cite{CNOT}:

\begin{lemma}[ {\cite[Lemma C.21]{CNOT}} ]
\label{lemma:fullCopy}
Let $F:\X \to \Y$ be an inverse product preserving functor between discrete inverse categories. Let $f$ be a partial isomorphism in $\mathbb{Y}$. If $\la \bar{f}, f \ra := \Delta(\bar f \otimes {f})$ and $\la \bar{f^\cnv}, f^\cnv \ra:=  \Delta(\bar{f^\cnv} \otimes {f^\cnv})$ are in the image of $F$, then $f$ and $f^\cnv$ are also in the image of $F$.
\end{lemma}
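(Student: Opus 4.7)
The plan is to manufacture an explicit preimage for $f$ out of the given preimages for $\la \bar f, f\ra$ and $\la \bar{f^\cnv}, f^\cnv\ra$, using the symmetry and the converse, both of which $F$ preserves as part of its preservation of the discrete inverse structure. First I pick $g,h$ in $\X$ with $F(g)=\la \bar f, f\ra$ and $F(h)=\la \bar{f^\cnv}, f^\cnv\ra$, and form the candidate $g\, c\, h^\cnv$ in $\X$ (using the symmetry on the appropriate objects). Because $F$ is symmetric monoidal and preserves $(\_)^\cnv$, applying $F$ to this composite yields $\la \bar f, f\ra\, c_{A,B}\, \la \bar{f^\cnv}, f^\cnv\ra^\cnv$ in $\Y$.

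The heart of the proof then reduces to the identity, valid in any discrete inverse category and for any partial isomorphism $f:A\to B$,
$$\la \bar f, f\ra\, c_{A,B}\, \la \bar{f^\cnv}, f^\cnv\ra^\cnv \;=\; f.$$
To verify it I expand the brackets to
$$\Delta_A\,(\bar f \otimes f)\, c_{A,B}\, (\bar{f^\cnv} \otimes f)\, \Delta_B^\cnv,$$
and then proceed in five steps: (i) apply naturality of the symmetry to rewrite $(\bar f \otimes f)\, c_{A,B}$ as $c_{A,A}\,(f \otimes \bar f)$; (ii) use the restriction identities $\bar f f = f$ (from \ref{R.1}) and $f\,\bar{f^\cnv}=f$ (from \ref{R.4} with $g=f^\cnv$, using $ff^\cnv=\bar f$) to collapse $(f \otimes \bar f)(\bar{f^\cnv} \otimes f)$ to $f \otimes f$; (iii) invoke cocommutativity \ref{DINV.1} to delete $c_{A,A}$ after $\Delta_A$; (iv) apply naturality of $\Delta$ to replace $\Delta_A\,(f \otimes f)$ by $f\,\Delta_B$; and (v) cancel $\Delta_B\,\Delta_B^\cnv = 1_B$ by separability, itself a consequence of $\Delta$ being total.

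Once the identity is established, $g\, c\, h^\cnv$ is a preimage of $f$; and since $F$ preserves $(\_)^\cnv$, its converse $(g\,c\,h^\cnv)^\cnv$ is a preimage of $f^\cnv$. I expect the main obstacle to be step (iv): it needs $\Delta$ to be natural with respect to \emph{all} morphisms rather than only total ones, so that the equation $\Delta_A(f \otimes f)=f\,\Delta_B$ applies to the partial isomorphism $f$. The definition of inverse products used here stipulates exactly that $\Delta$ is a natural transformation in the ordinary sense, sidestepping the issue; had $\Delta$ only been natural on the total subcategory, one would instead have to migrate the restriction idempotents past $\Delta_B^\cnv$ by means of the Frobenius axiom \ref{DINV.3}, which is possible but distinctly more delicate.
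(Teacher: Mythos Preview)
The paper does not actually prove this lemma: it is merely \emph{recalled} as \cite[Lemma C.21]{CNOT} and used as a black box in the fullness argument, so there is no in-paper proof to compare against.

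That said, your argument is sound. The candidate preimage $g\,c\,h^\cnv$ is well-typed, and each of the five reduction steps is valid in any discrete inverse category: naturality of the symmetry gives (i); for (ii), $f\,\bar{f^\cnv}=f f^\cnv f=f$ by \ref{INV.2} (your route via \ref{R.4} is equally good); cocommutativity \ref{DINV.1} gives (iii); naturality of $\Delta$ gives (iv), and this is indeed stipulated for arbitrary maps in the definition of inverse products, so your caveat is not needed here; and totality of $\Delta$ gives (v). Your observation that $F$ preserves the symmetry (as a symmetric monoidal functor) and the converse (any restriction functor between inverse categories preserves partial inverses, since these are unique) justifies pushing $F$ through the construction. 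So the proposal is a correct and self-contained proof of a result the paper only cites.
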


\oneraggedpage

This allows us to show:

\begin{proposition}
\label{proposition:H0Full}
$\tilde H_0:\TOF\to\FPinj_2$ is full.
\end{proposition}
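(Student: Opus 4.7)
The plan is to apply Lemma \ref{lemma:fullCopy} to $\tilde H_0:\TOF\to\FPinj_2$. Since $\tilde H_0$ is a discrete-inverse-product-preserving functor between discrete inverse categories, it will suffice to show that for any $f\in\FPinj_2(n,m)$ the graph maps $\langle \bar f, f\rangle:n\to n+m$ and $\langle \bar{f^\cnv}, f^\cnv\rangle:m\to m+n$ both lie in the image of $\tilde H_0$. Because $\FPinj_2$ is an inverse category, the second case follows by running the first with $f^\cnv$ in place of $f$, so I would focus on exhibiting a $\TOF$-circuit whose image under $\tilde H_0$ is $\langle \bar f, f\rangle$.

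First I would factor $\langle \bar f, f\rangle$ inside $\FPinj_2$ as a total map post-composed with a restriction idempotent. The idea is to pick any total function $\tilde f:\Z_2^n\to\Z_2^m$ agreeing with $f$ on $\mathrm{dom}(f)$ (for instance, setting $\tilde f(x)=0$ for every $x\notin\mathrm{dom}(f)$). The key observation is that even though $\tilde f$ need not itself be a partial isomorphism, the graph $\langle 1_n,\tilde f\rangle:n\to n+m$ is automatically a total injection --- its first component is the identity --- hence a total map in $\FPinj_2$, and unfolding the definitions gives
\[
\langle 1_n,\tilde f\rangle \circ \bar f \;=\; \langle \bar f, f\rangle.
\]

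Next I would realize each factor as $\tilde H_0$ of a $\TOF$-circuit. Inspecting the proof of Lemma \ref{lemma:ImageMap} one sees that for any total function $g:\Z_2^n\to\Z_2^m$ it constructs a circuit realizing $\langle 1_n,g\rangle$ by stacking polynomial-form circuits for the coordinate polynomials $g\pi_i$; injectivity of $g$ is never used. Applying this with $g=\tilde f$ yields a circuit $P$ with $\tilde H_0(P)=\langle 1_n,\tilde f\rangle$. Meanwhile, Corollary \ref{proposition:IdempotentEquivalence} supplies a circuit $E$ whose image under $\tilde H_0$ is the restriction idempotent $\bar f$. Functoriality then gives $\tilde H_0(P\circ E)=\langle \bar f, f\rangle$, and applying the same recipe to $f^\cnv$ handles the other graph map. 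Lemma \ref{lemma:fullCopy} concludes with a preimage of $f$ itself, establishing fullness.

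The hard part is essentially not in this proof: the genuine work has already been discharged in Section \ref{section:NormalForm} (the polyform normal form behind Corollary \ref{proposition:IdempotentEquivalence}) and in the polynomial-form construction inside Lemma \ref{lemma:ImageMap}. The only mild subtlety worth flagging is that although Lemma \ref{lemma:ImageMap} is phrased for total maps of $\FPinj_2$ --- that is, total injections $\Z_2^n\hookrightarrow\Z_2^m$ --- its proof applies verbatim to any total function $\tilde f:\Z_2^n\to\Z_2^m$, because the resulting graph $\langle 1_n,\tilde f\rangle$ is total-injective independently of $\tilde f$'s own behaviour.
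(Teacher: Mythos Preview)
Your proposal is correct and follows essentially the same route as the paper: both invoke Lemma~\ref{lemma:fullCopy}, factor $\langle \bar f,f\rangle$ as a restriction idempotent (handled by Corollary~\ref{proposition:IdempotentEquivalence}) composed with the graph of a total extension (handled by the construction in Lemma~\ref{lemma:ImageMap}), and then compose. You are simply more explicit than the paper about the total extension $\tilde f$ and about the fact that the proof of Lemma~\ref{lemma:ImageMap} never uses injectivity of its input; the paper leaves this implicit in the inequality $\tilde H_0(h)\geq\langle 1,g\rangle$.
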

\begin{proof}
Consider a map $f \in \FPinj_2(\Z_2^n,\Z_2^m)$ for arbitrary $n,m\in \Z$.  By Lemma \ref{lemma:fullCopy}, if we can simulate $\< \bar f,f\>$ and $\< \bar{f^\cnv}, f^\cnv \>$, we can simulate $f$. However, partial maps of the form $\< \bar g,g \>: X \to X \times Y$ are restrictions of a total map, unless $Y$ is empty,.   The case of $Y$ being empty does not occur in $\FPinj_2$ as the empty set is not an object.  Thus, all such maps are restrictions of total maps.  Therefore, by Lemma  \ref{lemma:ImageMap} there is some $h \in \TOF$ such that $\tilde H_0(h) \geq \< 1, g \>$ for any $g$.  However, by Proposition \ref{proposition:IdempotentEquivalence} $\tilde H_0:\TOF\to\FPinj_2$ is full on restriction idempotents, so there is some $e = \bar e$ such that $H_0(e)=\bar g$ and so $H_0(e h) = \<\bar g,g\>$ which  completes the proof.
\end{proof}

The faithfulness of $\tilde H_0$ is reduced to its faithfulness on restriction idempotents.  This uses another technical result from \cite{CNOT} which we recall:

\begin{lemma}[ {\cite[Lemma C.25]{CNOT}} ]
\label{lemma:faithfullemma}
A restriction functor $F:\mathbb{X}\to \mathbb{Y}$ between discrete inverse categories, which preserves inverse products, is faithful if and only if it is faithful on restriction idempotents.
\end{lemma}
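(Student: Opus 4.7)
The ``only if'' direction is trivial, since restriction idempotents are in particular morphisms. For the nontrivial ``if'' direction the plan is to encode, for each parallel pair $f,g:A\to B$, the question $f=g?$ as an equation between restriction idempotents. The tool is the \emph{meet} of parallel morphisms, available in any discrete inverse category: define
$$f\sqcap g \;:=\; \Delta_A\,(f\otimes g)\,\Delta_B^\cnv \;:\; A\to B.$$

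First I would verify three general facts about $\sqcap$.
\emph{(i)} $F$ preserves $\sqcap$: indeed $F$ preserves $\otimes$, $\Delta$, and partial inverses (the last follows from preservation of restriction together with uniqueness of partial inverses), so $F(f\sqcap g)=F(f)\sqcap F(g)$.
\emph{(ii)} $f\sqcap f=f$: this reduces, via the ``copyability'' identity $\Delta_A(f\otimes f)=f\,\Delta_B$ for partial isomorphisms (a consequence of the uniform-copying axiom \ref{DINV.4} combined with the semi-Frobenius law \ref{DINV.3}), to $f\,\Delta_B\,\Delta_B^\cnv=f$, which holds because $\Delta_B$ is total.
\emph{(iii)} $f\sqcap g\leq f$ and $f\sqcap g\leq g$ in the canonical partial order $h\leq k\iff h=k\,\bar h$; equivalently $f\sqcap g=f\,\overline{f\sqcap g}=g\,\overline{f\sqcap g}$. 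This is a diagrammatic calculation that uses \ref{DINV.3} to rewrite $(f\otimes g)\Delta_B^\cnv$ in a form exhibiting an explicit factorisation through $f$ (respectively $g$).

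With these in hand the proof runs as follows. Assume $F$ is faithful on restriction idempotents and that $F(f)=F(g)$. Applying restriction gives $F(\bar f)=\overline{F(f)}=\overline{F(g)}=F(\bar g)$, so faithfulness on restriction idempotents yields $\bar f=\bar g$. By \emph{(i)} and \emph{(ii)},
$$F(f\sqcap g)=F(f)\sqcap F(g)=F(f)\sqcap F(f)=F(f),$$
so applying restriction once more gives $F(\overline{f\sqcap g})=F(\bar f)$; faithfulness on restriction idempotents then forces $\overline{f\sqcap g}=\bar f$. Combining this with \emph{(iii)} yields
$$f\sqcap g \;=\; f\,\overline{f\sqcap g} \;=\; f\,\bar f \;=\; f,$$
and symmetrically $f\sqcap g=g$, whence $f=g$, proving faithfulness.

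The main obstacle is \emph{(iii)}: verifying that $f\sqcap g$ is a lower bound for $f$ and $g$ in the canonical order is not formal but requires \ref{DINV.3} in a genuinely non-trivial way---essentially the standard graphical calculation that, for a special commutative Frobenius structure, the ``meet'' map obtained by copying, pairing, and then multiplying is idempotent-compatible with each of its factors. Once this routine but delicate calculation is dispatched, everything else is bookkeeping.
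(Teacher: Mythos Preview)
The paper does not prove this lemma; it is quoted verbatim from \cite[Lemma C.25]{CNOT}, so there is no in-paper argument to compare against directly. Your approach via the meet $f\sqcap g:=\Delta_A(f\otimes g)\Delta_B^\cnv$ is correct and is in fact the standard one for discrete inverse categories (going back to Giles' thesis \cite{Giles}), so it is almost certainly what the cited proof does as well.

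Two minor remarks. First, the identity $\Delta_A(f\otimes f)=f\,\Delta_B$ needed for step~(ii) is simply the \emph{naturality} of $\Delta$, which in this paper is part of the definition of a discrete inverse category; you do not need to invoke \ref{DINV.3} or \ref{DINV.4} for it. Second, watch the composition convention: the paper uses diagrammatic order throughout, so the partial order should be written $h\leq k\iff \bar h\,k=h$, and your concluding chain should read $f\sqcap g=\overline{f\sqcap g}\,f=\bar f\,f=f$. The mathematical content of your argument is unaffected, and your identification of step~(iii) as the one place where the Frobenius law does real work is exactly right.
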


As $\tilde H_0$ is faithful on idempotents and preserves inverse products, this gives:

\begin{proposition}
	\label{proposition:H0Faithful}
$\tilde H_0:\TOF\to\FPinj_2$ is faithful.
\end{proposition}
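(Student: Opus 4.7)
The plan is to apply Lemma \ref{lemma:faithfullemma} directly, since it asserts exactly that a restriction functor between discrete inverse categories which preserves inverse products is faithful whenever it is faithful on restriction idempotents. The proof therefore reduces to verifying its three hypotheses for $\tilde H_0$.

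First I would check that both $\TOF$ and $\FPinj_2$ are discrete inverse categories. For $\TOF$ this is Proposition \ref{propositoin:TOFDiscreteInverse}. For $\FPinj_2$, the discrete inverse structure is inherited from $\Pinj$ because the objects of $\FPinj_2$ (finite powers of the two-element set) are closed under the Cartesian product, which is the inverse product in $\Pinj$, and the diagonal $\Delta$ restricts accordingly.

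Second, I would verify that $\tilde H_0$ is a restriction functor which preserves inverse products. Restriction-preservation is built into the construction of $H_0$ as a restriction-preserving hom-functor, and the reduction of codomain from $\Pinj$ to $\FPinj_2$, justified by Lemma \ref{lemma:totalOrDegenerate}, clearly does not disturb this. Preservation of inverse products was already recorded in the lemma stating that $H_0(n+m)\simeq H_0(n)\times H_0(m)$ and $H_0(\Delta)=\Delta$; this carries over verbatim to $\tilde H_0$ since we have merely restricted the codomain to a full subcategory containing the images of these canonical maps.

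Third, faithfulness on restriction idempotents is exactly Corollary \ref{proposition:H0FFRestIdems}. With all three hypotheses in place, Lemma \ref{lemma:faithfullemma} yields the result. No step here is a genuine obstacle: all the substantive work has already been done in Section \ref{section:NormalForm}, where the polyform normal form of Proposition \ref{proposition:IdempotentNormalForm} was constructed and used to obtain faithfulness on restriction idempotents. The present proposition is, by design, a one-line application of the imported general lemma from \cite{CNOT}, and together with Proposition \ref{proposition:H0Full} it completes the proof that $\tilde H_0$ is an equivalence of categories.
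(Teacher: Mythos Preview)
Your proposal is correct and follows essentially the same approach as the paper: the paper's proof is precisely the one-line application of Lemma~\ref{lemma:faithfullemma}, invoking faithfulness on restriction idempotents (Corollary~\ref{proposition:H0FFRestIdems}) together with the fact that $\tilde H_0$ preserves inverse products. You are simply a bit more explicit about checking the hypotheses than the paper is, but the argument is the same.
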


By Propositions  \ref{proposition:H0Full} and \ref{proposition:H0Faithful}, we may conclude:

\begin{theorem}
\label{theorem:H0Equiv}
$\TOF$ is discrete-inverse-equivalent to $\FPinj_2$.
\end{theorem}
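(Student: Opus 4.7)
The plan is to simply assemble the pieces already in place. Recall that $\tilde H_0 : \TOF \to \FPinj_2$ was constructed in Section \ref{section:thefunctor} by restricting the codomain of $H_0 : \TOF \to \Pinj$ to the full subcategory $\FPinj_2$. By Lemma \ref{lemma:totalOrDegenerate}, the total points of an object $n \in \TOF$ are precisely the tuples $|b_1,\dots,b_n\>$ with $b_i \in \Z_2$, so $\tilde H_0(n) = \{0,1\}^n$; this makes $\tilde H_0$ bijective on objects. Thus, by Propositions \ref{proposition:H0Full} and \ref{proposition:H0Faithful}, $\tilde H_0$ is a bijective-on-objects, full, and faithful functor, hence an equivalence (in fact an isomorphism) of categories.

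Next I would check that this equivalence preserves the discrete inverse structure. Recall from Proposition \ref{propositoin:TOFDiscreteInverse} that $\TOF$ is a discrete inverse category with dagger $(\_)^\cnv$ and diagonal $\Delta$ defined via the $\cnot$ gate and ancillary bits. The functor $H_0$ was already observed in Section \ref{section:thefunctor} to preserve restriction and to preserve discrete inverse products, and these properties pass directly to $\tilde H_0$ upon restricting the codomain. Since $\tilde H_0$ preserves restriction and $\FPinj_2$ is an inverse category, $\tilde H_0$ automatically preserves partial inverses, hence the dagger.

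Putting these observations together yields that $\tilde H_0$ is a discrete-inverse-preserving equivalence between $\TOF$ and $\FPinj_2$, which is the statement of Theorem \ref{theorem:H0Equiv}. There is no serious obstacle remaining at this stage: the difficult content has already been discharged in the normal-form argument of Section \ref{section:NormalForm} (via Proposition \ref{proposition:IdempotentNormalForm} and Corollary \ref{proposition:H0FFRestIdems}), which underpins both fullness (Proposition \ref{proposition:H0Full}, via Lemma \ref{lemma:fullCopy}) and faithfulness (Proposition \ref{proposition:H0Faithful}, via Lemma \ref{lemma:faithfullemma}). The final theorem is therefore a one-line corollary combining these.
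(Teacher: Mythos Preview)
Your proposal is correct and matches the paper's own proof, which simply concludes the theorem from Propositions~\ref{proposition:H0Full} and~\ref{proposition:H0Faithful}. You have added some explicit remarks about bijectivity on objects and preservation of the discrete inverse structure that the paper leaves implicit, but the argument is the same.
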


\bibliographystyle{eptcs}

\bibliography{TOF}

\appendix

\section{Proof of Lemma \ref{lemma:functorial}}
\label{appendix:functorial}

	\begin{enumerate}[label={\bfseries [CNOT.\arabic*]:}, wide = 0pt, leftmargin = 2em]
	\item
	This follows immediately from \ref{TOF.14}.
	
	\item
	{\hspace*{1cm}
		$

	\end{align*}
	
\end{proof}

Now, we prove Proposition \ref{lemma:zipper}

\begin{proof}
	
	The proof is by a simultaneous induction for claims {(\em i)} and {(\em ii)} on the number of control wires, $n$, to unzip and the number of control wires being pushed past a Toffoli gate.   Claim {(\em iii)} follows as a consequence.
	
	For the induction, suppose that $n,k \geq 2$.  The cases when $n=1$ or $n=2$ follow as a consequence.
	
	\begin{itemize}
		\item		
		The base cases of claim {\em (i)} follows by the definition of the $\cnot_n$ gate.  	The base cases of claim {\em (ii)} Lemma is precisely Lemma \ref{lemma:pushingGates}{\em (ii)}.
		
		\item		
		For $n\geq 2$: assume that for any $m>n$ and all $k \leq n$, we can unzip a $\cnot_m$ gate $k$ levels down that and we can push a $\cnot_n$ gate past a $\tof$ gate as follows:
		
		$$

	\end{align*}
	
	\item
	The proof follows easily from Axioms \ref{CNOT.3}, \ref{CNOT.4}, \ref{CNOT.5} and \ref{CNOT.6}.
	
	\item
	The proof follows easily from Axioms \ref{CNOT.4} and \ref{CNOT.5}.

	\item
	The proof is by induction on the number of control wires of the second gate.
	
	The base cases are provided by Proposition \ref{lemma:zipper} {\em (ii)} and {\em (iii)}.
	
	Suppose now that the claim holds for all cases in which the second gate has no more than $n$ control wires.  Consider when the second gate has $n+1$ control wires.  
	Using Corollary \ref{cor:transpose} it suffices to consider the case where $y$ is the bottom wire and $x$ is the second bottom wire:
	
	\begin{align*}
	\oa_x^X \oa_y^{Y\sqcup \{x\}} 
	& = \oa_x^X\rhd_z\oa_{z}^ Y\oa_y^{\{z,x\}} \oa_{z}^Y \lhd_z & \hspace*{-48pt} \text{Use Prop. \ref{lemma:zipper} {\em (i)} to unzip }\oa_y^{Y\sqcup \{x\}} \text{ to the bottom}\\
	&=  \rhd_z\oa_{z}^Y \oa_x^X \oa_y^{\{z,x\}}\oa_{z}^Y \lhd_z\\
	&=  \rhd_z\oa_z^Y \oa_y^{X\sqcup\{z\}} \oa_y^{\{z,x\}} \oa_x^X \oa_z^Y \lhd_z & \text{Prop. \ref{lemma:zipper} {\em (ii)}, {\em (iii)}}\\
	&=  \rhd_z\oa_z^ Y \oa_y^{X\sqcup\{z\}} \oa_y^{\{z,x\}} \oa_z^Y \oa_x^X \lhd_z\\
	&=  \rhd_z \oa_z^Y \oa_y^{X\sqcup\{z\}} \oa_z^Y \oa_y^{\{z,x\}}  \oa_y^{Y\sqcup\{x\}} \oa_x^X \lhd_z & \text{Prop. \ref{lemma:zipper} {\em (ii)}, {\em (iii)}}\\
	&=  \rhd_z\oa_z^Y \oa_z^Y \oa_y^{X\sqcup\{z\}} \oa_y^{X\cup Y} \oa_y^{\{z,x\}} \oa_y^{Y\sqcup\{x\}} \oa_x^X \lhd_z  & \text{Ind. Hyp.} \\
	&=  \rhd_z \oa_y^{X\sqcup\{z\}} \oa_y^{X\cup Y} \oa_y^{\{z,x\}} \oa_y^{Y\sqcup\{x\}} \oa_x^X \lhd_z\\
	&=  \rhd_z \oa_y^{X\sqcup\{z\}} \oa_y^{X\cup Y} \oa_y^{\{z,x\}} \lhd_z \oa_y^{Y\sqcup\{x\}} \oa_x^X \\
	&=  \rhd_z \oa_y^{X\cup Y} \lhd_z\oa_y^{Y\sqcup\{x\}} \oa_x^X & \text{Lem. \ref{remark:Iwama} {\em (ii)}} \\
	&=  \rhd_z\lhd_z \oa_y^{X\cup Y} \oa_y^{Y\sqcup\{x\}} \oa_x^X \\
	&=\oa_y^{X\cup Y} \oa_y^{Y\sqcup\{x\}} \oa_x^X \\
	\end{align*}
	
	\item Using Corollary \ref{cor:transpose}, it suffices to observe:
	\begin{align*}
	\begin{tikzpicture}
	\begin{pgfonlayer}{nodelayer}
	\node [style=nothing] (0) at (0, -0) {};
	\node [style=nothing] (1) at (0, -1) {};
	\node [style=zeroin] (2) at (0.2500002, -0.5) {};
	\node [style=nothing] (3) at (1.75, -0) {};
	\node [style=nothing] (4) at (1.75, -0.5) {};
	\node [style=nothing] (5) at (1.75, -1) {};
	\node [style=dot] (6) at (0.7499998, -0) {};
	\node [style=oplus] (7) at (0.7499998, -0.5) {};
	\node [style=dot] (8) at (1.25, -0) {};
	\node [style=dot] (9) at (1.25, -1) {};
	\node [style=nothing] (10) at (1.75, -2.25) {};
	\node [style=nothing] (11) at (0, -2.25) {};
	\node [style=oplus] (12) at (1.25, -2.25) {};
	\node [style=dot] (13) at (1.25, -1.75) {};
	\node [style=nothing] (14) at (1.75, -1.75) {};
	\node [style=nothing] (15) at (0, -1.75) {};
	\end{pgfonlayer}
	\begin{pgfonlayer}{edgelayer}
	\draw (0) to (6);
	\draw (6) to (8);
	\draw (8) to (3);
	\draw (4) to (7);
	\draw (7) to (2);
	\draw (7) to (6);
	\draw (8) to (9);
	\draw (5) to (9);
	\draw (9) to (1);
	\draw (11) to (12);
	\draw (12) to (10);
	\draw (14) to (13);
	\draw (13) to (15);
	\draw (12) to (13);
	\draw[style=densely dotted] (9) to (13);
	\end{pgfonlayer}
	\end{tikzpicture}
	=
	\begin{tikzpicture}
	\begin{pgfonlayer}{nodelayer}
	\node [style=nothing] (0) at (0, -0) {};
	\node [style=nothing] (1) at (0, -1) {};
	\node [style=zeroin] (2) at (0.2500002, -0.5) {};
	\node [style=nothing] (3) at (1.75, -0) {};
	\node [style=nothing] (4) at (1.75, -0.5) {};
	\node [style=nothing] (5) at (1.75, -1) {};
	\node [style=dot] (6) at (1.25, -0) {};
	\node [style=oplus] (7) at (1.25, -0.5) {};
	\node [style=dot] (8) at (0.75, -0) {};
	\node [style=dot] (9) at (0.75, -1) {};
	\node [style=nothing] (10) at (1.75, -2.25) {};
	\node [style=nothing] (11) at (0, -2.25) {};
	\node [style=oplus] (12) at (0.75, -2.25) {};
	\node [style=dot] (13) at (0.75, -1.75) {};
	\node [style=nothing] (14) at (1.75, -1.75) {};
	\node [style=nothing] (15) at (0, -1.75) {};
	\end{pgfonlayer}
	\begin{pgfonlayer}{edgelayer}
	\draw (0) to (6);
	\draw (6) to (8);
	\draw (8) to (3);
	\draw (4) to (7);
	\draw (7) to (2);
	\draw (7) to (6);
	\draw (8) to (9);
	\draw (5) to (9);
	\draw (9) to (1);
	\draw (11) to (12);
	\draw (12) to (10);
	\draw (14) to (13);
	\draw (13) to (15);
	\draw (12) to (13);
	\draw [style={densely dotted}] (9) to (13);
	\end{pgfonlayer}
	\end{tikzpicture}
	=
	\begin{tikzpicture}
	\begin{pgfonlayer}{nodelayer}
	\node [style=zeroin] (0) at (0.25, -0) {};
	\node [style=nothing] (1) at (2.75, 0.5) {};
	\node [style=nothing] (2) at (2.75, -0) {};
	\node [style=dot] (3) at (2.25, 0.5) {};
	\node [style=oplus] (4) at (2.25, -0) {};
	\node [style=dot] (5) at (1.25, -1) {};
	\node [style=nothing] (6) at (2.75, -2.25) {};
	\node [style=nothing] (7) at (0, -2.25) {};
	\node [style=oplus] (8) at (1.25, -2.25) {};
	\node [style=dot] (9) at (1.25, -1.75) {};
	\node [style=nothing] (10) at (2.75, -1.75) {};
	\node [style=nothing] (11) at (0, -1.75) {};
	\node [style=nothing] (12) at (0, -0.5) {};
	\node [style=nothing] (13) at (2.75, -0.5) {};
	\node [style=oplus] (14) at (0.75, -1) {};
	\node [style=oplus] (15) at (1.75, -1) {};
	\node [style=dot] (16) at (0.75, -0.5) {};
	\node [style=dot] (17) at (1.75, -0.5) {};
	\node [style=dot] (18) at (0.75, 0.5) {};
	\node [style=nothing] (19) at (0, 0.5) {};
	\node [style=dot] (20) at (1.75, 0.5) {};
	\node [style=zeroin] (21) at (0.25, -1) {};
	\node [style=zeroout] (22) at (2.25, -1) {};
	\end{pgfonlayer}
	\begin{pgfonlayer}{edgelayer}
	\draw (2) to (4);
	\draw (4) to (0);
	\draw (4) to (3);
	\draw (7) to (8);
	\draw (8) to (6);
	\draw (10) to (9);
	\draw (9) to (11);
	\draw (8) to (9);
	\draw [style={densely dotted}] (5) to (9);
	\draw (3) to (1);
	\draw (19) to (18);
	\draw (18) to (20);
	\draw (20) to (3);
	\draw (18) to (16);
	\draw (16) to (14);
	\draw (5) to (14);
	\draw (14) to (21);
	\draw (5) to (15);
	\draw (15) to (22);
	\draw (13) to (17);
	\draw (17) to (16);
	\draw (16) to (12);
	\draw (15) to (17);
	\draw (17) to (20);
	\end{pgfonlayer}
	\end{tikzpicture}
	=
	\begin{tikzpicture}
	\begin{pgfonlayer}{nodelayer}
	\node [style=zeroin] (0) at (0.25, -0) {};
	\node [style=dot] (1) at (1.25, -1) {};
	\node [style=nothing] (2) at (0, -2.25) {};
	\node [style=oplus] (3) at (1.25, -2.25) {};
	\node [style=dot] (4) at (1.25, -1.75) {};
	\node [style=nothing] (5) at (0, -1.75) {};
	\node [style=nothing] (6) at (0, -0.5) {};
	\node [style=oplus] (7) at (0.75, -1) {};
	\node [style=oplus] (8) at (1.75, -1) {};
	\node [style=dot] (9) at (0.75, -0.5) {};
	\node [style=dot] (10) at (1.75, -0.5) {};
	\node [style=dot] (11) at (0.75, 0.5) {};
	\node [style=nothing] (12) at (0, 0.5) {};
	\node [style=dot] (13) at (1.75, 0.5) {};
	\node [style=zeroin] (14) at (0.25, -1) {};
	\node [style=nothing] (15) at (3.25, -0) {};
	\node [style=nothing] (16) at (3.25, -0.5) {};
	\node [style=nothing] (17) at (3.25, 0.5) {};
	\node [style=zeroout] (18) at (2.75, -1) {};
	\node [style=nothing] (19) at (3.25, -2.25) {};
	\node [style=dot] (20) at (2.75, 0.5) {};
	\node [style=oplus] (21) at (2.75, -0) {};
	\node [style=nothing] (22) at (3.25, -1.75) {};
	\node [style=dot] (23) at (2.25, -0) {};
	\node [style=dot] (24) at (2.25, -0.5) {};
	\node [style=oplus] (25) at (2.25, -1) {};
	\end{pgfonlayer}
	\begin{pgfonlayer}{edgelayer}
	\draw (2) to (3);
	\draw (4) to (5);
	\draw (3) to (4);
	\draw [style={densely dotted}] (1) to (4);
	\draw (12) to (11);
	\draw (11) to (13);
	\draw (11) to (9);
	\draw (9) to (7);
	\draw (1) to (7);
	\draw (7) to (14);
	\draw (1) to (8);
	\draw (10) to (9);
	\draw (9) to (6);
	\draw (8) to (10);
	\draw (10) to (13);
	\draw (15) to (21);
	\draw (21) to (20);
	\draw (20) to (17);
	\draw (0) to (23);
	\draw (23) to (21);
	\draw (13) to (20);
	\draw (16) to (24);
	\draw (24) to (10);
	\draw (8) to (25);
	\draw (25) to (18);
	\draw (25) to (24);
	\draw (24) to (23);
	\draw (4) to (22);
	\draw (19) to (3);
	\end{pgfonlayer}
	\end{tikzpicture}
	=
	\begin{tikzpicture}
	\begin{pgfonlayer}{nodelayer}
	\node [style=zeroin] (0) at (0.25, -0) {};
	\node [style=dot] (1) at (1.25, -1) {};
	\node [style=nothing] (2) at (0, -2.25) {};
	\node [style=oplus] (3) at (1.25, -2.25) {};
	\node [style=dot] (4) at (1.25, -1.75) {};
	\node [style=nothing] (5) at (0, -1.75) {};
	\node [style=nothing] (6) at (0, -0.5) {};
	\node [style=oplus] (7) at (0.75, -1) {};
	\node [style=dot] (8) at (0.75, -0.5) {};
	\node [style=dot] (9) at (0.75, 0.5) {};
	\node [style=nothing] (10) at (0, 0.5) {};
	\node [style=zeroin] (11) at (0.25, -1) {};
	\node [style=nothing] (12) at (2.5, -0) {};
	\node [style=nothing] (13) at (2.5, -0.5) {};
	\node [style=nothing] (14) at (2.5, 0.5) {};
	\node [style=zeroout] (15) at (2.25, -1) {};
	\node [style=nothing] (16) at (2.5, -2.25) {};
	\node [style=nothing] (17) at (2.5, -1.75) {};
	\node [style=dot] (18) at (1.25, 0.5) {};
	\node [style=dot] (19) at (1.75, -0.5) {};
	\node [style=oplus] (20) at (1.75, -1) {};
	\node [style=oplus] (21) at (1.25, -0) {};
	\node [style=dot] (22) at (1.75, -0) {};
	\end{pgfonlayer}
	\begin{pgfonlayer}{edgelayer}
	\draw (2) to (3);
	\draw (4) to (5);
	\draw (3) to (4);
	\draw [style={densely dotted}] (1) to (4);
	\draw (10) to (9);
	\draw (9) to (8);
	\draw (8) to (7);
	\draw (1) to (7);
	\draw (7) to (11);
	\draw (8) to (6);
	\draw (4) to (17);
	\draw (16) to (3);
	\draw (21) to (18);
	\draw (22) to (21);
	\draw (20) to (19);
	\draw (19) to (22);
	\draw (1) to (20);
	\draw (20) to (15);
	\draw (13) to (19);
	\draw (19) to (8);
	\draw (0) to (21);
	\draw (22) to (12);
	\draw (14) to (18);
	\draw (18) to (9);
	\end{pgfonlayer}
	\end{tikzpicture}
	=
	\begin{tikzpicture}
	\begin{pgfonlayer}{nodelayer}
	\node [style=zeroin] (0) at (0.25, -0) {};
	\node [style=nothing] (1) at (0, -0.5) {};
	\node [style=oplus] (2) at (0.75, -1) {};
	\node [style=dot] (3) at (0.75, -0.5) {};
	\node [style=dot] (4) at (0.75, 0.5) {};
	\node [style=nothing] (5) at (0, 0.5) {};
	\node [style=zeroin] (6) at (0.25, -1) {};
	\node [style=dot] (7) at (2.25, -0) {};
	\node [style=nothing] (8) at (0, -1.75) {};
	\node [style=nothing] (9) at (3, -1.75) {};
	\node [style=nothing] (10) at (3, -2.25) {};
	\node [style=nothing] (11) at (3, -0) {};
	\node [style=nothing] (12) at (0, -2.25) {};
	\node [style=zeroout] (13) at (2.75, -1) {};
	\node [style=dot] (14) at (1.75, -1.75) {};
	\node [style=oplus] (15) at (1.75, -2.25) {};
	\node [style=dot] (16) at (1.75, 0.5) {};
	\node [style=oplus] (17) at (1.75, -0) {};
	\node [style=dot] (18) at (2.25, -0.5) {};
	\node [style=oplus] (19) at (2.25, -1) {};
	\node [style=dot] (20) at (1.75, -1) {};
	\node [style=nothing] (21) at (3, -0.5) {};
	\node [style=nothing] (22) at (3, 0.5) {};
	\node [style=dot] (23) at (1.25, -0) {};
	\node [style=dot] (24) at (1.25, -0.5) {};
	\node [style=oplus] (25) at (1.25, -1) {};
	\end{pgfonlayer}
	\begin{pgfonlayer}{edgelayer}
	\draw (5) to (4);
	\draw (4) to (3);
	\draw (3) to (2);
	\draw (2) to (6);
	\draw (3) to (1);
	\draw (12) to (15);
	\draw (14) to (8);
	\draw (15) to (14);
	\draw [style={densely dotted}] (20) to (14);
	\draw (14) to (9);
	\draw (10) to (15);
	\draw (17) to (16);
	\draw (7) to (17);
	\draw (19) to (18);
	\draw (18) to (7);
	\draw (20) to (19);
	\draw (19) to (13);
	\draw (21) to (18);
	\draw (7) to (11);
	\draw (22) to (16);
	\draw (23) to (17);
	\draw (18) to (24);
	\draw (24) to (3);
	\draw (2) to (25);
	\draw (0) to (23);
	\draw (23) to (24);
	\draw (24) to (25);
	\draw (25) to (20);
	\draw (16) to (4);
	\end{pgfonlayer}
	\end{tikzpicture}
	=
	\begin{tikzpicture}
	\begin{pgfonlayer}{nodelayer}
	\node [style=zeroin] (0) at (0.25, -0) {};
	\node [style=nothing] (1) at (0, -0.5) {};
	\node [style=nothing] (2) at (0, 0.5) {};
	\node [style=zeroin] (3) at (0.25, -1) {};
	\node [style=dot] (4) at (2.25, -0) {};
	\node [style=nothing] (5) at (0, -1.75) {};
	\node [style=nothing] (6) at (3, -1.75) {};
	\node [style=nothing] (7) at (3, -2.25) {};
	\node [style=nothing] (8) at (3, -0) {};
	\node [style=nothing] (9) at (0, -2.25) {};
	\node [style=zeroout] (10) at (2.75, -1) {};
	\node [style=dot] (11) at (1.75, -1.75) {};
	\node [style=oplus] (12) at (1.75, -2.25) {};
	\node [style=dot] (13) at (2.25, -0.5) {};
	\node [style=oplus] (14) at (2.25, -1) {};
	\node [style=dot] (15) at (1.75, -1) {};
	\node [style=nothing] (16) at (3, -0.5) {};
	\node [style=nothing] (17) at (3, 0.5) {};
	\node [style=dot] (18) at (1.25, -0) {};
	\node [style=dot] (19) at (1.25, -0.5) {};
	\node [style=oplus] (20) at (1.25, -1) {};
	\node [style=oplus] (21) at (0.75, -0) {};
	\node [style=dot] (22) at (0.75, 0.5) {};
	\end{pgfonlayer}
	\begin{pgfonlayer}{edgelayer}
	\draw (9) to (12);
	\draw (11) to (5);
	\draw (12) to (11);
	\draw [style={densely dotted}] (15) to (11);
	\draw (11) to (6);
	\draw (7) to (12);
	\draw (14) to (13);
	\draw (13) to (4);
	\draw (15) to (14);
	\draw (14) to (10);
	\draw (16) to (13);
	\draw (4) to (8);
	\draw (21) to (22);
	\draw (18) to (21);
	\draw (18) to (19);
	\draw (19) to (20);
	\draw (2) to (22);
	\draw (22) to (17);
	\draw (4) to (18);
	\draw (21) to (0);
	\draw (1) to (19);
	\draw (19) to (13);
	\draw (15) to (20);
	\draw (20) to (3);
	\end{pgfonlayer}
	\end{tikzpicture}
	=
	\begin{tikzpicture}
	\begin{pgfonlayer}{nodelayer}
	\node [style=zeroin] (0) at (0.25, -0) {};
	\node [style=nothing] (1) at (0, -0.5) {};
	\node [style=nothing] (2) at (0, 0.5) {};
	\node [style=dot] (3) at (1.25, -0) {};
	\node [style=nothing] (4) at (0, -1.25) {};
	\node [style=nothing] (5) at (1.75, -1.25) {};
	\node [style=nothing] (6) at (1.75, -1.75) {};
	\node [style=nothing] (7) at (1.75, -0) {};
	\node [style=nothing] (8) at (0, -1.75) {};
	\node [style=dot] (9) at (1.25, -1.25) {};
	\node [style=oplus] (10) at (1.25, -1.75) {};
	\node [style=dot] (11) at (1.25, -0.5) {};
	\node [style=nothing] (12) at (1.75, -0.5) {};
	\node [style=nothing] (13) at (1.75, 0.5) {};
	\node [style=oplus] (14) at (0.75, -0) {};
	\node [style=dot] (15) at (0.75, 0.5) {};
	\end{pgfonlayer}
	\begin{pgfonlayer}{edgelayer}
	\draw (8) to (10);
	\draw (9) to (4);
	\draw (10) to (9);
	\draw [style={densely dotted}] (11) to (9);
	\draw (9) to (5);
	\draw (6) to (10);
	\draw (3) to (7);
	\draw (14) to (15);
	\draw (2) to (15);
	\draw (15) to (13);
	\draw (14) to (0);
	\draw (14) to (3);
	\draw (3) to (11);
	\draw (11) to (12);
	\draw (11) to (1);
	\end{pgfonlayer}
	\end{tikzpicture}
	\end{align*}
	
\end{enumerate}

\section{Proof of Proposition \ref{propositoin:TOFDiscreteInverse}}
\label{appendix:discreteInverse}

Most of the proof is inherited from \cite{CNOT} using Lemma \ref{lemma:functorial}.

We first show that $\Delta$ is a natural transformation:

\begin{lemma}
	\label{lemma:DeltaNatural}	
	$\Delta$ is a natural transformation in $\TOF$.
\end{lemma}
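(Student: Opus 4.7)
The plan is to reduce naturality of $\Delta$ to a finite check on the generators of $\TOF$. The naturality square for $f : n \to m$ reads
$$ f \cdot \Delta_m \;=\; \Delta_n \cdot (f \otimes f), $$
and this equation is preserved under composition and (using the uniform-copying axiom DINV.4 together with the diagonal structure of $\Delta$) under monoidal product of morphisms. Consequently it suffices to verify naturality for each generator of $\TOF$: the ancillary bits $\onein$ and $\oneout$, the symmetries, and the Toffoli gate $\tof$.

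For the $\CNOT$-definable generators — $\onein$, $\oneout$, $\cnot$, and the symmetries — naturality is inherited. By Lemma \ref{lemma:functorial} the canonical interpretation $\CNOT \to \TOF$ is functorial, and Definition \ref{definition:Delta} takes $\Delta$ in $\TOF$ to be literally the image of the $\Delta$ constructed in $\CNOT$. Since $\Delta$ is already a natural transformation in $\CNOT$ (proved in \cite{CNOT}), naturality with respect to every morphism in the image of this interpretation is automatic. This handles every generator except the Toffoli gate itself.

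It therefore remains to verify
$$ \tof \cdot \Delta_3 \;=\; \Delta_3 \cdot (\tof \otimes \tof). $$
Expanding, $\Delta_3$ is (up to symmetries) three parallel copies of $\Delta_1$, each of which is a single $\cnot$ gate firing into a fresh $\zeroin$-ancillary. The strategy is to slide the second Toffoli — the one acting on the duplicate wires — leftward past the three copying $\cnot$ gates using Proposition \ref{lemma:zipper}~{\em (iii)}, which allows a controlled gate and a Toffoli to cross at the cost of a trailing $\cnot_n$ gate. The trailing gates are then rearranged and absorbed using the Iwama identities Lemma~\ref{remark:Iwama}~{\em (iii)}--{\em (v)} (commuting control/target sets), cancelled with Lemma~\ref{lemma:PolySelfInverse} (self-inverseness of $\cnot_n$), and finally merged with the first $\tof$, leaving a single Toffoli applied \emph{before} the copy layer, i.e.\ the RHS.

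The main obstacle is the diagrammatic bookkeeping in the $\tof$ step: all three wires on which $\tof$ acts (two controls and one target) are simultaneously being duplicated, so each of the three copy-$\cnot$s interacts nontrivially with \emph{both} Toffoli gates, and the intermediate circuits are large. Morally, the equation is the assertion that ``copying commutes with any reversible function,'' which at the level of polynomial-form circuits is exactly what the Iwama calculus of Lemma~\ref{remark:Iwama} together with the zipping lemma of Proposition~\ref{lemma:zipper} is designed to manage; no genuinely new identity is needed beyond those already established.
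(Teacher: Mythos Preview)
Your overall structure matches the paper's exactly: reduce to generators by structural induction, inherit the ancillary-bit cases from \cite{CNOT} via Lemma~\ref{lemma:functorial}, and verify the $\tof$ case by hand. The paper carries out the $\tof$ verification as an explicit eight-step string-diagram calculation directly from the $\TOF$ axioms, whereas you package the same manipulation through the higher-level Iwama lemmas; both are legitimate routes to the same equation.

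There is, however, a concrete gap in your cleanup step. When you slide $\tof_2$ (acting on the copy wires $1',2',3'$) leftward through the three copying $\cnot$ gates via Iwama~(v), the trailing gates produced are of the form $\oa_{3'}^{\{1',2\}}$, $\oa_{3'}^{\{1,2'\}}$, $\oa_{3'}^{\{1',2'\}}$, i.e.\ they have controls on the \emph{primed} wires. These cannot be ``cancelled with Lemma~\ref{lemma:PolySelfInverse}'' nor ``merged with the first $\tof$'', since the first $\tof$ lives entirely on the unprimed wires $1,2,3$ and never meets them. What actually kills these trailing gates is Iwama~(ii) --- a gate with a control on a fresh $\zeroin$ wire is the identity --- or equivalently Iwama~(vi), which swaps a control between $x$ and its copy $z$ in the presence of $\rhd_z\,\oa_z^{\{x\}}$. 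You cite Iwama~(iii)--(v) but omit (ii), which is precisely the identity doing the work here. Once (ii) is invoked, one surviving $\oa_{3'}^{\{1,2\}}$ remains on each side (it appears on the LHS from pushing $\oa_3^{\{1,2\}}$ past the target-copy $\cnot$ via Iwama~(v)), and these match, completing the argument.
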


\begin{proof}
	We prove that $\Delta$ is a natural transformation by structural induction.  We have only to prove the inductive case for $\tof$ as the cases for the $1$-ancillary bits are proven in \cite[Lemma B.3]{CNOT}:
	\begin{align*}

	\end{align*}	
\end{proof}

To prove that $\TOF$ is a discrete inverse category with respect to  $(\_)^\cnv:\TOF^\op\to\TOF$ it must also be shown that \ref{INV.1}, \ref{INV.2} and \ref{INV.3} hold.  As  \ref{INV.1} is immediate it remains to prove \ref{INV.2} and \ref{INV.3}:

\begin{lemma}
	\label{lemma:PartialInverse}	
	For all maps $f$ in $\TOF$ \ref{INV.2} holds, that is $ff^\cnv f=f$.
\end{lemma}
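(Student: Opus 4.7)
The plan is a structural induction on the symmetric-monoidal construction of $f$ from the generators $\tof$, $\onein$, and $\oneout$ (together with identities and symmetry maps).

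For the base cases, each generator is handled directly. Since $\tof^\cnv = \tof$ and $\tof\cdot\tof = 1$ by \ref{TOF.9}, one gets $\tof\,\tof^\cnv\,\tof = \tof$. Since $\onein^\cnv = \oneout$ and $\onein\cdot\oneout = 1_0$ by \ref{TOF.8}, one gets $\onein\,\oneout\,\onein = \onein$; the case of $\oneout$ is symmetric. Identities and symmetry isomorphisms are self-inverse under $(\_)^\cnv$, so the axiom holds trivially for them.

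The tensor case is immediate. Since $(\_)^\cnv$ is strict monoidal, $(g\otimes h)^\cnv = g^\cnv\otimes h^\cnv$, and the induction hypothesis on each factor gives
$$(g\otimes h)(g\otimes h)^\cnv(g\otimes h) = (gg^\cnv g)\otimes(hh^\cnv h) = g\otimes h.$$

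The composition case is the heart of the matter. For $f = gh$, we must show $gh\cdot h^\cnv g^\cnv\cdot gh = gh$, which rearranges to $g\cdot(hh^\cnv\cdot g^\cnv g)\cdot h$. If the two endomaps $hh^\cnv$ and $g^\cnv g$ of the shared intermediate object commute, then swapping them and applying the induction hypothesis to each factor yields $(gg^\cnv g)(hh^\cnv h) = gh$.

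Thus the main obstacle is to establish commutativity of maps of the form $hh^\cnv$ and $g^\cnv g$ -- precisely the content of axiom \ref{INV.3}. The lemma therefore has to be proved jointly with \ref{INV.3} by a simultaneous structural induction, stratified by total gate count so that every invocation of either axiom is on a strictly smaller circuit. The commutativity step leans on Lemma \ref{remark:Iwama}, particularly item~(v), to push generalised control-not gates past one another when reducing expressions that mix Toffoli gates with the $\onein\oneout$ restriction loops, together with \ref{TOF.7} to absorb degenerate loops into the scalar $\Omega$. Managing the interdependence of \ref{INV.2} and \ref{INV.3} -- so that the composition case here and the corresponding reordering argument in \ref{INV.3} do not appeal circularly to one another -- is the delicate part of the proof.
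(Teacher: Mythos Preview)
Your base cases and tensor case are fine, but the composition case is where the argument stalls. You correctly identify that reducing $gh\,h^\cnv g^\cnv\,gh$ to $gh$ requires commuting $hh^\cnv$ past $g^\cnv g$ --- precisely \ref{INV.3} --- and you propose a simultaneous induction with \ref{INV.3} stratified by gate count. But you then stop, calling the interdependence ``delicate'' without carrying it out. The appeals to Lemma~\ref{remark:Iwama} and \ref{TOF.7} are not to the point: those manipulate specific $\cnot_n$ configurations, not the commutativity of an arbitrary $hh^\cnv$ with an arbitrary $g^\cnv g$.

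The paper avoids this entanglement by a different choice of inductive decomposition. Instead of splitting $f$ as an arbitrary composite $gh$, one writes every circuit as a sequence of generator layers and peels off one layer from the left. For a Toffoli layer the point is simply that $\tof^\cnv=\tof$ and $\tof\cdot\tof=1$ by \ref{TOF.9}: with $f'=(1\otimes\tof\otimes 1)\,f$,
\[
f'\,f'^\cnv\,f' \;=\; (1\otimes\tof\otimes 1)\,f\,f^\cnv\,(1\otimes\tof\otimes 1)(1\otimes\tof\otimes 1)\,f \;=\; (1\otimes\tof\otimes 1)\,f\,f^\cnv\,f \;=\; f',
\]
using only the induction hypothesis on the shorter circuit $f$. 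The two adjacent copies of the generator cancel in place, so no commutativity argument is needed anywhere. The ancillary-bit cases are handled in the same style in the $\CNOT$ paper. In particular \ref{INV.2} and \ref{INV.3} (Lemma~\ref{lemma:Latchable}) are each proved by their own layer-peeling induction, independently of one another; there is no joint induction.
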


\begin{proof}
	We prove that \ref{INV.2} holds by structural induction.  Again we have only to prove the inductive case for $\tof$ as the cases for the ancillary bits are proven in \cite[Lemma B.14]{CNOT}.
	Suppose inductively that $ff^\cnv f =f$ for some circuit $f$, then we need to show that we can extend $f$ by a $\tof$ gate and preserve the property.  This is almost immediate as:
	\begin{align*}
	\begin{tikzpicture}
	\begin{pgfonlayer}{nodelayer}
	\node [style=map] (0) at (0.5000001, -0) {$f$};
	\node [style=map] (1) at (1.75, -0) {$f^\cnv$};
	\node [style=map] (2) at (4.25, -0) {$f$};
	\node [style=nothing] (3) at (5.25, -0) {};
	\node [style=dot] (4) at (-0.5000001, -0) {};
	\node [style=dot] (5) at (-0.5000001, 0.5000001) {};
	\node [style=oplus] (6) at (-0.5000001, -0.5000001) {};
	\node [style=nothing] (7) at (-0.5000001, 0.9999999) {};
	\node [style=nothing] (8) at (-1.5, 0.9999999) {};
	\node [style=nothing] (9) at (-1.5, 0.5000001) {};
	\node [style=nothing] (10) at (-1.5, -0) {};
	\node [style=nothing] (11) at (-1.5, -0.5000001) {};
	\node [style=nothing] (12) at (-1.5, -1) {};
	\node [style=nothing] (13) at (-0.5000001, -1) {};
	\node [style=dot] (14) at (2.75, -0) {};
	\node [style=dot] (15) at (2.75, 0.5000001) {};
	\node [style=nothing] (16) at (2.75, -0.9999997) {};
	\node [style=nothing] (17) at (2.75, 1) {};
	\node [style=oplus] (18) at (2.75, -0.5000001) {};
	\node [style=dot] (19) at (3.25, -0) {};
	\node [style=dot] (20) at (3.25, 0.5000001) {};
	\node [style=nothing] (21) at (3.25, -0.9999997) {};
	\node [style=nothing] (22) at (3.25, 1) {};
	\node [style=oplus] (23) at (3.25, -0.5000001) {};
	\end{pgfonlayer}
	\begin{pgfonlayer}{edgelayer}
	\draw (8) to (7);
	\draw [in=150, out=0, looseness=1.00] (7) to (0);
	\draw [in=0, out=-150, looseness=1.00] (0) to (13);
	\draw (13) to (12);
	\draw (6) to (11);
	\draw [in=-165, out=0, looseness=1.00] (6) to (0);
	\draw (0) to (4);
	\draw (4) to (10);
	\draw (9) to (5);
	\draw [in=165, out=0, looseness=1.00] (5) to (0);
	\draw (0) to (1);
	\draw [in=180, out=30, looseness=1.00] (1) to (17);
	\draw (17) to (22);
	\draw [in=150, out=0, looseness=1.00] (22) to (2);
	\draw [in=0, out=-150, looseness=1.00] (2) to (21);
	\draw (21) to (16);
	\draw [in=-30, out=180, looseness=1.00] (16) to (1);
	\draw (1) to (14);
	\draw (14) to (19);
	\draw (19) to (2);
	\draw [in=0, out=165, looseness=1.00] (2) to (20);
	\draw (20) to (15);
	\draw [in=15, out=180, looseness=1.00] (15) to (1);
	\draw [in=180, out=-15, looseness=1.00] (1) to (18);
	\draw (18) to (23);
	\draw (23) to (19);
	\draw (19) to (20);
	\draw (14) to (15);
	\draw (18) to (14);
	\draw [in=-165, out=0, looseness=1.00] (23) to (2);
	\draw (5) to (4);
	\draw (4) to (6);
	\draw (2) to (3);
	\end{pgfonlayer}
	\end{tikzpicture}
	=
	\begin{tikzpicture}
	\begin{pgfonlayer}{nodelayer}
	\node [style=map] (0) at (0.5000001, -0) {$f$};
	\node [style=map] (1) at (1.75, -0) {$f^\cnv$};
	\node [style=map] (2) at (4.25, -0) {$f$};
	\node [style=nothing] (3) at (5.25, -0) {};
	\node [style=dot] (4) at (-0.5000001, -0) {};
	\node [style=dot] (5) at (-0.5000001, 0.5000001) {};
	\node [style=oplus] (6) at (-0.5000001, -0.5000001) {};
	\node [style=nothing] (7) at (-0.5000001, 0.9999999) {};
	\node [style=nothing] (8) at (-1.5, 0.9999999) {};
	\node [style=nothing] (9) at (-1.5, 0.5000001) {};
	\node [style=nothing] (10) at (-1.5, -0) {};
	\node [style=nothing] (11) at (-1.5, -0.5000001) {};
	\node [style=nothing] (12) at (-1.5, -1) {};
	\node [style=nothing] (13) at (-0.5000001, -1) {};
	\node [style=nothing] (14) at (2.75, -0.9999997) {};
	\node [style=nothing] (15) at (2.75, 1) {};
	\node [style=nothing] (16) at (3.25, -0.9999997) {};
	\node [style=nothing] (17) at (3.25, 1) {};
	\node [style=nothing] (18) at (3, -0) {};
	\node [style=nothing] (19) at (3, 0.5000001) {};
	\node [style=nothing] (20) at (3, -0.5000001) {};
	\end{pgfonlayer}
	\begin{pgfonlayer}{edgelayer}
	\draw (8) to (7);
	\draw [in=150, out=0, looseness=1.00] (7) to (0);
	\draw [in=0, out=-150, looseness=1.00] (0) to (13);
	\draw (13) to (12);
	\draw (6) to (11);
	\draw [in=-165, out=0, looseness=1.00] (6) to (0);
	\draw (0) to (4);
	\draw (4) to (10);
	\draw (9) to (5);
	\draw [in=165, out=0, looseness=1.00] (5) to (0);
	\draw (0) to (1);
	\draw [in=180, out=30, looseness=1.00] (1) to (15);
	\draw (15) to (17);
	\draw [in=150, out=0, looseness=1.00] (17) to (2);
	\draw [in=0, out=-150, looseness=1.00] (2) to (16);
	\draw (16) to (14);
	\draw [in=-30, out=180, looseness=1.00] (14) to (1);
	\draw (5) to (4);
	\draw (4) to (6);
	\draw (2) to (3);
	\draw [in=15, out=180, looseness=1.00] (19) to (1);
	\draw [in=180, out=-15, looseness=1.00] (1) to (20);
	\draw [in=-165, out=0, looseness=1.00] (20) to (2);
	\draw [in=0, out=165, looseness=1.00] (2) to (19);
	\draw (2) to (18);
	\draw (18) to (1);
	\end{pgfonlayer}
	\end{tikzpicture}
	=
	\begin{tikzpicture}
	\begin{pgfonlayer}{nodelayer}
	\node [style=map] (0) at (0.5000001, -0) {$f$};
	\node [style=nothing] (1) at (1.5, -0) {};
	\node [style=dot] (2) at (-0.5000001, -0) {};
	\node [style=dot] (3) at (-0.5000001, 0.5000001) {};
	\node [style=oplus] (4) at (-0.5000001, -0.5000001) {};
	\node [style=nothing] (5) at (-0.5000001, 0.9999999) {};
	\node [style=nothing] (6) at (-1.5, 0.9999999) {};
	\node [style=nothing] (7) at (-1.5, 0.5000001) {};
	\node [style=nothing] (8) at (-1.5, -0) {};
	\node [style=nothing] (9) at (-1.5, -0.5000001) {};
	\node [style=nothing] (10) at (-1.5, -1) {};
	\node [style=nothing] (11) at (-0.5000001, -1) {};
	\end{pgfonlayer}
	\begin{pgfonlayer}{edgelayer}
	\draw (6) to (5);
	\draw [in=150, out=0, looseness=1.00] (5) to (0);
	\draw [in=0, out=-150, looseness=1.00] (0) to (11);
	\draw (11) to (10);
	\draw (4) to (9);
	\draw [in=-165, out=0, looseness=1.00] (4) to (0);
	\draw (0) to (2);
	\draw (2) to (8);
	\draw (7) to (3);
	\draw [in=165, out=0, looseness=1.00] (3) to (0);
	\draw (3) to (2);
	\draw (2) to (4);
	\draw (0) to (1);
	\end{pgfonlayer}
	\end{tikzpicture}
	\end{align*}
\end{proof}

To prove that \ref{INV.3} holds, we identify the restriction idempotents of $\TOF$ with what are called latchable circuits.  A map $f$ in a discrete, symmetric monoidal category is called {\bf latchable} when $\Delta(f \ox 1)\Delta^\cnv$  \cite[Definition B.9]{CNOT}.  We already know that latchable circuits commute with each other \cite[Lemma B.10]{CNOT}; therefore, to prove that all circuits of the form $ff^\cnv$ are latchable is to prove that \ref{INV.3} holds.

\begin{lemma}
	\label{lemma:Latchable}
	Circuits of the form $ff^\cnv$ in $\TOF$ are latchable and, thus, \ref{INV.3} holds.
\end{lemma}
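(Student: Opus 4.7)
The plan is to prove latchability of $ff^\cnv$ by structural induction on $f$, largely importing the argument of \cite[Lemma B.15]{CNOT}. The category $\TOF$ is generated by the same ancillary bits $\onein, \oneout$ as $\CNOT$, plus symmetries and identities, with the $\cnot$ gate replaced (and made definable) by the Toffoli gate $\tof$. Since the proof in $\CNOT$ only manipulates the discrete inverse structure and the generators common to both categories, the inductive steps and all shared base cases carry over verbatim. The only genuinely new verification is the base case for $\tof$.

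For $f = \tof$, I would use axiom \ref{TOF.9} to observe that $\tof$ is self-inverse, so $\tof \cdot \tof^\cnv = \tof \cdot \tof = 1_3$. The identity is trivially latchable: from totality of $\Delta$ (Lemma \ref{lemma:DeltaNatural}) together with the semi-Frobenius axiom \ref{DINV.3}, one gets separability $\Delta \Delta^\cnv = 1$, so $\Delta (1 \otimes 1) \Delta^\cnv = \Delta \Delta^\cnv = 1$, which is the latchable equation. The remaining generators $\onein$ and $\oneout$ are exactly as in $\CNOT$, so their latchability is handled by \cite[Lemma B.15]{CNOT} without change.

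For the inductive steps, suppose the statement holds for $g$ and $h$. If $f = g \otimes h$, then $(g \otimes h)(g \otimes h)^\cnv = (gg^\cnv) \otimes (hh^\cnv)$, and uniform copying \ref{DINV.4} together with naturality of $\Delta$ delivers latchability from the inductive hypothesis. If $f = hg$, then $(hg)(hg)^\cnv = h (gg^\cnv) h^\cnv$; this case reduces, as in $\CNOT$, to showing that sandwiching a latchable map by $h$ and $h^\cnv$ preserves latchability, which is a purely formal consequence of the discrete inverse structure and Lemma \ref{lemma:DeltaNatural}. The main (already conquered in $\CNOT$) obstacle is this last composition step, whose verification is an algebraic manipulation using only \ref{DINV.1}--\ref{DINV.4}, naturality of $\Delta$, and Lemma \ref{lemma:PartialInverse}.

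Once latchability of all $ff^\cnv$ is secured, \ref{INV.3} follows from \cite[Lemma B.10]{CNOT}, which states that any two latchable maps commute. Combined with \ref{INV.1} (immediate from $(f^\cnv)^\cnv = f$) and \ref{INV.2} (Lemma \ref{lemma:PartialInverse}), this shows $\TOF$ is an inverse category; together with Lemma \ref{lemma:DeltaNatural} and the axioms \ref{DINV.1}--\ref{DINV.4} imported from $\CNOT$ via Lemma \ref{lemma:functorial}, this establishes that $\TOF$ is a discrete inverse category.
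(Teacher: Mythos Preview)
Your proposal is correct in substance, but it organizes the structural induction differently from the paper. The paper (and, judging from its citations, also \cite{CNOT}) does not separate out ``base cases for generators plus inductive steps for composition and tensor'': instead the induction proceeds by adjoining a single generator to an already-built circuit. So the paper's new case is not the base case ``$\tof\,\tof^\cnv=1_3$ is latchable'' but the inductive step ``if $ff^\cnv$ is latchable then $\tof\, ff^\cnv\,\tof$ is latchable,'' proved diagrammatically using naturality of $\Delta$ (Lemma~\ref{lemma:DeltaNatural}) together with $\tof^2=1$ from \ref{TOF.9}. In your framing that same content gets pushed into the general composition step, which you say is inherited from \cite{CNOT}; since \cite{CNOT} uses the adjoin-a-generator scheme, that attribution is not quite right. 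Your composition claim is nevertheless provable: from $e$ latchable and naturality of $\Delta$ one gets $heh^\cnv=\Delta(heh^\cnv\otimes hh^\cnv)\Delta^\cnv$, and then coassociativity \ref{DINV.2} together with the inductive hypothesis that $hh^\cnv$ is latchable collapses this to $\Delta(heh^\cnv\otimes 1)\Delta^\cnv$. So your route works and is arguably cleaner for the $\tof$ case (it reduces to the identity), at the cost of having to establish the general composition lemma yourself rather than citing it.
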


\begin{proof}
	We prove that circuits of the form $ff^\cnv$ in $\TOF$ are latchable by structural induction.  We have only to prove the inductive case for $\tof$ as the cases for the $1$-ancillary bits are proven in \cite[Proposition B.12]{CNOT}:
	Suppose that some circuit $f$ is latchable, then we must show inductively that adjoining a Toffoli gate will result in a latchable circuit:
	\begin{align*}
	\begin{tikzpicture}
	\begin{pgfonlayer}{nodelayer}
	\node [style=map] (0) at (0.2500001, -0) {$f$};
	\node [style=map] (1) at (1.25, -0) {$f^\cnv$};
	\node [style=nothing] (2) at (-1.5, -0) {};
	\node [style=nothing] (3) at (-1.5, 0.5000001) {};
	\node [style=nothing] (4) at (-1.5, -0.5000001) {};
	\node [style=nothing] (5) at (-1.5, -1) {};
	\node [style=nothing] (6) at (-1.5, 0.9999999) {};
	\node [style=nothing] (7) at (-0.9999999, 0.9999999) {};
	\node [style=nothing] (8) at (-0.9999999, -1) {};
	\node [style=dot] (9) at (-0.9999999, 0.5000001) {};
	\node [style=dot] (10) at (-0.9999999, -0) {};
	\node [style=oplus] (11) at (-0.9999999, -0.5000001) {};
	\node [style=dot] (12) at (2.5, -0) {};
	\node [style=dot] (13) at (2.5, 0.5000001) {};
	\node [style=nothing] (14) at (2.5, 0.9999999) {};
	\node [style=oplus] (15) at (2.5, -0.5000001) {};
	\node [style=nothing] (16) at (2.5, -1) {};
	\node [style=nothing] (17) at (3, -1) {};
	\node [style=nothing] (18) at (3, -0) {};
	\node [style=nothing] (19) at (3, 0.9999999) {};
	\node [style=nothing] (20) at (3, -0.5000001) {};
	\node [style=nothing] (21) at (3, 0.5000001) {};
	\end{pgfonlayer}
	\begin{pgfonlayer}{edgelayer}
	\draw (9) to (10);
	\draw (10) to (11);
	\draw (11) to (4);
	\draw (2) to (10);
	\draw (3) to (9);
	\draw (6) to (7);
	\draw [in=150, out=0, looseness=1.00] (7) to (0);
	\draw [in=0, out=165, looseness=1.00] (0) to (9);
	\draw (10) to (0);
	\draw [in=-165, out=0, looseness=1.00] (11) to (0);
	\draw [in=-150, out=0, looseness=1.00] (8) to (0);
	\draw (5) to (8);
	\draw (0) to (1);
	\draw (13) to (12);
	\draw (12) to (15);
	\draw [in=180, out=30, looseness=1.00] (1) to (14);
	\draw (14) to (19);
	\draw (21) to (13);
	\draw [in=15, out=180, looseness=1.00] (13) to (1);
	\draw (1) to (12);
	\draw (12) to (18);
	\draw (20) to (15);
	\draw [in=-15, out=180, looseness=1.00] (15) to (1);
	\draw [in=-30, out=180, looseness=1.00] (16) to (1);
	\draw (16) to (17);
	\end{pgfonlayer}
	\end{tikzpicture}
	=
	\begin{tikzpicture}
	\begin{pgfonlayer}{nodelayer}
	\node [style=nothing] (0) at (-2, -0) {};
	\node [style=nothing] (1) at (-2, 0.7499999) {};
	\node [style=nothing] (2) at (-2, -0.7500001) {};
	\node [style=nothing] (3) at (-2, -1.5) {};
	\node [style=nothing] (4) at (-1.5, 1.5) {};
	\node [style=nothing] (5) at (-1.5, -1.5) {};
	\node [style=dot] (6) at (-1.5, 0.7499999) {};
	\node [style=dot] (7) at (-1.5, -0) {};
	\node [style=oplus] (8) at (-1.5, -0.7500001) {};
	\node [style=dot] (9) at (4.5, -0) {};
	\node [style=dot] (10) at (4.5, 0.7499999) {};
	\node [style=nothing] (11) at (4.5, 1.5) {};
	\node [style=oplus] (12) at (4.5, -0.7500001) {};
	\node [style=nothing] (13) at (4.5, -1.5) {};
	\node [style=nothing] (14) at (5, -1.5) {};
	\node [style=nothing] (15) at (5, -0) {};
	\node [style=nothing] (16) at (5, 1.5) {};
	\node [style=nothing] (17) at (5, -0.7500001) {};
	\node [style=nothing] (18) at (5, 0.7499999) {};
	\node [style=fanout] (19) at (-0.9999999, 1.5) {};
	\node [style=fanout] (20) at (-0.9999999, 0.7499999) {};
	\node [style=fanout] (21) at (-0.9999999, -0) {};
	\node [style=fanout] (22) at (-0.9999999, -0.7500001) {};
	\node [style=fanout] (23) at (-0.9999999, -1.5) {};
	\node [style=map] (24) at (0.9999999, 1.25) {$f$};
	\node [style=fanin] (25) at (4, -0) {};
	\node [style=map] (26) at (2, 1.25) {$f^\cnv$};
	\node [style=fanin] (27) at (4, 0.7499999) {};
	\node [style=fanin] (28) at (4, 1.5) {};
	\node [style=fanin] (29) at (4, -0.7500001) {};
	\node [style=fanin] (30) at (4, -1.5) {};
	\node [style=nothing] (31) at (-2, 1.5) {};
	\node [style=nothing] (32) at (1.5, -2.25) {};
	\node [style=nothing] (33) at (1.5, -1.75) {};
	\node [style=nothing] (34) at (1.5, -1.25) {};
	\node [style=nothing] (35) at (1.5, -0.7500001) {};
	\node [style=nothing] (36) at (1.5, -2.75) {};
	\end{pgfonlayer}
	\begin{pgfonlayer}{edgelayer}
	\draw (6) to (7);
	\draw (7) to (8);
	\draw (8) to (2);
	\draw (0) to (7);
	\draw (1) to (6);
	\draw (3) to (5);
	\draw (10) to (9);
	\draw (9) to (12);
	\draw (11) to (16);
	\draw (18) to (10);
	\draw (9) to (15);
	\draw (17) to (12);
	\draw (13) to (14);
	\draw (19) to (4);
	\draw (6) to (20);
	\draw (21) to (7);
	\draw (22) to (8);
	\draw (23) to (5);
	\draw [in=146, out=15, looseness=1.25] (19) to (24);
	\draw [in=15, out=162, looseness=1.25] (24) to (20);
	\draw [in=180, out=0, looseness=1.00] (21) to (24);
	\draw [in=-162, out=18, looseness=1.00] (22) to (24);
	\draw [in=15, out=-146, looseness=1.25] (24) to (23);
	\draw [in=165, out=34, looseness=1.25] (26) to (28);
	\draw [in=18, out=165, looseness=1.25] (27) to (26);
	\draw [in=180, out=0, looseness=1.00] (26) to (25);
	\draw [in=-18, out=162, looseness=1.00] (29) to (26);
	\draw [in=-30, out=165, looseness=1.00] (30) to (26);
	\draw (30) to (13);
	\draw (12) to (29);
	\draw (25) to (9);
	\draw (10) to (27);
	\draw (28) to (11);
	\draw (31) to (4);
	\draw [in=0, out=-165, looseness=1.00] (29) to (32);
	\draw [in=-15, out=180, looseness=1.00] (32) to (22);
	\draw [in=180, out=-15, looseness=1.00] (21) to (33);
	\draw [in=-165, out=0, looseness=1.00] (33) to (25);
	\draw [in=0, out=-165, looseness=1.00] (27) to (34);
	\draw [in=-15, out=180, looseness=1.00] (34) to (20);
	\draw [in=-15, out=180, looseness=1.00] (35) to (19);
	\draw [in=-150, out=0, looseness=1.00] (35) to (28);
	\draw [in=180, out=-15, looseness=1.00] (23) to (36);
	\draw [in=-165, out=0, looseness=1.00] (36) to (30);
	\draw (24) to (26);
	\end{pgfonlayer}
	\end{tikzpicture}
	=
	\begin{tikzpicture}
	\begin{pgfonlayer}{nodelayer}
	\node [style=nothing] (0) at (-2, -0) {};
	\node [style=nothing] (1) at (-2, 0.7500001) {};
	\node [style=nothing] (2) at (-2, -0.7500001) {};
	\node [style=nothing] (3) at (-2, -1.5) {};
	\node [style=fanout] (4) at (-1.5, 1.5) {};
	\node [style=fanout] (5) at (-1.5, 0.7500001) {};
	\node [style=fanout] (6) at (-1.5, -0) {};
	\node [style=fanout] (7) at (-1.5, -0.7500001) {};
	\node [style=fanout] (8) at (-1.5, -1.5) {};
	\node [style=fanin] (9) at (4.5, -0) {};
	\node [style=fanin] (10) at (4.5, 0.7500001) {};
	\node [style=fanin] (11) at (4.5, 1.5) {};
	\node [style=fanin] (12) at (4.5, -0.7500001) {};
	\node [style=fanin] (13) at (4.5, -1.5) {};
	\node [style=nothing] (14) at (-2, 1.5) {};
	\node [style=map] (15) at (2, 2) {$f^\cnv$};
	\node [style=map] (16) at (0.9999999, 2) {$f$};
	\node [style=dot] (17) at (0.9999999, -0.9999999) {};
	\node [style=dot] (18) at (0.9999999, -1.5) {};
	\node [style=dot] (19) at (2, -0.9999999) {};
	\node [style=dot] (20) at (2, -1.5) {};
	\node [style=oplus] (21) at (0.9999999, -2) {};
	\node [style=oplus] (22) at (2, -2) {};
	\node [style=nothing] (23) at (0.9999999, -2.25) {};
	\node [style=nothing] (24) at (2, -2.25) {};
	\node [style=nothing] (25) at (0.9999999, -0.5) {};
	\node [style=nothing] (26) at (2, -0.5) {};
	\node [style=oplus] (27) at (-0.2500001, 1.25) {};
	\node [style=nothing] (28) at (-0.2500001, 2.5) {};
	\node [style=dot] (29) at (-0.2500001, 2.25) {};
	\node [style=dot] (30) at (-0.2500001, 1.75) {};
	\node [style=nothing] (31) at (-0.2500001, 0.7500001) {};
	\node [style=oplus] (32) at (3.25, 1.25) {};
	\node [style=nothing] (33) at (3.25, 2.5) {};
	\node [style=dot] (34) at (3.25, 2.25) {};
	\node [style=dot] (35) at (3.25, 1.75) {};
	\node [style=nothing] (36) at (3.25, 0.7500001) {};
	\node [style=nothing] (37) at (5, -1.5) {};
	\node [style=nothing] (38) at (5, 0) {};
	\node [style=nothing] (39) at (5, 0.7500001) {};
	\node [style=nothing] (40) at (5, 1.5) {};
	\node [style=nothing] (41) at (5, -0.7500001) {};
	\end{pgfonlayer}
	\begin{pgfonlayer}{edgelayer}
	\draw (16) to (15);
	\draw [in=180, out=-15, looseness=1.00] (5) to (17);
	\draw [in=180, out=-15, looseness=1.00] (6) to (18);
	\draw [in=-15, out=180, looseness=1.00] (21) to (7);
	\draw [in=-165, out=0, looseness=1.00] (22) to (12);
	\draw [in=-150, out=0, looseness=1.00] (20) to (9);
	\draw [in=0, out=-150, looseness=1.00] (10) to (19);
	\draw (17) to (19);
	\draw (20) to (18);
	\draw (21) to (22);
	\draw (24) to (23);
	\draw [in=-15, out=180, looseness=1.00] (23) to (8);
	\draw [in=-165, out=0, looseness=1.00] (24) to (13);
	\draw [in=-165, out=0, looseness=1.00] (26) to (11);
	\draw (26) to (25);
	\draw [in=-15, out=180, looseness=1.00] (25) to (4);
	\draw [in=180, out=15, looseness=1.00] (4) to (28);
	\draw [in=30, out=180, looseness=1.00] (29) to (5);
	\draw [in=15, out=180, looseness=1.00] (30) to (6);
	\draw [in=30, out=-165, looseness=0.75] (27) to (7);
	\draw (31) to (8);
	\draw (13) to (36);
	\draw [in=0, out=165, looseness=0.75] (12) to (32);
	\draw [in=0, out=165, looseness=1.00] (9) to (35);
	\draw [in=0, out=165, looseness=1.00] (10) to (34);
	\draw [in=0, out=165, looseness=1.00] (11) to (33);
	\draw [in=22, out=180, looseness=1.00] (33) to (15);
	\draw (15) to (34);
	\draw (35) to (15);
	\draw (32) to (15);
	\draw (36) to (15);
	\draw [in=45, out=-135, looseness=1.00] (16) to (31);
	\draw (27) to (16);
	\draw (16) to (30);
	\draw (29) to (16);
	\draw [in=0, out=158, looseness=1.00] (16) to (28);
	\draw (17) to (18);
	\draw (18) to (21);
	\draw (22) to (20);
	\draw (20) to (19);
	\draw (27) to (30);
	\draw (29) to (30);
	\draw (32) to (35);
	\draw (35) to (34);
	\draw (14) to (4);
	\draw (5) to (1);
	\draw (0) to (6);
	\draw (2) to (7);
	\draw (8) to (3);
	\draw (13) to (37);
	\draw (12) to (41);
	\draw (9) to (38);
	\draw (10) to (39);
	\draw (11) to (40);
	\end{pgfonlayer}
	\end{tikzpicture}
	=
	\begin{tikzpicture}
	\begin{pgfonlayer}{nodelayer}
	\node [style=nothing] (0) at (-2, -0) {};
	\node [style=nothing] (1) at (-2, 0.7500001) {};
	\node [style=nothing] (2) at (-2, -0.7500001) {};
	\node [style=nothing] (3) at (-2, -1.5) {};
	\node [style=fanout] (4) at (-1.5, 1.5) {};
	\node [style=fanout] (5) at (-1.5, 0.7500001) {};
	\node [style=fanout] (6) at (-1.5, -0) {};
	\node [style=fanout] (7) at (-1.5, -0.7500001) {};
	\node [style=fanout] (8) at (-1.5, -1.5) {};
	\node [style=fanin] (9) at (4.5, -0) {};
	\node [style=fanin] (10) at (4.5, 0.7500001) {};
	\node [style=fanin] (11) at (4.5, 1.5) {};
	\node [style=fanin] (12) at (4.5, -0.7500001) {};
	\node [style=fanin] (13) at (4.5, -1.5) {};
	\node [style=nothing] (14) at (-2, 1.5) {};
	\node [style=map] (15) at (2, 2) {$f^\cnv$};
	\node [style=map] (16) at (0.9999999, 2) {$f$};
	\node [style=oplus] (17) at (-0.2500001, 1.25) {};
	\node [style=nothing] (18) at (-0.2500001, 2.5) {};
	\node [style=dot] (19) at (-0.2500001, 2.25) {};
	\node [style=dot] (20) at (-0.2500001, 1.75) {};
	\node [style=nothing] (21) at (-0.2500001, 0.7500001) {};
	\node [style=oplus] (22) at (3.25, 1.25) {};
	\node [style=nothing] (23) at (3.25, 2.5) {};
	\node [style=dot] (24) at (3.25, 2.25) {};
	\node [style=dot] (25) at (3.25, 1.75) {};
	\node [style=nothing] (26) at (3.25, 0.7500001) {};
	\node [style=nothing] (27) at (5, -1.5) {};
	\node [style=nothing] (28) at (5, 0) {};
	\node [style=nothing] (29) at (5, 0.7500001) {};
	\node [style=nothing] (30) at (5, 1.5) {};
	\node [style=nothing] (31) at (5, -0.7500001) {};
	\node [style=nothing] (32) at (1.5, -0) {};
	\node [style=nothing] (33) at (1.5, -0.5000002) {};
	\node [style=nothing] (34) at (1.5, -1) {};
	\node [style=nothing] (35) at (1.5, -2) {};
	\node [style=nothing] (36) at (1.5, -1.5) {};
	\end{pgfonlayer}
	\begin{pgfonlayer}{edgelayer}
	\draw (16) to (15);
	\draw [in=180, out=15, looseness=0.50] (4) to (18);
	\draw [in=30, out=180, looseness=0.50] (19) to (5);
	\draw [in=15, out=180, looseness=0.50] (20) to (6);
	\draw [in=30, out=-165, looseness=0.50] (17) to (7);
	\draw (21) to (8);
	\draw (13) to (26);
	\draw [in=0, out=165, looseness=0.50] (12) to (22);
	\draw [in=0, out=165, looseness=0.50] (9) to (25);
	\draw [in=0, out=165, looseness=0.50] (10) to (24);
	\draw [in=0, out=165, looseness=0.50] (11) to (23);
	\draw [in=22, out=180, looseness=0.50] (23) to (15);
	\draw (15) to (24);
	\draw (25) to (15);
	\draw (22) to (15);
	\draw (26) to (15);
	\draw [in=45, out=-135, looseness=0.50] (16) to (21);
	\draw (17) to (16);
	\draw (16) to (20);
	\draw (19) to (16);
	\draw [in=0, out=158, looseness=0.50] (16) to (18);
	\draw (17) to (20);
	\draw (19) to (20);
	\draw (22) to (25);
	\draw (25) to (24);
	\draw (14) to (4);
	\draw (5) to (1);
	\draw (0) to (6);
	\draw (2) to (7);
	\draw (8) to (3);
	\draw (13) to (27);
	\draw (12) to (31);
	\draw (9) to (28);
	\draw (10) to (29);
	\draw (11) to (30);
	\draw [in=0, out=-165, looseness=1.00] (11) to (32);
	\draw [in=-157, out=0, looseness=1.00] (33) to (10);
	\draw [in=0, out=-165, looseness=1.00] (9) to (34);
	\draw [in=0, out=-171, looseness=1.00] (13) to (35);
	\draw [in=-15, out=180, looseness=1.00] (35) to (8);
	\draw [in=-15, out=180, looseness=1.00] (34) to (6);
	\draw [in=180, out=-15, looseness=1.00] (5) to (33);
	\draw [in=180, out=-15, looseness=1.00] (4) to (32);
	\draw [in=-165, out=0, looseness=1.00] (36) to (12);
	\draw [in=-15, out=180, looseness=1.00] (36) to (7);
	\end{pgfonlayer}
	\end{tikzpicture}
	\end{align*}
\end{proof}

This allows us to Proposition \ref{propositoin:TOFDiscreteInverse}:

\begin{proof}
	Lemmas \ref{lemma:PartialInverse} and \ref{lemma:Latchable} show that $\TOF$ is an inverse category.  Lemma \ref{lemma:DeltaNatural} and the fact that the semi-Frobenius identities hold in $\CNOT$ imply that $(n,\Delta_n,\Delta_n^\cnv)$ forms a natural separable, commutative,  semi-Frobenius algebra for all $n \in \N$.  Thus, $\TOF$ is a discrete inverse category.
\end{proof}

\section{Proof of Lemma \ref{lemma:BehavesAsExpected}}
\label{appendix:BehavesAsExpected}

\begin{multicols}{2}
	\begin{enumerate}[wide = 0pt, leftmargin = 2em]
		\item[\bfseries  $\tof \circ |000\> = |000\>$:]~
		\[

		\end{align*}
		
	\end{enumerate}
\end{multicols}

\end{document}